\newcolumntype{C}{>{$}c<{$}}
\pgfplotsset{compat=1.10}
\setlist[itemize]{nosep}
\setlist[enumerate]{nosep}
\definecolor{color0}{RGB}{230,159,0}
\definecolor{color1}{RGB}{86,180,233}
\definecolor{color2}{RGB}{0,158,115}
\definecolor{color3}{RGB}{240,228,66}
\definecolor{color4}{RGB}{0,114,178}
\definecolor{color5}{RGB}{213,94,0}
\definecolor{color6}{RGB}{204,121,167}
\newcommand{\PP}{\mathbb{P}}
\newcommand{\calG}{\mathcal{G}}
\newcommand{\calP}{\mathcal{P}}
\newcommand{\calS}{\mathcal{S}}
\newcommand{\calA}{\mathcal{A}}
\newcommand{\RR}{\mathbb{R}}
\newcommand{\NN}{\mathbb{N}}
\newcommand{\ie}{i.e.,\xspace}
\newcommand{\eg}{e.g.,\xspace}
\newcommand{\bfb}{\mathbf{b}}
\newcommand{\bfA}{\mathbf{A}}
\newcommand{\bfd}{\mathbf{d}}
\newcommand{\bfpi}{\bm{\pi}}
\newcommand{\bfchi}{\bm{\chi}}
\newcommand{\bfrho}{\bm{\rho}}
\newcommand{\defas}{\ensuremath{\coloneqq}}
\newtheorem{theorem}{Theorem}
\newtheorem{lemma}{Lemma}
\newtheorem{claim}{Claim}
\tikzstyle{component}=[draw opacity=0.4,draw=black,line width=1.0cm,line cap=round,line join=round]
\newlength{\algofontsize}
\def\equationautorefname~#1\null{(#1)\null}
\patchcmd{\hyper@makecurrent}{%
    \ifx\Hy@param\Hy@chapterstring
        \let\Hy@param\Hy@chapapp
    \fi
}{%
    \iftoggle{inappendix}{%
        \@checkappendixparam{chapter}%
        \@checkappendixparam{section}%
        \@checkappendixparam{subsection}%
        \@checkappendixparam{subsubsection}%
        \@checkappendixparam{paragraph}%
        \@checkappendixparam{subparagraph}%
    }{}%
}{}{\errmessage{failed to patch}}
\newcommand*{\@checkappendixparam}[1]{%
    \def\@checkappendixparamtmp{#1}%
    \ifx\Hy@param\@checkappendixparamtmp
        \let\Hy@param\Hy@appendixstring
    \fi
}
\apptocmd{\appendix}{\toggletrue{inappendix}}{}{\errmessage{failed to patch}}
\begin{document}
	
\title{Impartial Rank Aggregation}

\author{Javier Cembrano
	\thanks{Institut für Mathematik, Technische Universität Berlin, Germany}
	\and Felix Fischer			
	\thanks{School of Mathematical Sciences, Queen Mary University of London, UK}
	\and Max Klimm
	\thanks{Institut für Mathematik, Technische Universität Berlin, Germany}
	}

\date{}
\maketitle

\begin{abstract}
We study functions that produce a ranking of~$n$ individuals from~$n$ such rankings and are impartial in the sense that the position of an individual in the output ranking does not depend on the input ranking submitted by that individual. When~$n\geq 4$, two properties concerning the quality of the output in relation to the input can be achieved in addition to impartiality: individual full rank, which requires that each individual can appear in any position of the output ranking; and monotonicity, which requires that an individual cannot move down in the output ranking if it moves up in an input ranking. When~$n\geq 5$, monotonicity can be dropped to strengthen individual full rank to weak unanimity, requiring that a ranking submitted by every individual must be chosen as the output ranking. Mechanisms achieving these results can be implemented in polynomial time. Both results are best possible in terms of their dependence on~$n$. The second result cannot be strengthened further to a notion of unanimity that requires agreement on pairwise comparisons to be preserved.
\end{abstract}

\section{Introduction}

Decisions in modern democracies are made through a process of voting on sets of alternatives or individuals. The mathematical treatment of voting processes dates back almost to the beginning of modern democracies themselves, notably to work of \citet{borda1784memoire} and \citet{condorcet1785essai}.
Formally, voting can be viewed as the task of aggregating a set of preference rankings, one for each voter, into a single \emph{social} preference ranking.
Here and in the following, a ranking is a permutation $\pi\colon \{0,1,\dots,n-1\} \to \{0,1,\dots,n-1\}$ of a set of alternatives numbered $0,1,\dots,n-1$. The understanding is that alternative~$i$ is preferred over alternative~$j$ if and only if $\pi^{-1}(i)<\pi^{-1}(j)$, where $\pi^{-1}(k)$ denotes the position of alternative~$k$ in ranking~$\pi$. Writing $\calP_n$ for the set of permutations of~$n$ alternatives, a rank aggregation rule for~$n$ alternatives and~$m$ voters is a function $f\colon \calP_n^m \to \calP_n$ mapping a profile $\bfpi = (\pi_0,\dots,\pi_{m-1})$ of rankings to a social ranking $f(\bfpi)$.
A classic result of \citet{arrow1963social} established that rank aggregation rules are severely limited, in that every rule must violate one of three natural axioms: unanimity, non-dictatorship, or independence of irrelevant alternatives. \emph{Unanimity} requires that the aggregation retains unanimous pairwise orderings, \ie that $(f(\bfpi))^{-1}(j)<(f(\bfpi))^{-1}(k)$ whenever $\pi_{i}^{-1}(j)<\pi_i^{-1}(k)$ for all $i=0,\dots,m-1$. \emph{Non-dictatorship} prevents the existence of a voter whose preference always prevails, \ie a voter~$i$ such that $f(\bfpi)=\pi_i$ for all $\bfpi\in\calP_n^m$.
\emph{Independence of irrelevant alternatives} finally requires that the relative order of alternatives~$j$ and~$k$ in the social ranking only depends on their relative order in all individual rankings.
Most voting systems used in democracies around the globe today satisfy unanimity and non-dictatorship, and resolve \citeauthor{arrow1950difficulty}'s impossibility result at the expense of independence of irrelevant alternatives. Indeed, independence of irrelevant alternatives is deemed ``by far the most controversial'' among the three axioms \citep[cf.][]{maskin2020modified}.

\citeauthor{arrow1950difficulty}'s axioms are particularly compelling when voters are balloted on abstract alternatives such as different policy options. When the alternatives are the voters themselves, \ie when the sets of voters and alternatives coincide, additional considerations may come into play. For example, when balloting the members of an organization on a priority list for promotions, it is reasonable to assume that each member would primarily be concerned about their own position in the output ranking. A voter may be incentivized to misrepresent their input ranking in order to improve their own position in the output, thus compromising the entire process.
The axiom of \emph{impartiality} makes such manipulation impossible by requiring that a voter cannot influence their position in the output ranking, \ie that, for an aggregation rule $f\colon\calP_n^n\to\calP_n$, the equation $(f(\bfpi))^{-1}(i) = (f(\bfpi'))^{-1}(i)$ holds whenever $\pi_j=\pi_j'$ for all $j \neq i$.
Similar impartiality axioms were previously proposed for the allocation of a divisible resource and the selection of one or more of the voters.

It is easy to see that impartiality is stronger than non-dictatorship, in the sense that every aggregation rule failing non-dictatorship also fails impartiality. It thus follows directly from \citeauthor{arrow1950difficulty}'s impossibility that there is no aggregation rule satisfying impartiality, unanimity, and independence of irrelevant alternatives. Compared to \citeauthor{arrow1950difficulty}'s setting relatively little is known regarding impartial ranking, but we will see that it holds significant challenge both conceptually and mathematically. Like impartial division and selection it is a foundational problem with a large degree of generality and numerous applications also outside democratic decision making, for example in peer review and peer appraisal, apportionment of credit, and collaborative filtering.

\paragraph{Our Contribution}

Impartiality turns out to be a relatively demanding axiom that on its own is incompatible with unanimity; we prove this as \autoref{thm:imp-unanimity} 
at the end of the paper. Compared with \citeauthor{arrow1950difficulty}'s impossibility, strengthening non-dictatorship to impartiality thus renders independence of irrelevant alternatives redundant. Our impossibility result holds for all $n\geq 2$, whereas that of \citeauthor{arrow1950difficulty} holds for $n\geq 2$ and $m\geq 3$.

Motivated by the impossibility result we relax the axioms. In doing so we draw inspiration from an earlier version of Arrow's result~\citep{arrow1950difficulty}, which used two separate axioms in place of unanimity: \emph{monotonicity}, which requires that an alternative that moves up in input rankings while everything else stays the same cannot move down in the output ranking, and \emph{non-imposition}, which requires every ranking to be obtained as the output for some profile of input rankings.\footnote{\citeauthor{arrow1950difficulty} initially proved that monotonicity, non-imposition, non-dictatorship, and independence of irrelevant alternatives are incompatible. However, monotonicity, non-imposition, and independence of irrelevant alternatives together imply unanimity, so this is a weaker version of the result commonly known as Arrow's impossibility.}

We show as \autoref{thm:monotonicity} that impartiality is compatible with monotonicity and a weaker version of non-imposition we call \emph{individual full rank}, which requires that every voter can appear in every position of the output ranking. 
The result is constructive and holds for all $n \geq 4$, which is tight: by \autoref{thm:imp-ifr-n23} impartiality and individual full rank are incompatible for $n\in\{2,3\}$.

We then show as \autoref{thm:weak-unanimity} that monotonicity can be dropped to strengthen individual full rank to non-imposition. The mechanism we construct in fact satisfies a stronger version of non-imposition we call \emph{weak unanimity}, which requires that a ranking on which all agents agree must be returned as the output, \ie that $f(\bfpi)=\hat{\pi}$ whenever $\pi_i = \hat{\pi}$ for all $i=0,\dots,n-1$. 
The result holds when $n\geq 5$, which is again tight: for $n\in \{2,3\}$ impartiality and weak unanimity are incompatible by \autoref{thm:imp-ifr-n23}, for $n=4$ we can show their incompatibility computationally.

We do not know whether there exists a mechanism satisfying impartiality, monotonicity, and non-imposition. This question is interesting, and likely difficult. Indeed the large number of distinct outcomes required for non-imposition makes monotonicity much more challenging to work with, both for our techniques and in general.

\paragraph{Our Techniques}

Both of our positive results are obtained via reductions to combinatorial problems related to graph or hypergraph coloring.

To obtain a mechanism satisfying impartiality, monotonicity, and individual full rank, and thus prove \autoref{thm:monotonicity}, we study a class of mechanisms in which the input ranking of each agent~$i$ is interpreted as a message in $\{0,1\}$. For each choice of such a message, and every other agent $j$, $i$ blocks $j$ from occupying a subset of positions in the output ranking. Agents are then assigned to a unique unblocked position, or to a default position in case all other positions are blocked.
By further requiring the structure of blocked positions to be symmetric, we can define the mechanism completely by a multigraph with vertex set $\{0,1,\dots,n-1\}$. The mechanism is well-defined and satisfies individual full rank whenever the graphs avoid a certain condition on triples of vertices; monotonicity can be expressed as a condition on the existence or non-existence of certain edges. We prove the existence of multigraphs satisfying both types of conditions via a probabilistic argument and the Lov\'asz local lemma, along with specific constructions for small values of~$n$. The argument can be derandomized and implemented efficiently using known techniques, and doing so yields a polynomial-time mechanism.

To obtain a mechanism satisfying impartiality and weak unanimity, and thus prove \autoref{thm:weak-unanimity}, we consider mechanisms that determine the output ranking based on the input rankings of three decisive agents. The position of each decisive agent is determined by the other two; the position of the remaining agents is decided by the decisive ones.
The constraint that only a single agent can be assigned to each position can be reduced to the existence of a proper coloring of a tripartite graph. Each vertex of the graph corresponds to a ranking profile for two decisive agents, its color to the position of the respective third decisive agent in the output ranking. Weak unanimity thus fixes the colors of vertices corresponding to profiles in which two decisive agents cast the same ranking. The existence of a proper coloring is shown constructively, so the corresponding mechanisms can again be implemented in polynomial time.

\paragraph{Related Work}

Impartiality as a formal property in cooperative decision making was first considered by \citet{de2008impartial}, for mechanisms that allocate a divisible resource among the members of a group based on each member's opinion regarding the others' relative entitlement. Impartial selection, proposed by \citet{alon2011sum} and \citet{holzman2013impartial}, is the subject of a growing literature in economics and computer science~\citep[\eg][]{tamura2014impartial,mackenzie2015symmetry,tamura2016characterizing,bousquet2014near,fischer2015optimal,bjelde2017impartial,caragiannis2022impartial,cembrano2022impartial-ec,cembrano2023single}. 

In rank aggregation, impartiality can be defined in more than one way. \citet{berga2014impartial} and \citet{alcalde2022impartial} propose a strict notion of impartiality by which an agent cannot influence its comparison with any other agent. This notion of impartiality is incompatible with individual full rank, but allows some positive results to be achieved for all agents except one by placing that agent in a fixed position. %
Even the strong notion of impartiality is compatible with monotonicity and neutrality~\citep{berga2014impartial}, which shows that impartiality and monotonicity do not entirely prevent the transfer of information about all agents from input to output.
For the notion of impartiality with regard to rank we consider here, \citet{kahng2018ranking} proposed a randomized method to approximate a given target rule, with high probability and for a particular measure of approximation. An axiomatic characterization of one candidate target rule, that of Borda, was given by \citet{ohseto2007characterization}. In computer science, an extensive literature exists on efficient algorithms for rank aggregation and related problems~\citep[\eg][]{dwork01,alon06,kenyon07,ailon08}.
Methods that produce a ranking from inputs other than rankings have been studied for example by \citet{altman2007incentive,altman2008axiomatic} and \citet{ng2003exclusion}. \citeauthor{altman2008axiomatic} considered methods that use dichotomous input preferences and satisfy incentive constraints, monotonicity, and axioms relating to mutual agreement among agents. \citeauthor{ng2003exclusion} showed that for methods based on numerical evaluations, and counter to intuition, self-evaluations must be allowed in order to achieve desirable properties including monotonicity and unanimity.

\section{Preliminaries}
\label{sec:preliminaries}

Let $\NN=\{0,1,2,\ldots\}$. For $n\in\NN$, let $[n]=\{0,1,\ldots,n-1\}$,\footnote{This notation is somewhat non-standard. We use it, and index $n$-dimensional vectors by numbers in $[n]$, for convenience of notation involving modular arithmetic. For $r\in \NN$ and $a,~b\in \RR$, we also denote $a+_r b=a+b \pmod r$ and $a-_r b=a-b \pmod  r$, respectively.} and let $\calP_n$ be the set of permutations of $[n]$.
We use $[n]$ to represent a set of $n$ agents, and permutation $\pi\colon[n]\to [n]$ to represent a linear order or \textit{ranking} of the set of agents. Thus, for each position $k\in[n],~\pi(k)$ is the agent in the $k$th position of ranking $\pi$, and for each agent $i\in[n],~ \pi^{-1}(i)$ is the position of agent $i$ in ranking $\pi$.
We often use the one-line notation of a permutation and write $\pi = (\pi(0)\ \ \pi(1)\ \ \cdots\ \ \pi(n-1))$ for $\pi\in \calP_n$.
An \textit{$n$-ranking mechanism} is a function $f\colon\calP_n^n \to \calP_n$ that takes a \emph{ranking profile} $\bfpi=(\pi_0,\pi_1,\ldots,\pi_{n-1})\in \calP_n^{n}$ and maps it to a single \emph{social ranking} $f(\bfpi)$.
For $n\in\NN$, $\bfpi\in\calP_n^n$, and $i\in [n]$,
we denote by $\bfpi_{-i} = (\pi_0, \pi_1, \ldots, \pi_{i-1}, \pi_{i+1}, \ldots, \pi_{n-1}) \in \calP_n^{n-1}$ the profile of rankings for all agents except~$i$, and for any $\tilde{\pi}\in\calP_n$ write $f(\tilde{\pi},\bfpi_{-i})=f(\pi_0, \pi_1, \ldots, \pi_{i-1},\tilde{\pi},\pi_{i+1}, \ldots, \pi_{n-1})$. 
We say that an $n$-ranking mechanism $f$ satisfies
\begin{itemize}
    \item \emph{impartiality} if the position of an agent in the output ranking does not depend on its input ranking, \ie if for all $i\in [n],~\bfpi\in\calP_n^n$, and $\tilde{\pi}\in\calP_n$, $(f(\bfpi))^{-1}(i)=(f(\tilde{\pi}, \bfpi_{-i}))^{-1}(i)$;
    \item \emph{individual full rank} if every agent can appear in any position of the output ranking, \ie if for every $j\in [n]$ and $k\in [n]$, there exists $\bfpi\in\calP_n^{n}$ such that $(f(\bfpi))(k)=j$;
    \item \emph{weak unanimity} if it outputs the agreed ranking whenever all agents agree on a ranking, \ie if whenever there exists $\hat{\pi}\in \calP_n$ such that $\pi_i=\hat{\pi}$ for all $i\in [n]$, $f(\bfpi)=\hat{\pi}$;
    \item \emph{unanimity} if it preserves agreement by all agents on the relative order between two agents, \ie if whenever there exist $j_0, j_1\in [n]$ such that $\pi_i^{-1}(j_0) < \pi_i^{-1}(j_1)$ for all $i\in [n]$, then $(f(\bfpi))^{-1}(j_0) < (f(\bfpi))^{-1}(j_1)$; and
    \item \emph{monotonicity} if an agent that moves up in a single input ranking, while everything else stays the same, cannot move down in the output ranking, \ie if for all $\bfpi\in\calP_n^n$, $i, j_0\in [n]$, and $\tilde{\pi}\in\calP_n$ 
    such that for all $j_1\in[n]$ and $j_2\in[n]\setminus\{j_0\}$,
    \vspace*{-1ex}
    \begin{equation*}
        \tilde{\pi}^{-1}(j_1)<\tilde{\pi}^{-1}(j_2) \quad \text{whenever}\quad \pi_i^{-1}(j_1)<\pi_i^{-1}(j_2),  \vspace*{-1ex}
    \end{equation*}
    $(f(\tilde{\pi},\bfpi_{-i}))^{-1}(j_0)\leq (f(\bfpi))^{-1}(j_0)$.\footnote{\citeauthor{arrow1950difficulty} uses a notion of monotonicity that binds also when an agent moves up in multiple input rankings at the same time. It is easy to see that the two notions are equivalent.}
\end{itemize}

Unanimity implies weak unanimity, which in turn implies individual full rank. We state this formally as the following lemma, the straightforward proof can be found in \autoref{app:pf-lem-relation-axioms}.
\begin{lemma}  \label{lem:relation-axioms}
Let $n\in\NN$, $n\geq 2$. Let $f$ be an $n$-ranking mechanism. If $f$ satisfies unanimity, then it satisfies weak unanimity. If $f$ satisfies weak unanimity, then it satisfies individual full rank.
\end{lemma}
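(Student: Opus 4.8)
The plan is to prove the two implications separately, each by a direct argument that instantiates the relevant definition at a suitably chosen profile.

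For the first implication, suppose $f$ satisfies unanimity and that all agents submit the same ranking, so that there is some $\hat{\pi}\in\calP_n$ with $\pi_i=\hat{\pi}$ for all $i\in[n]$. I would fix an arbitrary pair $j_0,j_1\in[n]$ with $\hat{\pi}^{-1}(j_0)<\hat{\pi}^{-1}(j_1)$. Since every $\pi_i$ equals $\hat{\pi}$, we have $\pi_i^{-1}(j_0)<\pi_i^{-1}(j_1)$ for all $i$, so unanimity yields $(f(\bfpi))^{-1}(j_0)<(f(\bfpi))^{-1}(j_1)$. Thus the output ranking orders every such pair exactly as $\hat{\pi}$ does. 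The only remaining point, which I expect to be the crux of this part, is to conclude from agreement on all pairwise comparisons that $f(\bfpi)=\hat{\pi}$; this holds because a permutation of $[n]$ is uniquely determined by the set of pairs it ranks in increasing order, so two permutations that agree on all pairwise comparisons must coincide.

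For the second implication, suppose $f$ satisfies weak unanimity, and fix arbitrary $j,k\in[n]$. I would choose any $\hat{\pi}\in\calP_n$ with $\hat{\pi}(k)=j$, which exists since one may place agent~$j$ in position~$k$ and order the remaining agents arbitrarily, and consider the profile $\bfpi$ with $\pi_i=\hat{\pi}$ for all $i\in[n]$. Weak unanimity gives $f(\bfpi)=\hat{\pi}$, whence $(f(\bfpi))(k)=\hat{\pi}(k)=j$. As $j$ and $k$ were arbitrary, this establishes individual full rank.

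Neither step presents a serious obstacle; the only care required is the observation used in the first implication that the pairwise orderings of a permutation determine it completely, which is what allows agreement on all comparisons to be upgraded to equality of the two rankings.
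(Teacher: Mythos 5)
Your proposal is correct and follows essentially the same route as the paper: both implications instantiate the definitions at a constant profile, with the first implication deriving all pairwise orderings from unanimity and concluding equality of rankings (the paper phrases this via one-line notation and strictly increasing positions, which is the same observation that pairwise comparisons determine a permutation), and the second implication identical to the paper's.
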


\section{Monotonicity and Individual Full Rank}
\label{sec:monotonicity}

We begin by studying monotone ranking mechanisms. Impartiality and monotonicity are trivial to achieve, and hold for example for a mechanism that produces the same output for every input. We show that they do not preclude mechanisms that produce a much richer class of outcomes: for $n\geq 4$, impartiality and monotonicity are compatible with individual full rank. The condition that $n\geq 4$ cannot be removed: as we will see later, impartiality and individual full rank are incompatible for $n\leq 3$.
\begin{theorem}  \label{thm:monotonicity}
Let $n\geq 4$. Then there exists an $n$-ranking mechanism that satisfies impartiality,  monotonicity, and individual full rank.
\end{theorem}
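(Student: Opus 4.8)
The plan is to restrict attention to a structured family of mechanisms for which impartiality is automatic, and then to reduce the remaining two properties to the existence of a suitable combinatorial object. First I would compress each input ranking $\pi_i$ to a single bit $b_i=b(\pi_i)\in\{0,1\}$, for instance by reading off the relative order of two designated agents in $\pi_i$. Given the message profile $(b_0,\dots,b_{n-1})$, each agent~$i$ blocks, for every other agent~$j$, a set of positions $B_{ij}(b_i)\subseteq[n]$; agent~$j$ is then placed in an unblocked position of $[n]\setminus\bigcup_{i\neq j}B_{ij}(b_i)$ by a fixed rule, defaulting to a designated position when all positions are blocked. Since the position of~$i$ is a function of $(b_j)_{j\neq i}$ only, impartiality holds by construction. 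Imposing a symmetry condition on the blocking sets collapses the remaining degrees of freedom into a single multigraph $G$ on vertex set $[n]$, whose edges encode the blocking pattern.

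The next step is to translate the two quality requirements into conditions on $G$. Well-definedness --- that the assignment is a genuine permutation for each of the $2^n$ message profiles --- together with individual full rank should reduce to forbidding a specific configuration on triples of vertices of $G$, so that no three agents can ever collide on, or be jointly excluded from, a common set of positions. Monotonicity would be recast as a local condition on the edges of $G$: the message $b(\pi_i)$ and the blocking sets must be arranged so that, for every agent $j_0$, the effect of raising $j_0$ in $\pi_i$ on the positions blocked for $j_0$ is monotone and never drives $j_0$ to a later output position. (When $j_0=i$ the requirement is vacuous by impartiality.) The payoff of this reduction is that the theorem becomes the purely combinatorial claim that for every $n\geq4$ there is a multigraph on $[n]$ avoiding the forbidden triples while satisfying the edge condition.

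To establish this existence claim I would split on the size of $n$. For large $n$ I would sample the edges, or edge multiplicities, of $G$ independently at random, introduce one bad event per triple violating the well-definedness and full-rank condition and per edge constraint violating monotonicity, and bound both the probability of each bad event and the number of other events it depends on. Since each constraint involves only a bounded number of vertices, the dependency degree stays small while the failure probabilities decay, so the Lov\'asz local lemma yields a valid $G$; the algorithmic version of the lemma then furnishes it in expected polynomial time, and evaluating the resulting mechanism is clearly polynomial. For the finitely many small values of $n$ --- crucially $n=4$, where the local lemma has no slack --- I would instead exhibit explicit multigraphs and verify the conditions by hand.

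The main obstacle I anticipate is the well-definedness constraint, because it must hold simultaneously across all exponentially many message profiles rather than for a single one; this is what forces the forbidden-triple formulation and drives the whole probabilistic argument. Reconciling it with monotonicity is delicate, since the edge condition demanded by monotonicity competes with the sparsity favored by the triple condition, and it is precisely this tension that the local lemma resolves asymptotically but that must be handled directly in the boundary case $n=4$.
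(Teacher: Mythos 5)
Your architecture for $n\geq 5$ matches the paper's almost exactly: one-bit compression of each input ranking with impartiality by construction, symmetric blocking sets encoded as a colored multigraph, reduction of well-definedness and individual full rank to a forbidden-triple condition and of monotonicity to a condition on forced edges, and existence via the Lov\'asz local lemma for large $n$ plus explicit graphs for the remaining small cases (the paper does precisely this: LLL for $n\geq 11$, explicit multigraphs for $5\leq n\leq 10$, derandomized by known techniques). The genuine gap is your plan for $n=4$: ``exhibit explicit multigraphs and verify the conditions by hand'' cannot work, because the forbidden-triple condition is \emph{infeasible} on four vertices. Concretely, the condition requires that for every triple $(j,k,\ell)$ of distinct vertices there be a color $i\notin\{j,k,\ell\}$ whose edge set contains exactly one of $\{j,k\}$, $\{k,\ell\}$; when $n=4$ there is exactly one admissible color $i$ for each triple, so writing $x,y,z\in\{0,1\}$ for the indicators of the three possible edges of color $i$ on $[4]\setminus\{i\}$, the three choices of center force $x+y=1$, $y+z=1$, $x+z=1$, and summing gives $2(x+y+z)=3$, a parity contradiction. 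This is exactly why the paper abandons the symmetric blocking-set framework at $n=4$ and instead gives a dedicated construction: an explicit function $g\colon\{0,1\}^4\to\calP_4$ (a table of sixteen permutations) together with $\bfrho=(1,0,1,0)$, verified directly against the sufficient conditions for impartiality, monotonicity, and individual full rank. You correctly flag $n=4$ as the critical case, but the fix you propose provably does not exist; closing the theorem at $n=4$ requires stepping outside the multigraph formulation.

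A secondary, repairable flaw is in your LLL setup: you propose to sample the edges at random and introduce a bad event ``per edge constraint violating monotonicity.'' Under uniform sampling such an event has probability $1/2$, which no dependency bound can absorb (the symmetric LLL needs roughly $e\,p\,(d+1)\leq 1$). The paper instead imposes the monotonicity-forced edges deterministically (an edge $\{j,\rho_i\}$ of color $i$ is present if and only if $j<\rho_i$) and randomizes only the remaining edges; the triple-event probabilities are then computed taking the forced colors into account, which weakens the bound from $2^{-(n-3)}$ to $2^{-(n-4)}$ but keeps the argument intact for $n\geq 11$. Unlike the $n=4$ issue, this is a slip in formulation rather than a dead end.
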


We explicitly construct mechanisms that satisfy the theorem. For each~$i$ these mechanisms use a single bit of information about the input ranking submitted by agent~$i$, namely whether it ranks a specific other agent $\rho_i$ above or below~$i$ itself.
For $n\in\NN$, let $R_n\subseteq[n]^n$ be the set of vectors with the $i$th component different to $i$, for every $i\in [n]$. Formally,
\[
    R_n = \{\bfrho\in [n]^n\colon \rho_i\not= i \text{ for every } i\in [n] \}.
\]
Let $\chi$ denote the indicator function for logical propositions. For $i\in [n]$, $\pi\in\calP_n$, and $\bfrho\in R_n$, let $\chi_i(\pi,j)=\chi(\pi^{-1}(j)<\pi^{-1}(i))$. For $\bfpi\in\calP_n^n$ and $\bfrho\in R_n$, let $\bfchi(\bfpi,\bfrho) = (\chi_0(\pi_0,\rho_0),\chi_1(\pi_1,\rho_1),\ldots, \chi_{n-1}(\pi_{n-1},\rho_{n-1}))$. 
For $g\colon \{0,1\}^n \to \calP_n$ and $\bfrho\in R_n$, let $f_{g,\bfrho}\colon \calP_n^n \to \calP_n$ be the $n$-ranking mechanism such that $f_{g,\bfrho}(\bfpi)=g(\bfchi(\bfpi,\bfrho))$ for every $\bfpi \in \calP^n_n$. For $i\in[n]$ and $\bfb\in \{0,1\}^n$, we will refer to $b_i\in \{0,1\}$ as a message sent to the mechanism by agent~$i$, and analogously to ranking profiles write $\bfb_{-i} = (b_0,b_1,\ldots,b_{i-1},b_{i+1},\ldots,b_{n-1}) \in \{0,1\}^{n-1}$ for the profile of messages for all agents except~$i$ and denote $g(b',\bfb_{-i})=g(b_0,b_1,\ldots,b_{i-1},b',b_{i+1},\ldots,b_{n-1})$.

Since a mechanism $f_{g,\bfrho}$ depends on the ranking submitted by agent $i$ only insofar as that ranking ranks $\rho_i$ above or below $i$, a simple condition is sufficient for monotonicity: for any fixed profile of the other agents, whenever agent $i$ changes from a ranking where~$\rho_i$ is below~$i$ to a ranking where~$\rho_i$ is above~$i$, $\rho_i$ cannot descend in the output ranking. 
The following lemma establishes this condition formally, along with conditions that guarantee impartiality and individual full rank. We prove the lemma in \autoref{app:pf-lem-suf-conds-g}.
\begin{lemma}
\label{lem:suf-conds-g}
    Let $n\geq 4$. Let $g\colon \{0,1\}^n \to \calP_n$ and $\bfrho\in R_n$ such that
    \begin{enumerate}[label=(\roman*)]
        \item for every $i\in [n]$ and $\bfb\in \{0,1\}^n$, $(g(0,\bfb_{-i}))^{-1}(i) = (g(1,\bfb_{-i}))^{-1}(i)$;  \label{item:g-impartial}
        \item for every $j,k\in[n]$, there exists $\bfb\in \{0,1\}^n$ such that $(g(\bfb))(k) = j$;  and \label{item:g-ifr}
        \item for every $i\in [n]$ and $\bfb\in \{0,1\}^{n}$, $(g(1,\bfb_{-i}))^{-1}(\rho_i) \leq (g(0,\bfb_{-i}))^{-1}(\rho_i)$.  \label{item:g-monotone}
    \end{enumerate}
    Then, $f_{g,\bfrho}$ satisfies impartiality, monotonicity, and individual full rank.
\end{lemma}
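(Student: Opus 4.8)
The plan is to exploit the fact that the output of $f_{g,\bfrho}$ depends on the profile $\bfpi$ only through the bit vector $\bfchi(\bfpi,\bfrho)$, and that the $j$-th coordinate $\chi_j(\pi_j,\rho_j)$ of this vector depends only on the single ranking $\pi_j$. The crucial observation, which I would record first, is therefore that replacing agent $i$'s ranking $\pi_i$ by some $\tilde\pi$ leaves every coordinate $j\neq i$ of the bit vector unchanged; that is, writing $\bfb=\bfchi(\bfpi,\bfrho)$ and $\tilde b_i=\chi_i(\tilde\pi,\rho_i)$, one has $\bfchi(\tilde\pi,\bfpi_{-i})=(\tilde b_i,\bfb_{-i})$. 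All three conclusions then reduce to statements about $g$ on bit vectors that differ only in coordinate $i$, which is exactly the form of hypotheses \ref{item:g-impartial} and \ref{item:g-monotone}.

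\emph{Impartiality} is immediate from this observation together with \ref{item:g-impartial}: since $(g(0,\bfb_{-i}))^{-1}(i)=(g(1,\bfb_{-i}))^{-1}(i)$, the position of $i$ in $g(\tilde b_i,\bfb_{-i})$ equals its position in $g(b_i,\bfb_{-i})$ whatever the values of $b_i,\tilde b_i$, so $(f_{g,\bfrho}(\tilde\pi,\bfpi_{-i}))^{-1}(i)=(f_{g,\bfrho}(\bfpi))^{-1}(i)$. For \emph{individual full rank} I would first note that the map $\bfpi\mapsto\bfchi(\bfpi,\bfrho)$ is onto $\{0,1\}^n$: the coordinates are independent across agents, and because $\rho_i\neq i$ each agent $i$ can rank $\rho_i$ either above or below itself, realizing either bit. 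Given $j,k$, hypothesis \ref{item:g-ifr} supplies a $\bfb$ with $(g(\bfb))(k)=j$, and any preimage $\bfpi$ of $\bfb$ under this map satisfies $(f_{g,\bfrho}(\bfpi))(k)=j$.

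The main work, and the step I expect to be the real obstacle, is \emph{monotonicity}. Here I would fix a profile $\bfpi$, an agent $i$, a target $j_0$, and a ranking $\tilde\pi$ obtained from $\pi_i$ by moving $j_0$ up while preserving the relative order of all other pairs (this is precisely the displayed premise). I would first extract from that premise the fact I need: that any two agents in $[n]\setminus\{j_0\}$ keep their relative order when passing from $\pi_i$ to $\tilde\pi$ (apply the premise to both orderings of such a pair, legitimate because the excluded coordinate is the second one). The proof then splits according to the role of $j_0$. If $j_0\notin\{i,\rho_i\}$, then $i$ and $\rho_i$ keep their relative order, so $\tilde b_i=b_i$, the bit vector is unchanged, and $f_{g,\bfrho}(\tilde\pi,\bfpi_{-i})=f_{g,\bfrho}(\bfpi)$, giving equality. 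If $j_0=i$, then monotonicity for $j_0$ is about the position of $i$ itself, which is already pinned down by the impartiality just proved, again giving equality. The only genuinely new case is $j_0=\rho_i$.

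In the case $j_0=\rho_i$ I would argue that moving $\rho_i$ up can only flip the message from $0$ to $1$: if $b_i=1$, i.e.\ $\rho_i$ already sits above $i$ in $\pi_i$, then applying the premise with the pair $(\rho_i,i)$ (valid since $i\neq\rho_i=j_0$) shows $\rho_i$ still sits above $i$ in $\tilde\pi$, so $\tilde b_i=1$; hence $\tilde b_i\geq b_i$ always. If $\tilde b_i=b_i$ the outputs coincide and there is nothing to prove, and otherwise $b_i=0,\ \tilde b_i=1$, so hypothesis \ref{item:g-monotone} yields directly
\[
(f_{g,\bfrho}(\tilde\pi,\bfpi_{-i}))^{-1}(\rho_i)=(g(1,\bfb_{-i}))^{-1}(\rho_i)\leq(g(0,\bfb_{-i}))^{-1}(\rho_i)=(f_{g,\bfrho}(\bfpi))^{-1}(\rho_i),
\]
which is exactly the required inequality. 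The subtlety to get right throughout is the bookkeeping of which single message bit can move and in which direction as a function of whether $j_0$ equals $i$, equals $\rho_i$, or neither; once the three cases are separated cleanly, each collapses either to an unchanged message, to impartiality, or to hypothesis \ref{item:g-monotone}.
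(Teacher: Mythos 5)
Your proposal is correct and follows essentially the same route as the paper's proof: impartiality directly from hypothesis~\ref{item:g-impartial}, individual full rank from surjectivity of $\bfpi\mapsto\bfchi(\bfpi,\bfrho)$ (using $\rho_i\neq i$) combined with hypothesis~\ref{item:g-ifr}, and monotonicity by tracking how the single message bit of agent~$i$ can change. If anything, your three-way split on $j_0$ is slightly more careful than the paper's two-way split ($\rho_i=j_0$ or not): in the case $\rho_i\neq j_0$ the paper applies the monotonicity premise to $(j_1,j_2)=(\rho_i,i)$, which is only legitimate when $i\neq j_0$, whereas you correctly dispatch the sub-case $j_0=i$ by invoking the already-established impartiality.
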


In what follows, we construct functions $g$ so that for some $\bfrho\in R_n$, $g$ has codomain $\calP_n$ and~$g$ and~$\bfrho$ satisfy the conditions of \autoref{lem:suf-conds-g}. The case where $n=4$ requires a dedicated construction and is covered by the following lemma, which we prove in \autoref{app:pf-lem-existence-g-n4}.
\begin{lemma}  \label{lem:existence-g-n4}
There exist $g\colon\{0,1\}^4\to\calP_4$ and $\bfrho\in R_4$ satisfying Conditions~\ref{item:g-impartial}-\ref{item:g-monotone} of \autoref{lem:suf-conds-g}.
\end{lemma}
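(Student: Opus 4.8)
The plan is to fix a convenient $\bfrho\in R_4$ and then exhibit an explicit $g\colon\{0,1\}^4\to\calP_4$ satisfying Conditions~\ref{item:g-impartial}--\ref{item:g-monotone} of \autoref{lem:suf-conds-g}, checking the three properties by a finite inspection. A natural choice is the cyclic shift $\rho_i = i +_4 1$, so that the single bit $b_i$ extracted from agent~$i$'s ranking governs the fate of agent $i +_4 1$ only. Since $n=4$ is small, the entire object $g$ is just a table of $2^4 = 16$ output permutations, and the content of the lemma is that this finite table can be filled in consistently; the work is in finding a valid filling and then verifying it.

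The first step is to use Condition~\ref{item:g-impartial} to compress the description of $g$. Because the position of agent~$i$ in $g(\bfb)$ may not depend on $b_i$, I would record $g$ through four position functions $p_i\colon\{0,1\}^{[4]\setminus\{i\}}\to[4]$ given by $p_i(\bfb_{-i}) = (g(\bfb))^{-1}(i)$; conversely, any choice of the $p_i$ for which the assignment $i\mapsto p_i(\bfb_{-i})$ is a bijection of $[4]$ for \emph{every} $\bfb$ determines a unique $g$ that automatically satisfies Condition~\ref{item:g-impartial}. In this language the remaining requirements become clean: Condition~\ref{item:g-ifr} (individual full rank) asks that each $p_i$ be surjective onto $[4]$, and Condition~\ref{item:g-monotone} (monotonicity) asks that $p_{\rho_i}$ be weakly decreasing in the bit $b_i$ with respect to the natural order on positions. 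So the whole construction amounts to choosing the $4\cdot 2^3 = 32$ values $p_i(\cdot)$ subject to three constraints: \emph{well-definedness} (the four values form a permutation of $[4]$ for each of the $16$ profiles), \emph{surjectivity} of each $p_i$, and \emph{monotonicity} of $p_{\rho_i}$ in $b_i$. After exhibiting one such table I would verify all three by direct enumeration over the sixteen profiles.

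The hard part is reconciling well-definedness with monotonicity, and this is also where the smallest case genuinely needs a bespoke table rather than a slick uniform rule. Well-definedness at the all-equal profiles forces the four agents' position assignments to be mutually offset, so the maps from messages to positions cannot all be order-isomorphic; monotonicity, on the other hand, demands order-preservation of $p_{\rho_i}$ in $b_i$, and the only order-preserving bijection of the linearly ordered set $[4]$ is the identity. Consequently the tempting fully symmetric Ansatz $p_i(\bfb) = h(\cdot) +_4 i$ for a single auxiliary $h$ fails: the cyclic offset $+_4 i$ that guarantees distinctness wraps a top position around to a bottom one and thereby reverses the required integer inequality. The resolution is to let $p_{\rho_i}$ draw its dependence on $b_i$ monotonically while drawing the distinctness it needs from the \emph{other} agents' bits, which breaks the cyclic symmetry and yields a deliberately asymmetric table. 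I would obtain such a table either by a short structured search over the constrained space or by hand, and the proof then concludes with the routine finite verification of Conditions~\ref{item:g-impartial}--\ref{item:g-monotone}.
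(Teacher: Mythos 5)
Your reformulation of the problem is correct and matches the spirit of the paper's proof: encode $g$ by position functions $p_i(\bfb_{-i})$ so that Condition~\ref{item:g-impartial} of \autoref{lem:suf-conds-g} holds by construction, and observe that Condition~\ref{item:g-ifr} becomes surjectivity of each $p_i$ and Condition~\ref{item:g-monotone} becomes weak monotonicity of $p_{\rho_i}$ in $b_i$. The paper does exactly this kind of finite construction, fixing $\bfrho=(1,0,1,0)$ and writing out all sixteen output permutations in a table, then checking the three conditions by inspection. But your proposal stops precisely where the mathematical content of this existence lemma begins: no table is ever exhibited, and no argument (counting, probabilistic, or otherwise) is given that a valid table exists. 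The sentence ``I would obtain such a table either by a short structured search over the constrained space or by hand'' is a plan for producing a proof, not a proof; for an existence statement whose entire content is one finite object, the object itself (or a nonconstructive existence argument) is indispensable. Your own analysis even shows the constraints are in genuine tension -- the naive symmetric Ansatz fails, and a parity count shows that assigning agent $\rho_i$ to $\{2,3\}$ or $\{0,1\}$ according to $b_i$ forces exactly two $1$-bits per profile -- so it is not self-evident that the search terminates successfully.

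There is a second, related gap: you commit to $\rho_i = i+_4 1$ on the grounds that it is ``natural,'' but nothing in your argument shows that this particular $\bfrho$ admits \emph{any} valid $g$. The set $R_4$ is part of the existential quantifier in the lemma, and the paper deliberately uses the paired structure $\bfrho=(1,0,1,0)$ (agents $0$ and $1$ govern each other, as do $2$ and $3$) rather than the cyclic shift; for $n\geq 5$ the paper's constructions also avoid the cyclic choice in the smallest case ($\bfrho=(3,2,3,1,1)$ for $n=5$). If the cyclic $\bfrho$ turns out to be infeasible -- something your proposal gives no way to rule out -- the approach as stated fails at the first step, and the fix is not cosmetic, since the whole table depends on which agent controls which. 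To close the gap you would need to either produce and verify an explicit table for your chosen $\bfrho$, or switch to a $\bfrho$ for which you can do so.
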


For $n\geq 5$, $g$ is constructed in a uniform way. Each agent $j\in[n]$ has a \textit{default position}, which is equal to $j$. For each pair of distinct agents $i,j$ and each message $b\in\{0,1\}$, we define a \textit{blocking set} $S^b_{ij}\subset[n]\setminus\{i,j\}$ of positions that $i$ blocks for $j$ by sending $b$ to the mechanism. For an agent $j$ and a message profile $\bfb$, a position is unblocked if it is \emph{not} contained in any of the blocking sets $S^b_{ij}$ for $i\in[n]\setminus\{j\}$. We note that for each agent~$j$ the default position~$j$ is always unblocked, and will define the blocking sets in such a way that at most one additional position is unblocked for~$j$ for any message profile. We then assign each agent to an unblocked position, giving preference to positions other than the default ones: for a given message profile, if an agent has an unblocked position other than its default position, we assign it to that position; otherwise, we assign it to its default position. 
To obtain a well-defined mechanism we must guarantee for every message profile that each position is assigned to only one agent. Individual full rank requires that for every pair of an agent and a position, the position is unblocked for the agent for some message profile of the other agents. Monotonicity finally requires that the positions agent~$i$ blocks from agent~$\rho_i$ by sending message~$0$ have smaller indices than those it blocks by sending message~$1$.

An example of blocking sets satisfying these conditions for $n=6$ and $\bfrho=(1,2,3,4,5,0)$ is shown in Part~(a) of \autoref{fig:blocking-sets}.
    \begin{figure}[t]

        \tikzset{toprule/.style={%
            execute at end cell={%
                \draw [line cap=rect,#1] (\tikzmatrixname-\the\pgfmatrixcurrentrow-\the\pgfmatrixcurrentcolumn.north west) -- (\tikzmatrixname-\the\pgfmatrixcurrentrow-\the\pgfmatrixcurrentcolumn.north east);%
                }
            },
            bottomrule/.style={%
            execute at end cell={%
                \draw [line cap=rect,#1] (\tikzmatrixname-\the\pgfmatrixcurrentrow-\the\pgfmatrixcurrentcolumn.south west) -- (\tikzmatrixname-\the\pgfmatrixcurrentrow-\the\pgfmatrixcurrentcolumn.south east);%
                }
            }
        }   
        \hspace{1em}
\begin{subfigure}[c]{0.5\textwidth}
\footnotesize
\renewcommand{\arraystretch}{0.85}
\begin{tabular}{CC@{\hspace{1em}}*{6}{C}}
\toprule
  && \multicolumn{6}{C}{j} \\[-.5ex]
\cmidrule{3-8}
   i & b & 0 & 1 & 2 & 3 & 4 & 5 \\[-.5ex]
\midrule
\multirow{2}{*}{0}
        & 0 & \phantom{0123} & & 35 & 24 & 3 & 2 \\   
        & 1 & & \makebox[0pt][c]{$2345$} & 14 & 15 & 125 & 134 \\[-.5ex]
\midrule
\multirow{2}{*}{1}
        & 0 & 24 & \phantom{0123} & 0 & 5 & 0 & 3 \\   
        & 1 & 35 & & 345 & 024 & 235 & 024 \\[-.5ex]
    \midrule
    \multirow{2}{*}{2}
        & 0 & 3 & 35 & \phantom{0123} & 01 & & 1 \\   
        & 1 & 145 & 04 & & 45 & 0135 & 034 \\[-.5ex]
        \midrule
        \multirow{2}{*}{3}
        & 0 & 45 & 4 & 4 & \phantom{0123} & 012 & 0 \\   
        & 1 & 12 & 025 & 015 & & 5 & 124 \\ [-.5ex]
        \midrule
        \multirow{2}{*}{4}
        & 0 & 135 & 05 & 5 & 05 & \phantom{0123} & 0123 \\   
        & 1 & 2 & 23 & 013 & 12 & & \\[-.5ex] 
        \midrule
        \multirow{2}{*}{5}
        & 0 & & 24 & 13 & 2 & 1 & \phantom{0123} \\   
        & 1 & 1234 & 03 & 04 & 014 & 023 & \\[-.5ex]
\bottomrule \\[-2ex]
\end{tabular}
\caption{}
\end{subfigure}
\hfill 
\begin{subfigure}[c]{0.4\textwidth}
\centering
\begin{tikzpicture}

        \foreach \i/\c in {0/color0,1/color1,2/color2,3/color3,4/color4,5/color5}{\draw (\i*360/6:2.5cm) node[circle,fill=\c](\i){\i};}
\draw[-,draw,ultra thick,color=color4,bend left=0] (0) edge (1);
\draw[-,draw,ultra thick,color=color1,bend left=0] (0) edge (2);
\draw[-,draw,ultra thick,color=color2,bend left=-5] (0) edge (3);
\draw[-,draw,ultra thick,color=color4,bend left=5] (0) edge (3);
\draw[-,draw,ultra thick,color=color1,bend left=-5] (0) edge (4);
\draw[-,draw,ultra thick,color=color3,bend left=5] (0) edge (4);
\draw[-,draw,ultra thick,color=color3,bend left=-5] (0) edge (5);
\draw[-,draw,ultra thick,color=color4,bend left=5] (0) edge (5);
\draw[-,draw,ultra thick,color=color5,bend left=0] (1) edge (2);
\draw[-,draw,ultra thick,color=color2,bend left=0] (1) edge (3);
\draw[-,draw,ultra thick,color=color3,bend left=-5] (1) edge (4);
\draw[-,draw,ultra thick,color=color5,bend left=5] (1) edge (4);
\draw[-,draw,ultra thick,color=color2,bend left=-5] (1) edge (5);
\draw[-,draw,ultra thick,color=color4,bend left=5] (1) edge (5);
\draw[-,draw,ultra thick,color=color0,bend left=-5] (2) edge (3);
\draw[-,draw,ultra thick,color=color5,bend left=5] (2) edge (3);
\draw[-,draw,ultra thick,color=color3,bend left=0] (2) edge (4);
\draw[-,draw,ultra thick,color=color0,bend left=-5] (2) edge (5);
\draw[-,draw,ultra thick,color=color4,bend left=5] (2) edge (5);
\draw[-,draw,ultra thick,color=color0,bend left=0] (3) edge (4);
\draw[-,draw,ultra thick,color=color1,bend left=-5] (3) edge (5);
\draw[-,draw,ultra thick,color=color4,bend left=5] (3) edge (5);
\end{tikzpicture}
\caption{}
\end{subfigure}
\caption{(a)~Blocking sets satisfying \autoref{lem:existence-blocking-sets} for $n=6$ and $\bfrho=(1,2,3,4,5,0)$, shown by concatenating their elements for compactness, and (b)~the corresponding multigraph. Vertex~$i$ in the graph is labeled~$i$ and drawn in color~$i$. For $i,j,k\in[n]$, an edge $\{j,k\}$ of color~$i$ exists if and only if $k\in S^0_{ij}$ (and thus also $j\in S^0_{ik}$). %
}
\label{fig:blocking-sets}
\end{figure}
For instance, for profile $\bfb=(0,0,1,1,1,0)$, $S^0_{1,0}=\{2,4\}$, $S^1_{2,0}=\{1,4,5\}$, $S^1_{3,0}=\{1,2\}$, $S^1_{4,0}=\{2\}$, and $S^0_{5,0}=\emptyset$, so the set of positions blocked for agent~$0$ is $S^0_{1,0}\cup S^1_{2,0}\cup S^1_{3,0}\cup S^1_{4,0}\cup S^0_{5,0}=\{1,2,4,5\}$. This leaves position $3$ unblocked, in addition to agent $0$'s default position~$0$, and agent~$0$ is assigned position~$3$. Analogously, for the other agents,
\begin{gather*}
    \bigcup_{i\in [6]\setminus \{1\}} S^{b_i}_{i1} =\{0,2,3,4,5\},\quad \bigcup_{i\in [6]\setminus \{2\}} S^{b_i}_{i2} =\{0,1,3,5\},\quad \bigcup_{i\in [6]\setminus \{3\}} S^{b_i}_{i3} =\{1,2,4,5\},\\
    \bigcup_{i\in [6]\setminus \{4\}} S^{b_i}_{i4} =\{0,1,3,5\},\quad \bigcup_{i\in [6]\setminus \{5\}} S^{b_i}_{i5} =\{0,1,2,3,4\},
\end{gather*}
so $g(\bfb)=(3\ \ 1\ \ 4\ \ 0\ \ 2\ \ 5)$. If $b'=1$, then $g(b',\bfb_{-1})=(0\ \ 1\ \ 2\ \ 3\ \ 4\ \ 5)$, which does not move agent~$2$ down and is thus compatible with monotonicity.

In the example, a unique choice was available for the position of each agent, and no position was assigned more than once. In addition, we observed a particular change of message compatible with monotonicity. The following lemma establishes that blocking sets as in the example exist more generally, and provides sufficient conditions for the resulting mechanisms to be well-defined and satisfy impartiality, individual full rank, and monotonicity. 
\begin{lemma}  \label{lem:existence-blocking-sets}
    Let $n\geq 5$. Then there exist $\bfrho\in R_n$ and sets $S^b_{ij}\subset [n]\setminus \{i,j\}$ for all $i,j\in [n]$ with $i\neq j$ and $b\in\{0,1\}$ such that the following hold:
    \begin{enumerate}[label=(\roman*)]
        \item for every $i,j\in [n]$ with $i\not=j$, $S^0_{ij} \cap S^1_{ij} = \emptyset$ and $S^0_{ij} \cup S^1_{ij} = [n]\setminus \{i,j\}$;  \label{item:ifr}
        \item for every $j\in [n]$ and $\bfb\in \{0,1\}^n$, $\big| \bigcup_{i\in [n]\setminus \{j\}} S^{b_i}_{ij} \big| \in \{n-2,n-1\}$;  \label{item:one-pos-per-agent}
        \item for every $j\in [n]$ and $\bfb\in \{0,1\}^n$, if $\bigcup_{i\in [n]\setminus \{j\}} S^{b_i}_{ij} = [n]\setminus \{j\}$, then for every $j'\in [n]\setminus \{j\}$ it holds that $j\in \bigcup_{i\in [n]\setminus \{j'\}} S^{b_{i}}_{ij'}$; \label{item:one-agent-per-pos-def}
        \item for every $j, k\in [n]$ with $j\not= k$ and $\bfb\in \{0,1\}^n$, if $\bigcup_{i\in [n]\setminus \{j\}} S^{b_i}_{ij} = [n]\setminus \{j,k\}$, then for every $j'\in [n]\setminus \{j,k\}$ it holds that $k\in \bigcup_{i\in [n]\setminus \{j'\}} S^{b_{i}}_{ij'}$; and  \label{item:one-agent-per-pos-non-def}
        \item for every $i\in [n]$, $S^0_{i\rho_i} = \{ k\in [n]\setminus \{i\}\colon k<\rho_i\}$ and $S^1_{i\rho_i} = \{ k\in [n]\setminus \{i\}\colon k>\rho_i\}$.\label{item:mon}
    \end{enumerate}
\end{lemma}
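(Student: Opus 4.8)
The plan is to make the blocking sets \emph{symmetric}, encode them by a single bit per (pair, color), fix $\bfrho$ to the cyclic shift, and then collapse all five conditions to one separation-type condition that I establish by the Lov\'asz local lemma for large $n$ and by explicit constructions for small $n$. Concretely, I would set $\rho_i = i +_n 1$ (which lies in $R_n$) and require $k\in S^b_{ij} \iff j\in S^b_{ik}$. A symmetric family is then described by a choice, for every unordered pair $\{j,k\}$ and every color $i\in[n]\setminus\{j,k\}$, of a single bit $c^{jk}_i\in\{0,1\}$ under the convention $k\in S^0_{ij}\iff c^{jk}_i=0$; this is exactly the colored multigraph of \autoref{fig:blocking-sets}(b). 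Condition~\ref{item:ifr} then holds automatically, since each $k\in[n]\setminus\{i,j\}$ falls into exactly one of $S^0_{ij},S^1_{ij}$. Condition~\ref{item:mon} amounts to \emph{pinning}, for each $i$, the bits on the color-$i$ pairs $\{\rho_i,k\}$ to the value $\chi(k<\rho_i)$; since distinct colors pin distinct pairs and no (pair, color) is pinned twice, this is consistent with symmetry.

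The crux observation is that, for a fixed agent $j$ and profile $\bfb$, a non-default position $k$ is unblocked for $j$ precisely when $b_i = 1 - c^{jk}_i$ for all $i\in[n]\setminus\{j,k\}$, and by symmetry this relation is unchanged under swapping $j$ and $k$. Two consequences follow. First, Conditions~\ref{item:one-agent-per-pos-def} and~\ref{item:one-agent-per-pos-non-def} become essentially free: if $j$ is confined to its default, then no $j'$ can leave position $j$ unblocked (else, by symmetry, $j'$ would be a non-default unblocked position for $j$); and if $j$ is assigned a \emph{unique} non-default position $k$, then no third agent $j'$ can leave $k$ unblocked (else both $j$ and $j'$ would be unblocked positions for $k$, contradicting Condition~\ref{item:one-pos-per-agent} applied to $k$). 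Second, Condition~\ref{item:one-pos-per-agent} — that at most one non-default position is ever unblocked for $j$ — is equivalent to the \emph{separation condition}: for every $j$ and every $k\neq k'$ in $[n]\setminus\{j\}$ there exists $i\in[n]\setminus\{j,k,k'\}$ with $c^{jk}_i\neq c^{jk'}_i$. Thus the whole lemma reduces to realizing this separation condition by bits that respect the pinning of~\ref{item:mon}.

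To produce such bits I would draw all unpinned bits independently and uniformly, and for each triple $(j,\{k,k'\})$ declare a bad event $B_{j,\{k,k'\}}$ that $c^{jk}_i=c^{jk'}_i$ for all $i\in[n]\setminus\{j,k,k'\}$. Over the $n-3$ relevant colors the compared bits belong to distinct pairs and are independent, and at most one color (the one $i$ with $\rho_i=j$) has both bits pinned, so each bad event has probability at most $2^{-(n-4)}$ and, for $n\geq 5$, is never forced to occur. Each $B_{j,\{k,k'\}}$ depends only on the two pairs $\{j,k\}$ and $\{j,k'\}$, hence is independent of all but $O(n)$ of the others, so the symmetric local lemma applies once $e\cdot 2^{-(n-4)}\cdot O(n)\leq 1$, i.e. for all sufficiently large $n$. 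The finitely many remaining cases $5\leq n\leq n_0$ I would settle by direct construction, generalizing the cyclic pattern of \autoref{fig:blocking-sets}(a) and checking the separation condition explicitly or by computer.

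I expect the main obstacle to be the probabilistic step: certifying that the monotonicity pinning never conspires to force any $B_{j,\{k,k'\}}$ (which is why the count of doubly-pinned colors must be controlled), and fitting the probability and dependency bounds into the local lemma uniformly for large $n$. The secondary difficulty is bridging to the small regime, where the local-lemma inequality fails and one must instead exhibit and verify concrete bit patterns for each $n$ between $5$ and the threshold $n_0$.
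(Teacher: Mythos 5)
Your proposal follows essentially the same route as the paper's proof: the paper likewise encodes symmetric blocking sets as an edge-colored multigraph (one bit per pair and color), pins the color-$i$ edges at $\rho_i$ to encode Condition~\ref{item:mon}, reduces Conditions~\ref{item:one-pos-per-agent}--\ref{item:one-agent-per-pos-non-def} via symmetry to the same separation condition on triples, and establishes that condition by the Lov\'asz local lemma for large $n$ (the paper computes the bad-event probability $2^{-(n-4)}$ and dependency degree $4n-10$, giving threshold $n\geq 11$) together with explicit constructions for $5\leq n\leq 10$. The only cosmetic difference is that the paper's explicit $n=5$ construction uses $\bfrho=(3,2,3,1,1)$ rather than the cyclic shift you fix at the outset, a detail your explicit-construction step for small $n$ would have to accommodate.
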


We defer the proof of the lemma to \autoref{app:pf-lem-existence-blocking-sets} but briefly explain the main ideas here. Condition~\ref{item:ifr} states that an agent never blocks its own default position for any of the other agents, and for each of the other agents blocks each remaining position by exactly one of its two messages. We will in fact achieve this in a symmetric way, by guaranteeing that agent~$i$ blocks position~$k$ from agent~$j$ if and only if it blocks position~$j$ from agent~$k$. This immediately implies Condition~\ref{item:one-agent-per-pos-def}, since whenever every position $k\in [n]\setminus \{j\}$ is blocked for $j$, position $j$ is then blocked for every agent other than~$j$. It also makes agents and positions interchangeable, so that Conditions~\ref{item:one-pos-per-agent} and~\ref{item:one-agent-per-pos-non-def} become equivalent: 
no position is unblocked for more than one agent other than the agent whose default position it is if and only if no agent has more than one unblocked position other than its default position. Condition~\ref{item:mon} ensures that each agent $i$ blocks, for agent $\rho_i$, all positions with a higher index than the default position of that agent by sending message $0$, and all positions with a lower index than the default position by sending message $1$; this turns out to be sufficient for Condition~\ref{item:g-monotone} of \autoref{lem:suf-conds-g}.

To prove the lemma we establish a correspondence between the blocking sets of each agent~$i$ when it sends message~$0$ and edges of color~$i$ in a multigraph in which vertices correspond to agents. Specifically, an edge of color $i$ between vertices $j$ and $k$ corresponds to agent $i$ blocking position $k$ from agent $j$ (and position $j$ from agent $k$) when sending message $0$.
We then show that the existence of blocking sets satisfying Conditions~\ref{item:one-pos-per-agent} and~\ref{item:one-agent-per-pos-non-def} is equivalent to the existence of a multigraph avoiding certain subgraphs, namely triples of vertices $(i,j,k)$ such that for every color different from $i$, $j$, and $k$, we have either both edges of the path $(i,j,k)$ or none of them. 
Condition~\ref{item:mon} finally becomes a condition on the existence or non-existence of certain edges: for $j\neq i$, there is an edge $\{j,\rho_i\}$ of color $i$ if and only if $j<\rho_i$.
Existence of the graphs in question is shown using a probabilistic argument and the Lov\'asz local lemma, along with explicit constructions for small values of $n$. By a result of \citet{chandrasekaran2013deterministic}, the graphs can be found efficiently.
Part~(b) of \autoref{fig:blocking-sets} shows the multigraph corresponding to the blocking sets in Part~(a) of that figure.

\autoref{lem:existence-blocking-sets} in hand, we are ready to prove \autoref{thm:monotonicity}. This proof is deferred to \autoref{app:pf-thm-monotonicity} and follows from showing that the mechanism $g$ described above, given blocking sets satisfying the conditions of \autoref{lem:existence-blocking-sets}, is well-defined and guaranteed to fulfill the conditions of \autoref{lem:suf-conds-g}. On an intuitive level, Condition~\ref{item:one-pos-per-agent} of \autoref{lem:existence-blocking-sets} ensures that at most one position is unblocked for each agent in addition to its default position; Conditions~\ref{item:one-agent-per-pos-def} and~\ref{item:one-agent-per-pos-non-def} of this lemma ensure to assign all agents to different positions. These facts guarantee that the mechanism is well-defined. Condition~\ref{item:g-impartial} of \autoref{lem:suf-conds-g}, guaranteeing impartiality of the obtained mechanism, follows directly from its definition, since the position of each agent is fully determined by the positions blocked by other agents and its own ranking plays no role in this matter, while Condition~\ref{item:g-monotone} of this lemma is proven by a careful analysis of the unblocked positions of $\rho_i$ when each agent $i$ switches its own message, using Condition~\ref{item:mon} of \autoref{lem:existence-blocking-sets}. Finally, to prove Condition~\ref{item:g-ifr} of \autoref{lem:suf-conds-g} we use the structure of the blocking sets given by Condition~\ref{item:ifr} of \autoref{lem:existence-blocking-sets} and construct, for each agent $j$ and position $k$, a profile for which agent $j$ ends up in position $k$ of the output ranking. When $k\not=j$, this is achieved by simply taking, for each agent other than~$j$ and~$k$, the message for which that agent does not block position~$k$. When $k=j$, we construct the profile inductively by picking, for each agent other than~$j$ and~$k$, a message that strictly increases the size of the set of positions that are blocked for agent~$j$; in the end all positions except its default position are blocked for agent~$j$, and it is assigned the default position.

We do not know whether there exists a mechanism satisfying impartiality, monotonicity, and non-imposition, but the construction of such a mechanism would require new techniques. Non-imposition in particular means that there are $n!$ distinct output rankings. This in turn requires $\Omega(n)$ messages per agent, making monotonicity much more difficult to control than for the mechanisms with two messages per agent we have constructed above.

\section{Weak Unanimity}
\label{sec:weak-unanimity}

We have shown the existence of an impartial and monotone mechanism that allows every agent to appear in every position of the output ranking. By dropping monotonicity it is possible to increase expressiveness further, and guarantee that an arbitrary ranking is produced as the output if all agents agree on that ranking.
\begin{theorem}  \label{thm:weak-unanimity}
  Let $n\geq 5$. Then there exists an $n$-ranking mechanism that satisfies impartiality and weak unanimity.
\end{theorem}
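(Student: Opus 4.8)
The plan is to designate three \emph{decisive} agents, say $0$, $1$, and $2$, and to build the output so that (i)~the position of each decisive agent depends only on the rankings of the other two decisive agents, and (ii)~the positions of the remaining $n-3$ agents depend only on $\pi_0,\pi_1,\pi_2$. Property~(ii) makes impartiality automatic for every non-decisive agent $j\ge 3$, since such an agent never sees its own ranking, and property~(i) gives impartiality for the three decisive agents. Concretely, once the three positions $p_0,p_1,p_2\in[n]$ of the decisive agents have been fixed as distinct values, I would place the non-decisive agents into the $n-3$ leftover positions in the relative order given by $\pi_0$, filling the leftover positions from smallest to largest. A short induction shows that if $\pi_0=\pi_1=\pi_2=\hat\pi$ and the three decisive agents land at $\hat\pi^{-1}(0),\hat\pi^{-1}(1),\hat\pi^{-1}(2)$, then every non-decisive agent $j$ lands at $\hat\pi^{-1}(j)$, so the output equals $\hat\pi$ and weak unanimity holds. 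This reduces everything to placing the three decisive agents.

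For those agents I would introduce one function per agent, $p_0(\pi_1,\pi_2)$, $p_1(\pi_0,\pi_2)$, $p_2(\pi_0,\pi_1)\in[n]$, each omitting that agent's own ranking. The requirements are that the three values are pairwise distinct for every profile, and that the diagonal conditions $p_0(\sigma,\sigma)=\sigma^{-1}(0)$, $p_1(\sigma,\sigma)=\sigma^{-1}(1)$, $p_2(\sigma,\sigma)=\sigma^{-1}(2)$ hold for all $\sigma\in\calP_n$. These diagonal conditions are exactly what weak unanimity forces: when all agents submit $\sigma$, agent~$0$ must land at $\sigma^{-1}(0)$, and since its position ignores $\pi_0$ this pins down $p_0(\sigma,\sigma)$, and similarly for $p_1,p_2$. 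This is precisely a proper-coloring problem on the tripartite graph whose three parts are the pair-profiles $(\pi_1,\pi_2)$, $(\pi_0,\pi_2)$, $(\pi_0,\pi_1)$, whose colors are positions in $[n]$, in which two vertices arising from a common full profile are adjacent, and in which the diagonal vertices receive prescribed colors.

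The structural observation I would exploit is that adjacency between two parts only links vertices agreeing on their shared ranking: $p_0\ne p_1$ is needed only across vertices sharing $\pi_2$, while $p_0\ne p_2$ shares $\pi_1$ and $p_1\ne p_2$ shares $\pi_0$. Hence it suffices, for each ranking $\sigma$, to split $[n]$ into two disjoint regions per shared variable so that the two agents touching that variable draw positions from opposite sides: disjoint $L_1(\sigma),R_1(\sigma)$ with $p_0\in L_1(\pi_2)$, $p_1\in R_1(\pi_2)$; disjoint $L_2(\sigma),R_2(\sigma)$ with $p_0\in L_2(\pi_1)$, $p_2\in R_2(\pi_1)$; and disjoint $L_3(\sigma),R_3(\sigma)$ with $p_1\in L_3(\pi_0)$, $p_2\in R_3(\pi_0)$. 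Any assignment respecting these memberships is automatically collision-free. What remains is to define the six region families as functions of a single ranking so that each agent always has an admissible position, i.e.\ so that $L_1(\pi_2)\cap L_2(\pi_1)$, $R_1(\pi_2)\cap L_3(\pi_0)$, and $R_2(\pi_1)\cap R_3(\pi_0)$ are nonempty for all profiles and contain the required diagonal values; one then fixes $p_0,p_1,p_2$ by a canonical selection from these intersections that returns the boundary value on the diagonal.

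I expect the main obstacle to be exactly this last step: arranging the six regions so that all three per-agent intersections are simultaneously nonempty for every profile while still meeting the diagonal values. The difficulty is that the two regions an agent must intersect depend on two \emph{independent} rankings, so neither a single fixed reserved position nor a crude ``both sides are large'' size bound works on its own, since within one constraint the two sides are disjoint and hence cannot both exceed $n/2$. The construction must therefore tie the regions to the positions of agents $0,1,2$ in each ranking and use the slack afforded by $n\ge5$, namely at least two positions beyond the three decisive ones, to keep every intersection nonempty; this is precisely where $n\ge5$ enters and where the small cases fail. Because the regions, and hence the coloring, are given by explicit formulas rather than a mere existence argument, the resulting mechanism runs in polynomial time; I would finish by verifying impartiality, validity of the output permutation, and weak unanimity directly from the construction.
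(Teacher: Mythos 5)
Your mechanism architecture coincides with the paper's own proof: three decisive agents whose positions are functions of the other two decisive rankings, non-decisive agents slotted into the leftover positions (your rule of ordering them by $\pi_0$ is a clean variant of the paper's and handles the unanimity case uniformly), and the key structural observation that the collision constraints $p_0\not=p_1$, $p_1\not=p_2$, $p_2\not=p_0$ each bind only across profiles that share one ranking. Your ``two disjoint regions per shared ranking'' formulation is also exactly the paper's: there, for the pair of agents $i$ and $i+_3 1$ sharing a ranking $\sigma$, agent $i$ must take its position inside a set depending only on $\sigma$, and agent $i+_3 1$ inside the complement of that set.

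The genuine gap is that you stop at precisely the point where the mathematical content lies: the six region families are never constructed, and their existence is asserted rather than proved (``the construction must therefore tie the regions to the positions of agents $0,1,2$ \dots and use the slack afforded by $n\geq 5$''). This existence question cannot be waved through, because for $n=4$ your entire framework is equally available and yet the theorem is false (the paper reports a computational proof of incompatibility for $n=4$); so no soft argument about slack can suffice, and you have correctly noted yourself that reserved positions and size bounds both fail. The paper closes this step with \autoref{lem:cutting-sets}: for each $i\in\{0,1,2\}$ it builds a finite family $\{S^i_\ell\}_{\ell\in[L_i]}$ of subsets of $[n]$ satisfying (a) \emph{separation}, i.e., for all $u\not=v$ some set of family $i$ contains $u$ but not $v$, and (b) \emph{non-containment}, i.e., no set of family $i+_3 1$ is contained in any set of family $i$. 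Given the shared ranking $\sigma$, the region for agent $i$ is a set of family $i$ chosen via (a) to contain $\sigma^{-1}(i)$ and exclude $\sigma^{-1}(i+_3 1)$, and the region for agent $i+_3 1$ is its complement; property (b) is then exactly the nonemptiness of every intersection you need, and the choice via (a) places the prescribed diagonal values inside the right regions. Concretely, for $n\geq 6$ one can take cyclic triples $S^0_\ell=\{\ell,\ell+_n 1,\ell+_n 2\}$, $S^1_\ell=\{\ell,\ell+_n 2,\ell+_n 3\}$, $S^2_\ell=\{\ell,\ell+_n 3,\ell+_n 4\}$ for $\ell\in[n]$, where non-containment holds because distinct sets of equal size cannot nest; the case $n=5$ needs a separate ad hoc family of four sets per agent. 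Without this construction, or a substitute for it, your argument establishes only the reduction, not the theorem, and the claimed polynomial running time is likewise contingent on formulas you have not supplied.
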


To prove this result we construct a mechanism that only takes into account the input rankings of three \textit{decisive agents}, which for simplicity we associate with indices $0$, $1$, and $2$. The position of each decisive agent in the output ranking is determined by the other two decisive agents, and the position of all other agents is determined by the decisive agents. Impartiality thus follows by construction. Weak unanimity leads to the following requirements: whenever two decisive agents agree on a ranking, the third decisive agent is ranked in the position in which it appears in that ranking; and whenever all three decisive agents agree on a ranking, all other agents are ranked in the positions in which they appear in that ranking. The difficulty, of course, is to meet these requirements while ensuring that each position is assigned to only one agent.

The requirements can be expressed more conveniently as a coloring problem. The goal will be to construct, for some $m\in\NN$, three $m\times m$ matrices with entries in $[n]$. Values on the diagonals are given, and the values in column~$p\in[m]$ of each matrix may not intersect with those in row~$p$ of the next matrix.
Formally, for $n,m\in\NN$ and vectors $\bfd^0,\bfd^1,\bfd^2 \in[n]^m$, let $\calA(n,m,\bfd)$ be the set of tuples $(\bfA^0, \bfA^1, \bfA^2)$, where $\bfA^0, \bfA^1, \bfA^2 \in [n]^{m\times m}$ are such that
\begin{align}
        A^i_{pp} & = d^i_p \text{ for every } i\in \{0,1,2\}  \text{ and } p\in [m], \label{eq:vertex-coloring-1}\\
        A^i_{pq} & \not= A^{i+_3 1}_{qr} \text{ for every }i\in \{0,1,2\} \text{ and } p,q,r\in [m].\label{eq:vertex-coloring-2}
\end{align}
An example for $n=m=5$, and diagonals $d^i_p=p+_5 i$ for $i\in \{0,1,2\}$ and $p\in [5]$,
is shown in Part~(a) of \autoref{fig:matrix-coloring-n5}.
\begin{figure}[t]
\newcommand{\Size}{.7cm}%

\def\NumOfColumns{5}%
\def\Sequence{1/A, 2/B, 3/C, 4/D, 5/E}%

\tikzset{Square/.style={
    inner sep=0pt,
    text width=\Size, 
    minimum size=\Size,
    draw=black,
    align=center
    }
}

\newcommand{\NodepAA}{0}%
\newcommand{\NodepAB}{3}%
\newcommand{\NodepAC}{4}%
\newcommand{\NodepAD}{3}%
\newcommand{\NodepAE}{4}%
\newcommand{\NodepBA}{4}%
\newcommand{\NodepBB}{1}%
\newcommand{\NodepBC}{4}%
\newcommand{\NodepBD}{1}%
\newcommand{\NodepBE}{4}%
\newcommand{\NodepCA}{3}%
\newcommand{\NodepCB}{3}%
\newcommand{\NodepCC}{2}%
\newcommand{\NodepCD}{3}%
\newcommand{\NodepCE}{2}%
\newcommand{\NodepDA}{3}%
\newcommand{\NodepDB}{3}%
\newcommand{\NodepDC}{2}%
\newcommand{\NodepDD}{3}%
\newcommand{\NodepDE}{2}%
\newcommand{\NodepEA}{0}%
\newcommand{\NodepEB}{3}%
\newcommand{\NodepEC}{4}%
\newcommand{\NodepED}{3}%
\newcommand{\NodepEE}{4}%
\begin{subfigure}[c]{\textwidth}
\hfill
\begin{tikzpicture}[draw=black, ultra thick, x=\Size,y=\Size]
\useasboundingbox (-0.5,-7.14) rectangle (6.64,0);

    \foreach \col/\colLetter in \Sequence {%
        \foreach \row/\rowLetter in \Sequence{%
            \pgfmathtruncatemacro{\value}{\col+\NumOfColumns*(\row-1)}
            \def\NodeText{\expandafter\csname Nodep\rowLetter\colLetter\endcsname}
            \node [Square,fill=color\NodeText] at ($(\col,-\row)-(0.5,0.5)$) {$\NodeText$};
        }
    }
    \foreach \col/\colSet in {1/$S^0_1$,2/$S^0_2$,3/$S^0_3$,4/$S^0_2$,5/$S^0_3$} {
        \node[opacity=0.4] at ($(\col,-\NumOfColumns-1)-(0.5,0.5)$) {\colSet}; 
    }
    \foreach \row/\rowSet in {1/$(S^2_2)^C$,2/$(S^2_1)^C$,3/$(S^2_0)^C$,4/$(S^2_0)^C$,5/$(S^2_2)^C$} {
        \node[opacity=0.4] at ($(\NumOfColumns+1,-\row)-(0.2,0.5)$) {\rowSet}; 
    }
\end{tikzpicture}
\newcommand{\NodeqAA}{1}%
\newcommand{\NodeqAB}{2}%
\newcommand{\NodeqAC}{1}%
\newcommand{\NodeqAD}{1}%
\newcommand{\NodeqAE}{2}%
\newcommand{\NodeqBA}{0}%
\newcommand{\NodeqBB}{2}%
\newcommand{\NodeqBC}{0}%
\newcommand{\NodeqBD}{4}%
\newcommand{\NodeqBE}{0}%
\newcommand{\NodeqCA}{0}%
\newcommand{\NodeqCB}{0}%
\newcommand{\NodeqCC}{3}%
\newcommand{\NodeqCD}{1}%
\newcommand{\NodeqCE}{0}%
\newcommand{\NodeqDA}{0}%
\newcommand{\NodeqDB}{0}%
\newcommand{\NodeqDC}{0}%
\newcommand{\NodeqDD}{4}%
\newcommand{\NodeqDE}{0}%
\newcommand{\NodeqEA}{0}%
\newcommand{\NodeqEB}{0}%
\newcommand{\NodeqEC}{0}%
\newcommand{\NodeqED}{1}%
\newcommand{\NodeqEE}{0}%
\hfill
\begin{tikzpicture}[draw=black, ultra thick, x=\Size,y=\Size]
\useasboundingbox (-0.5,-7.14) rectangle (6.64,0);
    \foreach \col/\colLetter in \Sequence {%
        \foreach \row/\rowLetter in \Sequence{%
            \pgfmathtruncatemacro{\value}{\col+\NumOfColumns*(\row-1)}
            \def\NodeText{\expandafter\csname Nodeq\rowLetter\colLetter\endcsname}
            \node [Square,fill=color\NodeText] at ($(\col,-\row)-(0.5,0.5)$) {$\NodeText$};
        }
    }
    \foreach \col/\colSet in {1/$S^1_0$,2/$S^1_1$,3/$S^1_0$,4/$S^1_2$,5/$S^1_1$} {
        \node[opacity=0.4] at ($(\col,-\NumOfColumns-1)-(0.5,0.5)$) {\colSet}; 
    }
    \foreach \row/\rowSet in {1/$(S^0_1)^C$,2/$(S^0_2)^C$,3/$(S^0_3)^C$,4/$(S^0_2)^C$,5/$(S^0_3)^C$} {
        \node[opacity=0.4] at ($(\NumOfColumns+1,-\row)-(0.2,0.5)$) {\rowSet}; 
    }
\end{tikzpicture}
\newcommand{\NoderAA}{2}%
\newcommand{\NoderAB}{2}%
\newcommand{\NoderAC}{4}%
\newcommand{\NoderAD}{4}%
\newcommand{\NoderAE}{2}%
\newcommand{\NoderBA}{1}%
\newcommand{\NoderBB}{3}%
\newcommand{\NoderBC}{1}%
\newcommand{\NoderBD}{1}%
\newcommand{\NoderBE}{1}%
\newcommand{\NoderCA}{2}%
\newcommand{\NoderCB}{2}%
\newcommand{\NoderCC}{4}%
\newcommand{\NoderCD}{4}%
\newcommand{\NoderCE}{2}%
\newcommand{\NoderDA}{2}%
\newcommand{\NoderDB}{0}%
\newcommand{\NoderDC}{0}%
\newcommand{\NoderDD}{0}%
\newcommand{\NoderDE}{2}%
\newcommand{\NoderEA}{1}%
\newcommand{\NoderEB}{3}%
\newcommand{\NoderEC}{1}%
\newcommand{\NoderED}{1}%
\newcommand{\NoderEE}{1}%
\hfill
\begin{tikzpicture}[draw=black, ultra thick, x=\Size,y=\Size]
\useasboundingbox (-0.5,-7.14) rectangle (6.64,0);
    \foreach \col/\colLetter in \Sequence {%
        \foreach \row/\rowLetter in \Sequence{%
            \pgfmathtruncatemacro{\value}{\col+\NumOfColumns*(\row-1)}
            \def\NodeText{\expandafter\csname Noder\rowLetter\colLetter\endcsname}
            \node [Square,fill=color\NodeText] at ($(\col,-\row)-(0.5,0.5)$) {$\NodeText$};
        }
    }
    \foreach \col/\colSet in {1/$S^2_2$,2/$S^2_1$,3/$S^2_0$,4/$S^2_0$,5/$S^2_2$} {
        \node[opacity=0.4] at ($(\col,-\NumOfColumns-1)-(0.5,0.5)$) {\colSet}; 
    }
    \foreach \row/\rowSet in {1/$(S^1_0)^C$,2/$(S^1_1)^C$,3/$(S^1_0)^C$,4/$(S^1_2)^C$,5/$(S^1_1)^C$} {
        \node[opacity=0.4] at ($(\NumOfColumns+1,-\row)-(0.2,0.5)$) {\rowSet}; 
    }
\end{tikzpicture}
\hspace*{\fill}
\caption{}
\end{subfigure}
\par\vspace*{3ex}
\begin{subfigure}[c]{\textwidth}
\hfill
\begin{tikzpicture}[scale=0.9]
\useasboundingbox (-2.5,-2.5) rectangle (2.5,2.5);
\tikzset{
    contour/.style={
    black, 
      double=lightgray,
      opacity=0.4,
      double distance=#1,
      cap=round,
      rounded corners=0.5mm,
    }
  }
\pgfdeclarelayer{bg}
\pgfsetlayers{bg,main}
    \foreach \i/\c in {0/color0,1/color1,2/color2,3/color3,4/color4}{\draw (\i*360/5:1.8cm) node[circle,fill=\c](\i){\i};}
    \begin{pgfonlayer}{bg}
    \draw[component,draw=Violet,line width=1.2cm] (0.center) node[below=10mm] {\textcolor{Violet}{$S^0_1$}} to[bend right=15] (4.center) to[bend right=15] (3.center) to[bend right=15] cycle;
    \draw[component,draw=OliveGreen](2.center) node[left =4mm] {\textcolor{OliveGreen}{$S^0_3$}} to (4.center) to cycle;
    \draw[component,draw=MidnightBlue](1.center) to  (3.center) node[left=4mm] {\textcolor{MidnightBlue}{$S^0_2$}} to  cycle;
    \draw[component,draw=Brown,line width=1.4cm] (0.center) node[above=10mm] {\textcolor{Brown}{$S^0_0$}} to[bend left=15] (1.center) to[bend left=15](2.center) to[bend left=15] cycle;
    \end{pgfonlayer}

\end{tikzpicture}
\hfill
\begin{tikzpicture}[scale=0.9]
\useasboundingbox (-2.5,-2.5) rectangle (2.5,2.5);

\tikzset{
    contour/.style={
      black, 
      double=lightgray,
      opacity=0.4,
      double distance=#1,
      cap=round,
      rounded corners=0.5mm,
    }
  }
\pgfdeclarelayer{bg}
\pgfsetlayers{bg,main}

    \foreach \i/\c in {0/color0,1/color1,2/color2,3/color3,4/color4}{\draw (\i*360/5:1.8cm) node[circle,fill=\c](\i){\i};}
    \begin{pgfonlayer}{bg}
    \draw[component,draw=Brown,line width=1.4cm](0.center) node[above=10mm] {\textcolor{Brown}{$S^1_0$}} to[bend left=15] (1.center) to[bend left=15] (3.center)  to[bend left=15] cycle;
    \draw[component=OliveGreen](2.center) to node[left=0mm] {\textcolor{OliveGreen}{$S^1_3$}} (3.center);
    \draw[component,draw=Violet,line width=1.2cm](2.center) to[bend right=15] (0.center) node[below=10mm] {\textcolor{Violet}{$S^1_1$}} to[bend right=15](4.center) to[bend right=15] cycle;
    \draw[component,draw=MidnightBlue] (1.center)-- node[left=-8mm,pos=0.5] {\textcolor{White}{$S^1_2$}} (4.center);
    \end{pgfonlayer}

\end{tikzpicture}
\hfill
\begin{tikzpicture}[scale=0.9]
\useasboundingbox (-2.5,-2.5) rectangle (2.5,2.5);
\tikzset{
    contour/.style={
      black, 
      double=lightgray,
      opacity=0.4,
      double distance=#1,
      cap=round,
      rounded corners=0.5mm,
    }
  }
\pgfdeclarelayer{bg}
\pgfsetlayers{bg,main}
    \foreach \i/\c in {0/color0,1/color1,2/color2,3/color3,4/color4}{\draw (\i*360/5:1.8cm) node[circle,fill=\c](\i){\i};}
    \begin{pgfonlayer}{bg}
    \draw[component,draw=Brown,line width=1.4cm](1.center) to[bend right=15] (0.center) node[above=10mm] {\textcolor{Brown}{$S^2_0$}} to[bend right=15](4.center) to[bend right=15] cycle;
    \draw[component,draw=OliveGreen](4.center) to node[below,pos=0.6] {\textcolor{OliveGreen}{$S^2_3$}} (3.center);
    \draw[component,draw=Violet,line width=1.2cm](0.center) to[bend left=15] (2.center) node[above=0mm] {\textcolor{Violet}{$S^2_1$}} to[bend left=15]  (3.center) to[bend left=15] cycle;
    \draw[component,draw=MidnightBlue](2.center)-- node[above,pos=0.4]{\textcolor{MidnightBlue}{$S^2_2$}} (1.center);
    \end{pgfonlayer}
\end{tikzpicture}
\hspace*{\fill}
\caption{}
\end{subfigure}
        \caption{Part (a) illustrates a triple $(\bfA^0, \bfA^1, \bfA^2) \in \calA(5,5,\bfd)$ with $d^i_p=p+_5 i$ for $i\in \{0,1,2\}$ and $p\in [5]$, and colors representing the values $\ell\in \{0,1,2,3,4\}$. Observe that the colors in the $p$th column of each matrix $\bfA^i$ do not appear in the $p$th row of the matrix $\bfA^{i+_3 1}$. Part (b) illustrates a way of coloring these matrices that is used in the proof of \autoref{lem:existence-vertex-coloring}. For each matrix $\bfA^i$, each column is assigned a set of colors given by a hyperedge of the corresponding hypergraph below the matrix, and the $p$th row is assigned the colors that do not appear in the $p$th column of the matrix $\bfA^{i-_3 1}$. Each cell is then colored with any color assigned to both its row and its column.} 
        \label{fig:matrix-coloring-n5}
    \end{figure}

It turns out that $\calA(n,m,\bfd)$ is nonempty whenever $n\geq 5$ and the respective $p$th entries of the vectors $\bfd^0$, $\bfd^1$, and $\bfd^2$ do not coincide for any $p\in[m]$. %
\begin{lemma}  \label{lem:existence-vertex-coloring}
    Let $n\in \NN$ with $n\geq 5$, $m\in \NN$. Let $\bfd^0,\bfd^1,\bfd^2\in[n]^m$ such that $d^i_p\not=d^j_p$ for every $i,j\in \{0,1,2\}$ with $i\not=j$ and every $p\in [m]$. Then $\calA(n,m,\bfd)\not= \emptyset$.
\end{lemma}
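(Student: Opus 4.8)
The plan is to reduce the construction of the three matrices to choosing, for each matrix $\bfA^i$ and each column $q\in[m]$, a \emph{column set} $T^i_q\subseteq[n]$, and then to fill in the entries cell by cell. The idea, illustrated in Part~(b) of \autoref{fig:matrix-coloring-n5}, is to let the entries of column $q$ of $\bfA^i$ take values only in $T^i_q$, and to let the entries of row $p$ of $\bfA^i$ take values only in the complementary set $[n]\setminus T^{i-_3 1}_p$. Concretely, for $p\neq q$ I would colour cell $(p,q)$ of $\bfA^i$ with any element of $T^i_q\setminus T^{i-_3 1}_p$, and set the diagonal entry to $A^i_{pp}=d^i_p$ as required by \eqref{eq:vertex-coloring-1}. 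With this choice every entry of column $q$ of $\bfA^i$ lies in $T^i_q$, while every entry of row $q$ of $\bfA^{i+_3 1}$ lies in $[n]\setminus T^i_q$; the two are therefore disjoint, and \eqref{eq:vertex-coloring-2} holds automatically.

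It thus suffices to construct column sets $T^i_q$ with the following three properties: \textbf{(i)} $d^i_q\in T^i_q$ for all $i,q$; \textbf{(ii)} $d^{i+_3 1}_q\notin T^i_q$ for all $i,q$; and \textbf{(iii)} $T^i_q\not\subseteq T^{i-_3 1}_p$ for all $i$ and all $p,q\in[m]$. Properties (i) and (ii) guarantee that $d^i_q$ is a legal colour for the diagonal cell $(q,q)$, since it lies in $T^i_q$ and, by (ii) applied to $i-_3 1$, outside $T^{i-_3 1}_q$; this also makes (iii) hold automatically when $p=q$. Property (iii) guarantees that every off-diagonal cell has at least one legal colour. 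Note that, since $d^i_q\in T^i_q$, the value $d^i_q$ already witnesses (iii) whenever $d^i_q\notin T^{i-_3 1}_p$, which by (ii) covers in particular the case $d^i_q=d^i_p$; the work lies entirely in supplying a witness for the remaining pairs $(p,q)$.

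The main obstacle is exactly this: constructing the $T^i_q$ so that (iii) holds simultaneously for all $p$ and $q$. A naive local rule that makes $T^i_q$ a fixed function of the triple $(d^0_q,d^1_q,d^2_q)$ fails, because two columns with colliding diagonal values can be forced to satisfy $T^i_q=T^{i-_3 1}_p$ and hence violate (iii). The construction must therefore be global, and this is where the hypothesis $n\geq 5$ enters: at every column the three diagonal values are distinct, leaving at least $n-3\geq 2$ non-diagonal colours free, and these serve as the witnesses needed to break all containments. I would build the column sets explicitly, as in Part~(b) of \autoref{fig:matrix-coloring-n5}, by assigning to the columns of each matrix a family of colour sets (a hypergraph on the colour set $[n]$) engineered so that no column set of $\bfA^i$ is contained in any column set of $\bfA^{i-_3 1}$ while (i) and (ii) are respected. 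The two free colours per column give enough slack to achieve this for arbitrary $m$ and arbitrary admissible diagonals, whereas the single free colour available when $n=4$ does not, consistent with the tightness of the bound. Being explicit, the construction also yields the polynomial-time implementation claimed for the resulting mechanism.
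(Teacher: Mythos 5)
Your reduction is exactly the one the paper uses: in the paper's proof of \autoref{lem:existence-vertex-coloring}, the column set of column $q$ of $\bfA^i$ is a set $S^i_{\ell(i,q)}$ chosen to contain $d^i_q$ and avoid $d^{i+_3 1}_q$ (your properties (i) and (ii)), each off-diagonal entry $A^i_{pq}$ is picked from $S^i_{\ell(i,q)}\cap (S^{i-_3 1}_{\ell(i-_3 1,p)})^C$, and \eqref{eq:vertex-coloring-2} follows by the same column/row disjointness argument you give; non-emptiness of that intersection is your property (iii). The genuine gap is that you never produce sets satisfying (i)--(iii): the sentence ``the two free colours per column give enough slack to achieve this for arbitrary $m$ and arbitrary admissible diagonals'' asserts precisely the statement that needs proof, and that statement is the combinatorial core of the lemma. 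The paper isolates it as a standalone result (\autoref{lem:cutting-sets}): there exist three families $\{S^i_\ell\}_{\ell\in [L_i]}$, $i\in\{0,1,2\}$, of subsets of $[n]$ such that for every ordered pair $(u,v)$ of distinct values some set of family $i$ contains $u$ and avoids $v$ (condition \eqref{eq:cutting-sets-1}), and no set of family $i+_3 1$ is contained in any set of family $i$ (condition \eqref{eq:cutting-sets-2}). For $n\geq 6$ these are the cyclic families $S^0_\ell=\{\ell,\ell+_n 1,\ell+_n 2\}$, $S^1_\ell=\{\ell,\ell+_n 2,\ell+_n 3\}$, $S^2_\ell=\{\ell,\ell+_n 3,\ell+_n 4\}$, while $n=5$ needs a separate ad hoc family; verifying the two conditions for these constructions is where the actual work of the lemma lies, and nothing in your proposal substitutes for it.

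Your structural claim is also incorrect, and it is likely what blocked you from completing the argument: you assert that any rule making $T^i_q$ a fixed function of the diagonal values at column $q$ must fail, so that ``the construction must therefore be global.'' The paper's construction is exactly such a local rule: $T^i_q=S^i_{\ell(i,q)}$ depends only on $(d^i_q,d^{i+_3 1}_q)$. The collision $T^i_q=T^{i-_3 1}_p$ you fear cannot occur --- not because of global coordination across columns, but because the three matrices draw their column sets from three \emph{different} families, and \eqref{eq:cutting-sets-2} forbids even containment, hence in particular equality, between sets of consecutive families. What fails is only the naive variant in which all three matrices share a single family. Once the three families are fixed (they depend on $n$ alone, not on $m$ or $\bfd$), the per-column choices are independent of one another, your concern about ``arbitrary $m$'' disappears, and explicitness of the families is what yields the claimed polynomial-time implementation.
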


The proof of the lemma is given in \autoref{app:pf-lem-existence-vertex-coloring}. It proceeds by defining, for each row and column of each matrix, a feasible set of values for that row or column. The non-intersection constraint~\eqref{eq:vertex-coloring-2} is achieved by defining the feasible set for the $p$th row of a matrix $\bfA^i$ as the complement of the feasible set for the $p$th column of the previous matrix, $\bfA^{i-_3 1}$. For each entry of each matrix, a value is then chosen from the intersection of the feasible set for its row and the feasible set for its column. Nonemptiness of the intersection of the feasible sets for each row and column of the same matrix turns out to be equivalent to the condition that none of the feasible sets for columns of matrix $\bfA^i$ is a subset of any of the feasible sets for columns of matrix $\bfA^{i-_3 1}$. An illustration of this condition for the example in Part~(a) of \autoref{fig:matrix-coloring-n5} is shown in Part~(b) of that figure.

The proof of \autoref{thm:weak-unanimity}, which can be found in \autoref{app:pf-thm-weak-unanimity}, defines a mechanism that places each agent $i\in\{0,1,2\}$ in a position of the output ranking given by an entry of matrix $\bfA^i$. We take $m=n!$, and associate each row of $\bfA^i$ with an input ranking of agent $i+_3 1$ and each column of $\bfA^i$ with an input ranking of agent $i+_3 2$. Rankings are associated with rows and columns in a symmetric way across agents, so that the position of agent~$i$ is given by a value on the diagonal of $\bfA^i$ when the other two decisive agents agree on a ranking. To enable weak unanimity, the value on the diagonal is set to the position of agent~$i$ in the agreed ranking. The position of a non-decisive agent $i\in[n]\setminus\{0,1,2\}$ is determined by a function $g_i\colon[n!]^3\to[n]$, such that the agent is ranked in position $g_i(p,q,r)$ when the input rankings of the three decisive agents are given by $p$, $q$, and $r$. This is done in such a way that whenever the decisive agents agree on a ranking all non-decisive agents, and indeed all agents, are placed in the same position as in that ranking. Weak unanimity thus holds. Impartiality for each decisive agent holds because its position is determined by the other two decisive agents, impartiality for non-decisive agents because they have no influence on the outcome.

The fact that the mechanism determines the output ranking from the input rankings of only three agents makes it more convenient to describe and analyze. It means, however, that the mechanism violates a property \citet{holzman2013impartial} call \emph{no dummy}, which requires every agent to be able to influence the output ranking for some profile of input rankings of the other agents. When $n\geq 6$, no dummy can easily be achieved without sacrificing any of the other properties: for any profile of input rankings, assign positions to the three decisive agents as before; let~$i_0$ be the agent assigned the highest position among non-decisive agents, and let~$i_1$ and~$i_2$ be the next two non-decisive agents according to some fixed ordering; now assign positions to non-decisive agents as before, but make the relative order of~$i_1$ and~$i_2$ the same as their relative order in the input ranking of~$i_0$.

\section{Impossibility Results}
\label{sec:impossibilities}

We conclude by showing that our positive results do not leave much room for improvement.
We have seen that for $n\geq 4$, impartiality is compatible with monotonicity and individual full rank. The requirement that $n\geq 4$ is necessary, as for all non-trivial $n<4$ individual full rank alone is incompatible with impartiality.
\begin{theorem}  \label{thm:imp-ifr-n23}
For $n\in \{2,3\}$, there does not exist an $n$-ranking mechanism that satisfies impartiality and individual full rank.
\end{theorem}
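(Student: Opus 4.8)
The plan is to show that impartiality forces each agent's output position to depend only on the other agents' rankings, and that this rigidity clashes with individual full rank. For a mechanism $f$ satisfying impartiality, write $P_i(\bfpi_{-i}) \defas (f(\bfpi))^{-1}(i)$ for the position of agent~$i$; this is well-defined precisely because impartiality says the value does not depend on $\pi_i$. With this notation, individual full rank says exactly that each $P_i$ is surjective onto $[n]$. For $n=2$, $P_0$ depends only on $\pi_1$ and $P_1$ only on $\pi_0$, while the output being a permutation forces $P_0(\pi_1)+P_1(\pi_0)=1$ for all $\pi_0,\pi_1$; fixing $\pi_1$ and varying $\pi_0$ makes $P_1$ constant, hence $P_0$ constant, so neither is surjective, contradicting individual full rank.

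For $n=3$ the starting point is the identity
\[
 P_0(\pi_1,\pi_2)+P_1(\pi_0,\pi_2)+P_2(\pi_0,\pi_1)=3 \qquad\text{for all }\pi_0,\pi_1,\pi_2,
\]
which holds because the three positions always form a permutation of $\{0,1,2\}$. Holding two rankings fixed and varying the third makes exactly one summand drop out; for example varying $\pi_2$ between $c$ and $c'$ gives $[P_0(\pi_1,c)-P_0(\pi_1,c')]+[P_1(\pi_0,c)-P_1(\pi_0,c')]=0$. Since the first bracket is independent of $\pi_0$ and the second of $\pi_1$, each bracket must be a constant depending only on $(c,c')$, and the same reasoning applies to the other two pairs of variables. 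These differences satisfy a cocycle relation, so they are generated by potentials $\alpha,\beta,\gamma\colon\calP_3\to\ZZ$, yielding the additively separable forms
\[
 P_0=\beta(\pi_1)+\gamma(\pi_2)+k_0,\quad P_1=\alpha(\pi_0)-\gamma(\pi_2)+k_1,\quad P_2=-\alpha(\pi_0)-\beta(\pi_1)+k_2,
\]
where plugging back into the sum identity gives $k_0+k_1+k_2=3$. I expect establishing this separability to be the main obstacle; everything after it is bookkeeping.

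To finish, I would exploit individual full rank. Surjectivity of each $P_i$ onto $\{0,1,2\}$ means each has range exactly~$2$, and since the arguments are independent the range of a sum is the sum of ranges; writing $r_\alpha,r_\beta,r_\gamma$ for the ranges of the potentials, this yields $r_\beta+r_\gamma=r_\alpha+r_\gamma=r_\alpha+r_\beta=2$, forcing $r_\alpha=r_\beta=r_\gamma=1$. After normalizing the minima to zero, $\alpha,\beta,\gamma$ each take both values in $\{0,1\}$, and matching the ranges of the $P_i$ to $\{0,1,2\}$ pins down $k_0=0$, $k_1=1$, $k_2=2$. Because $\alpha,\beta,\gamma$ depend on independent arguments and each is surjective onto $\{0,1\}$, every combination $(a,b,g)\in\{0,1\}^3$ of their values is realized by some profile. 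Taking $(a,b,g)=(0,1,0)$ then gives $P_0=b+g=1$, $P_1=1+a-g=1$, and $P_2=2-a-b=1$, so the output assigns all three agents to position~$1$ and is not a permutation. This contradiction shows no such mechanism exists for $n=3$, completing the proof.
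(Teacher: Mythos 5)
Your proof is correct, but for $n=3$ it takes a genuinely different route from the paper's. The paper argues combinatorially: using chains of single-agent deviations and impartiality, it first shows that an impartial $3$-ranking mechanism can never output both a ranking $(i_0\ \ i_1\ \ i_2)$ and its cyclic shift $(i_2\ \ i_0\ \ i_1)$; hence the mechanism has at most two distinct outputs (one from each rotation class of $\calP_3$), so each agent can reach at most two positions, contradicting individual full rank. You instead work algebraically with the position functions: the identity $P_0+P_1+P_2=3$ plus impartiality forces each difference bracket to depend on a single coordinate, the cocycle relation yields potentials $\alpha,\beta,\gamma$ and the separable forms with the sign pattern you state (which is indeed forced), and individual full rank then makes each potential binary and pins the constants to $k_0=0$, $k_1=1$, $k_2=2$; the realizable combination $(a,b,g)=(0,1,0)$ collapses all three agents onto position $1$, contradicting that the output is a permutation. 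I checked the steps you flagged as delicate — the passage from constant difference brackets to additive separability, integer-valuedness of the potentials, and the claim that the spread of a sum over independent arguments is the sum of spreads — and they all hold. Each approach buys something: the paper's cyclic-shift claim is shorter and gives the standalone fact that impartiality alone limits any $3$-ranking mechanism to at most two outputs, while your decomposition gives a sharper structural characterization of impartial mechanisms (separability holds without full rank) and extends the $n=2$ argument systematically rather than ad hoc. Your $n=2$ case is essentially the paper's argument — both show the mechanism must be constant — just phrased through position functions.
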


The proof of this theorem can be found in \autoref{app:pf-thm-imp-ifr-n23}. It is straightforward for $n=2$. For $n=3$, we show that an impartial mechanism that produces a particular ranking~$\pi$ for some ranking profile cannot for some other ranking profile produce a cyclic shift of~$\pi$; it follows that an impartial mechanism can produce at most two distinct rankings, which stands in sharp contrast to individual full rank.

When $n\geq 5$ impartiality is compatible with weak unanimity, which is stronger than individual full rank. Weak unanimity cannot be strengthened further to unanimity, since the latter is incompatible with impartiality for all non-trivial $n$.
\begin{theorem}  \label{thm:imp-unanimity}
For $n\in\NN$ with $n\geq 2$, there does not exist an $n$-ranking mechanism that satisfies impartiality and unanimity.
\end{theorem}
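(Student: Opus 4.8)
The plan is a proof by contradiction: assume an $n$-ranking mechanism $f$ that is both impartial and unanimous. I would first dispose of the small cases for free. By \autoref{lem:relation-axioms} unanimity implies individual full rank, so for $n\in\{2,3\}$ the statement is immediate from \autoref{thm:imp-ifr-n23}. (The $n=2$ contradiction is also instructive in isolation: with $e=(0\ 1)$ and $e'=(1\ 0)$, unanimity gives $f(e,e)=e$ and $f(e',e')=e'$, so impartiality applied to agent~$0$ in the first profile and to agent~$1$ in the second forces $(f(e',e))^{-1}(0)=(f(e',e))^{-1}(1)=0$, which is impossible.) The real content is therefore $n\ge 4$, where individual full rank—and for $n\ge 5$ even weak unanimity—is achievable by \autoref{thm:monotonicity} and \autoref{thm:weak-unanimity}; the argument must thus exploit exactly the pairwise strength of unanimity that weak unanimity lacks.

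The tool I would isolate first is a \emph{single-swap lemma}. Fix a ranking $\rho$ and let $\rho'$ be obtained from $\rho$ by transposing the agents occupying two adjacent positions $t$ and $t+1$. If every agent submits $\rho$ except one agent~$j$ who submits $\rho'$, then all agents agree on every pairwise comparison except the one between the transposed agents, so unanimity pins every other comparison and forces $f(\bfpi)\in\{\rho,\rho'\}$; moreover, if $j$ is itself one of the two transposed agents, then impartiality fixes $j$'s position to its value under $\rho$, which $\rho$ and $\rho'$ place differently, so the output must be $\rho$. Extending this across the whole swap family $\{\rho,\rho'\}^n$, the output stays in $\{\rho,\rho'\}$, and applying impartiality to each of the two transposed agents $x,y$ shows that the \emph{relative order of $x$ and $y$ in the output is a function of the ballots of the other $n-2$ agents only}, equal to its $\rho$-order when those agents unanimously choose $\rho$ and to its $\rho'$-order when they choose $\rho'$. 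This is a local, self-excluding form of independence of irrelevant alternatives, and it is the precise place where unanimity does more than weak unanimity.

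Building on this, the goal is to force an \emph{intransitive} output, which no genuine ranking can realize. I would try to couple several adjacent-swap families so that their forced pairwise orders cannot be simultaneously consistent. The model case is the permutohedron of $S_3$: its six rankings and six adjacent-swap families form a hexagon in which each family designates the single remaining agent as a dictator over the transposed pair, and transferring positions around the cycle via impartiality produces a system of comparisons with no transitive solution. For general $n$ the plan is to work inside the Cayley graph of $S_n$ generated by adjacent transpositions and to combine the local self-excluding IIA constraints over all adjacent-swap families at once, aiming either to bootstrap these into a global IIA and then run an Arrow-style decisive-coalition argument, or to exhibit an explicit cyclic family of near-unanimous profiles whose forced outputs are mutually inconsistent.

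The main obstacle—and the reason the problem is genuinely harder for $n\ge 4$ than for $n\in\{2,3\}$—is the \textbf{free-dictator phenomenon}. Any attempt to localize the conflict, for instance by freezing agents $3,\dots,n-1$ at fixed positions so that the intransitivity lives among agents $0,1,2$, makes impartiality vacuous for the frozen agents, and a single one of them can then dictate the order of every transposed pair, dissolving the conflict; I checked that even coupling two disjoint swap families (e.g.\ swapping $\{0,1\}$ and $\{2,3\}$ simultaneously for $n=4$) leaves such a dictatorial rule consistent. The construction must therefore engage \emph{every} voter in a single coupled system rather than isolating a triple, and the crux is to show that the resulting forced comparisons have no transitive realization. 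Here is where the defining feature of the setting—that the alternatives \emph{are} the voters—should be decisive: impartiality forbids any agent from influencing its own comparisons, which is exactly what rules out the dictator that would otherwise reconcile the constraints, and I expect the closing step to turn this self-exclusion against an Arrow-type decisive-set analysis to complete the contradiction.
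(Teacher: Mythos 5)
Your reduction of $n\in\{2,3\}$ to \autoref{thm:imp-ifr-n23} via \autoref{lem:relation-axioms} is exactly what the paper does, and your single-swap lemma is correct as stated: for a profile in which every agent submits $\rho$ or an adjacent transposition $\rho'$ of $\rho$, unanimity does force $f(\bfpi)\in\{\rho,\rho'\}$, and impartiality applied to the two transposed agents does show that the choice between $\rho$ and $\rho'$ depends only on the ballots of the other $n-2$ agents, with the unanimous sub-profiles pinned down. The problem is everything after that. For $n\geq 4$ --- the only case not already covered by \autoref{thm:imp-ifr-n23} --- you never derive a contradiction. The coupling of swap families that is supposed to produce an intransitive set of forced comparisons is described only as a goal (``aiming either to bootstrap these into a global IIA \dots or to exhibit an explicit cyclic family''), and the closing step is explicitly conjectural (``I expect the closing step to \dots complete the contradiction''). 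Worse, you yourself identify the obstruction that defeats the natural instantiations of this plan: the free-dictator phenomenon, which you verified kills both the localized $S_3$-hexagon argument and the two-disjoint-swaps coupling for $n=4$. A proof must overcome that obstruction with a concrete construction; naming it and asserting that some construction engaging all voters should exist is a research plan, not a proof. So the core of the theorem is missing.

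For contrast, the paper's proof needs none of this machinery and in particular no Arrow-style decisiveness analysis. It takes the chain of profiles $\bfpi^0,\dots,\bfpi^{n-1}$ in which agents $n-1,n-2,\dots$ switch one by one from $(1\ \ 2\ \ \cdots\ \ n-1\ \ 0)$ to the identity. Every agent in every profile of this chain ranks $1,2,\dots,n-1$ in the same relative order, so unanimity (used here in its full pairwise strength, which weak unanimity lacks) forces $(f(\bfpi^\ell))^{-1}(1)<\cdots<(f(\bfpi^\ell))^{-1}(n-1)$ throughout; combining this with impartiality of the single agent that switches at each step pins, by induction, agents $1,\dots,n-\ell$ to positions $0,\dots,n-\ell-1$ in $f(\bfpi^\ell)$. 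At $\bfpi^{n-1}$ agent $1$ sits in position $0$, so agent $0$ does not, and one last application of impartiality (agent $0$ switching to the identity) contradicts unanimity at the all-identity profile. If you want to salvage your approach, note that this argument can be read as a successful version of your ``engage every voter'' requirement: instead of adjacent swaps, it moves a single agent through all positions while the pairwise order of the remaining $n-1$ agents is held fixed by unanimity, which is precisely what makes the impartiality constraints accumulate rather than dissolve into a free dictator.
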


We defer the proof to \autoref{app:pf-thm-imp-unanimity} but briefly explain the underlying ideas. 
The proof proceeds in a similar way as some proofs of Arrow's impossibility, for example one by \citet{geanakoplos2005three}.
It starts from a profile in which every agent casts the ranking $(1\ \ 2 \ \ \cdots\ \ n-1\ \ 0)$, which by unanimity must also be the output ranking. Agents from $n-1$ to $1$ then change, one by one, to the identity ranking $(0\ \ 1\ \ \cdots \ \ n-2\ \ n-1)$ by moving agent $0$ from the last position of their ranking to the first. By impartiality and unanimity, when agents $k,k+1,\dots,n-1$ have changed their input ranking, the positions of agents $1,2,\dots,k$ in the output ranking remain unchanged. Thus agent~$1$ remains in the first position when all agents except agent~$0$ have changed their input ranking, and by impartiality agent~$0$ is not in the first position after all agents have changed their input ranking. This is a contradiction to unanimity.

\autoref{thm:imp-ifr-n23} leaves open the possibility of a $4$-ranking mechanism satisfying impartiality and weak unanimity, but we can show computationally that such a mechanism does not exist. This is done by writing the axioms for a set of rank profiles as linear constraints, and finding a small set of rank profiles for which these constraints are infeasible. The counterexample is large, so we do not reproduce it here.

\newpage

\appendix

\section{Deferred Proofs from \autoref{sec:preliminaries}}

\subsection{Proof of \autoref{lem:relation-axioms}}
\label{app:pf-lem-relation-axioms}

Let $n\in \NN$ with $n\geq 2$, and let first $f$ be an $n$-ranking mechanism satisfying unanimity. Let $\bfpi\in \calP_n^n$ be such that there is some $\hat{\pi}\in \calP_n$ with $\pi_i=\hat{\pi}$ for every $i\in [n]$. Denoting $\hat{\pi}=(j_0\ \ j_1\ \ \ldots\ \ j_{n-1})$, we have that for every $k, \ell\in [n]$ with $k<\ell,  \pi_i^{-1}(j_k) < \pi_i^{-1}(j_{\ell})$ for every $i\in [n]$. Therefore, the fact of $f$ satisfying unanimity implies $(f(\bfpi))^{-1}(j_k) < (f(\bfpi))^{-1}(j_{\ell})$ for every $k,\ \ell\in [n]$ with $k<\ell$. We conclude that $f(\bfpi) = (j_0\ \ j_1\ \ \ldots\ \ j_{n-1})=\hat{\pi}$, thus $f$ satisfies weak unanimity.

Let now $f$ be an $n$-ranking mechanism satisfying weak unanimity and let $j, k\in [n]$ be arbitrary. Defining $\hat{\pi}\in \calP_n$ as a ranking with $\hat{\pi}(k) = j$ and $\bfpi \in \calP_n^n$ as the ranking profile with $\pi_i = \hat{\pi}$ for every $i\in [n]$, we have from the fact of $f$ satisfying weak unanimity that $f(\bfpi) = \hat{\pi}$ so, in particular, $(f(\bfpi))(k) = j$. We conclude that $f$ satisfies individual full rank.

\section{Deferred Proofs from \autoref{sec:monotonicity}}

\subsection{Proof of \autoref{lem:suf-conds-g}}
\label{app:pf-lem-suf-conds-g}

Let $n\geq 4$ and consider $g$ and $\bfrho$ as defined in the statement of the lemma. 
In a slight abuse of notation, for $i\in [n],~\bfpi\in \calP_n^n$, and $\tilde{\pi}\in \calP_n$, we use $g(\chi_i(\tilde{\pi},\rho_i), \bfchi_{-i}(\bfpi_{-i}, \bfrho_{-i}))$ to denote 
\[
    g(\chi_0(\pi_0,\rho_0),\chi_1(\pi_1,\rho_1),\dots, \chi_{i-1}(\pi_{i-1},\rho_{i-1}), \chi_{i}(\tilde{\pi},\rho_{i}), \chi_{i+1}(\pi_{i+1},\rho_{i+1}),\dots, \chi_{n-1}(\pi_{n-1},\rho_{n-1})).
\]

To see that $f_{g,\bfrho}$ is impartial, fix $i\in [n],~\bfpi\in \calP_n^n$, and $\tilde{\pi} \in \calP_n$ arbitrarily. We have that 
\begin{align*}
    (f_{g,\bfrho}(\bfpi))^{-1}(i)
    & = (g(\chi_i(\pi_i,\rho_i), \bfchi_{-i}(\bfpi_{-i}, \bfrho_{-i})))^{-1}(i)\\
    & = (g(\chi_i(\tilde{\pi},\rho_i), \bfchi_{-i}(\bfpi_{-i}, \bfrho_{-i})))^{-1}(i)\\
    & = (f_{g,\bfrho}(\tilde{\pi}, \bfpi_{-i}))^{-1}(i),
\end{align*}
where the first and the last equality follow from the definition of $f_{g,\bfrho}$ and the second equality follows from Condition~\ref{item:g-impartial} in the statement of the lemma.

To see that $f_{g,\bfrho}$ satisfies individual full rank, consider $j, k\in [n]$. We know from Condition~\ref{item:g-ifr} in the statement of the lemma that there exists $\bfb\in \{0,1\}^n$ such that $(g(\bfb))(k) = j$. Define, for each $i\in [n]$, $\pi_i\in \calP_n$ such that $\pi_i^{-1}(\rho_i) < \pi_i^{-1}(i)$ if $b_i = 1$ and $\pi_i^{-1}(\rho_i) > \pi_i^{-1}(i)$ otherwise.
Then, for every $i\in [n]$ we have that $\chi_i(\pi_i,\rho_i)=b_i$ and thus $f_{g,\bfrho}(\bfpi) = g(\bfchi(\bfpi,\bfrho)) = g(\bfb)$. In particular, $(f_{g,\bfrho}(\bfpi))(k) = (g(\bfb))(k) = j$.

Finally, to see that $f_{g,\bfrho}$ is monotone, fix $\bfpi\in\calP_n^n$, $i,j_0\in [n]$, and $\tilde{\pi}\in\calP_n$ such that for all $j_1\in[n]$ and $j_2\in[n]\setminus\{j_0\}$,
    \vspace*{-1ex}
    \begin{align}\label{eq:mon}
        \tilde{\pi}^{-1}(j_1)< \tilde{\pi}^{-1}(j_2) \quad \text{whenever}\quad \pi_i^{-1}(j_1)<\pi_i^{-1}(j_2).  \vspace*{-1ex}
    \end{align}
If $\rho_i \not= j_0$, then~\eqref{eq:mon} for both $(j_1,j_2)=(i,\rho_i)$ and $(j_1,j_2)=(\rho_i,i)$ yields $\chi_i(\tilde{\pi},\rho_i)=\chi_i(\pi_i,\rho_i)$, thus
\begin{align*}
    (f_{g,\bfrho}(\tilde{\pi}, \bfpi_{-i}))^{-1}(j_0) & = (g(\chi_i(\tilde{\pi},\rho_i),\bfchi_{-i}(\bfpi_{-i},\bfrho_{-i})))^{-1}(j_0)\\
    & = (g(\chi_i(\pi_i,\rho_i),\bfchi_{-i}(\bfpi_{-i},\bfrho_{-i})))^{-1}(j_0)\\
    & = (f_{g,\bfrho}(\bfpi))^{-1}(j_0),
\end{align*}
and monotonicity holds trivially in this case.
If $\rho_i = j_0$, we know that, if $\pi_i^{-1}(\rho_i) < \pi_i^{-1}(i)$, then $\tilde{\pi}^{-1}(\rho_i) < \tilde{\pi}^{-1}(i)$, thus $\chi_i(\tilde{\pi},\rho_i) \geq \chi_i(\pi_i,\rho_i)$. This implies that
\begin{align*}
    (f_{g,\bfrho}(\tilde{\pi}, \bfpi_{-i}))^{-1}(j_0) & = (g(\chi_i(\tilde{\pi},\rho_i),\bfchi_{-i}(\bfpi_{-i},\bfrho_{-i})))^{-1}(j_0)\\
    & \leq (g(\chi_i(\pi_i,\rho_i),\bfchi_{-i}(\bfpi_{-i},\bfrho_{-i})))^{-1}(j_0)\\
    & = (f_{g,\bfrho}(\bfpi))^{-1}(j_0),
\end{align*}
where the inequality follows from Condition~\ref{item:g-monotone} in the statement of the lemma. This concludes the proof of monotonicity and the proof of the lemma.

\subsection{Proof of \autoref{lem:existence-g-n4}}
\label{app:pf-lem-existence-g-n4}

Let $\bfrho=(1,0,1,0)\in R_4$, and let $g\colon\{0,1\}^4\to\calP_4$ be defined as shown in \autoref{fig:g-01-4}.
\begin{figure}[tb]
    \centering
    \begin{tikzpicture}
        \matrix [matrix of math nodes,left delimiter=.,right delimiter=.](A){ 
        (2\ \ 3\ \ 1\ \ 0) & (2\ \ 3\ \ 0\ \ 1) & \qquad & (0\ \ 3\ \ 1\ \ 2) & (0\ \ 3\ \ 2\ \ 1) \\[0.3cm]
        (2\ \ 1\ \ 3\ \ 0) & (2\ \ 0\ \ 3\ \ 1) & \qquad & (3\ \ 1\ \ 0\ \ 2) & (3\ \ 0\ \ 2\ \ 1) \\[0.8cm]        
        (3\ \ 2\ \ 1\ \ 0) & (3\ \ 1\ \ 0\ \ 2) & \qquad & (0\ \ 2\ \ 1\ \ 3) & (0\ \ 1\ \ 2\ \ 3) \\[0.3cm]
        (1\ \ 2\ \ 3\ \ 0) & (1\ \ 0\ \ 3\ \ 2) & \qquad & (1\ \ 2\ \ 0\ \ 3) & (1\ \ 0\ \ 2\ \ 3) \\
        };
        \draw[<->, thick] (-3.4,2.3) node[left] {$0$} -- (-1.8,2.3) node[midway,above] {$3$} node[right] {$1$};
        \draw[<->, thick] (1.8,2.3) node[left] {$0$} -- (3.4,2.3) node[midway,above] {$3$} node[right] {$1$};
        \draw[<->, thick] (-5,1.6) node[above] {$0$} -- (-5,0.75) node[midway,left] {$2$} node[below] {$1$};
        \draw[<->, thick] (-5,-0.75) node[above] {$0$} -- (-5,-1.6) node[midway,left] {$2$} node[below] {$1$};
        \draw[<->, thick] (-2.35,-2.3) node[left] {$0$} -- (2.35,-2.3) node[midway,below] {$1$} node[right] {$1$};
        \draw[<->, thick] (5,1.175) node[above] {$0$} -- (5,-1.175) node[midway,right] {$0$} node[below] {$1$};
    \end{tikzpicture}
    \caption{Matrix of values of the function $g\colon\{0,1\}^4\to\calP_4$ used in the proof of \autoref{lem:existence-g-n4}. Each entry of the matrix is equal to $g(\bfb)$ for some $\bfb\in\{0,1\}^n$, where $b_i$ is indicated by an arrow with label~$i$ at the center and values~$0$ and~$1$ at either end.
Thus, for example, $g(0,1,0,0)=(0~~3~~1~~2)$ and $g(1,0,0,1)=(3~~1~~0~~2)$.}
    \label{fig:g-01-4}
\end{figure}
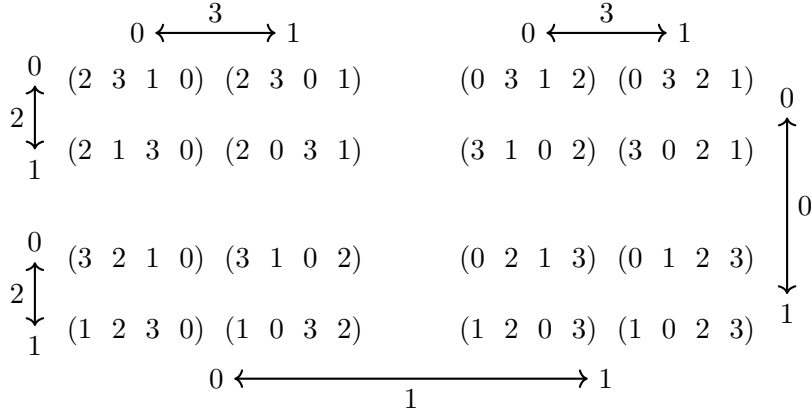
The fact that $g$ satisfies Conditions~\ref{item:g-impartial} and~\ref{item:g-monotone} in the statement of \autoref{lem:suf-conds-g} follows from the fact that, for every $i\in \{0,1,2,3\}$ and $\bfb\in \{0,1\}^4$, we have $(g(0,\bfb_{-i}))^{-1}(i) = (g(1,\bfb_{-i}))^{-1}(i)$ and $(g(1,\bfb_{-i}))^{-1}(\rho_i) \leq (g(0,\bfb_{-i}))^{-1}(\rho_i)$. These facts are easy to check from the matrix; \autoref{fig:table-imp-mon-n4} shows them explicitly for $i=0$.
\begin{table}[t]
\centering
\begin{tabular}{ccccccc}
\toprule
$b_1$ & $b_2$ & $b_3$ & $(g(0,\bfb_{-0}))^{-1}(0)$ & $(g(1,\bfb_{-0}))^{-1}(0)$ & $(g(0,\bfb_{-0}))^{-1}(1)$ & $(g(1,\bfb_{-0}))^{-1}(1)$ \\
\midrule
$0$   & $0$   & $0$   & $3$                         & $3$                         & $2$                         & $2$                         \\
$0$   & $0$   & $1$   & $2$                         & $2$                         & $3$                         & $1$                         \\
$0$   & $1$   & $0$   & $3$                         & $3$                         & $1$                         & $0$                         \\
$0$   & $1$   & $1$   & $1$                         & $1$                         & $3$                         & $0$                         \\
$1$   & $0$   & $0$   & $0$                         & $0$                         & $2$                         & $2$                         \\
$1$   & $0$   & $1$   & $0$                         & $0$                         & $3$                         & $1$                         \\
$1$   & $1$   & $0$   & $2$                         & $2$                         & $1$                         & $0$                         \\
$1$   & $1$   & $1$   & $1$                         & $1$                         & $3$                         & $0$                         \\
\bottomrule
\end{tabular}
\caption{Position of agents $0$ and $\rho_0=1$ given by $g(\bfb)$ for each profile $\bfb\in \{0,1\}^4$, where $g$ is defined as in the proof of \autoref{lem:existence-g-n4}. The fourth and fifth columns show that $g$ satisfies Condition~\ref{item:g-impartial} in the statement of \autoref{lem:suf-conds-g} for $i=0$ since the position of $0$ does not change for a given vector $\bfb_{-0}$; the last two columns show that $g$ satisfies Condition~\ref{item:g-monotone} in the statement of \autoref{lem:suf-conds-g} for $i=0$ since, for a given vector $\bfb_{-0}$, $\rho_0$ does not move below in the output ranking when switching from $(0,\bfb_{-0})$ to $(1,\bfb_{-0})$.}
\label{fig:table-imp-mon-n4}
\end{table}

Finally, Condition~\ref{item:g-ifr} follows from the fact for each $j,k\in \{0,1,2,3\}$, there exists $\bfb\in \{0,1\}^4$ such that $(g(\bfb))(k)=j$. For example, for $j=0$ we have $(g(0,1,0,0))(0)=0$, $(g(0,0,1,1))(1)=0$, $(g(0,0,0,1))(2)=0$, and $(g(0,0,0,0))(3)=0$. This is easily checked in an analogous way for the other agents, which concludes the proof of the lemma.

\subsection{Proof of \autoref{lem:existence-blocking-sets}}
\label{app:pf-lem-existence-blocking-sets}

In order to prove \autoref{lem:existence-blocking-sets}, we start by showing the existence of vectors $\bfrho\in R_n$ and colored multigraphs satisfying certain joint properties, and then conclude the lemma by appropriately defining the blocking sets given such vector and multigraph. To do so, we introduce the family of undirected multigraphs 
\[
    \calG_n = \left\{G=(N,E)\colon N=[n],\ E=\bigcup_{i\in N}E_i,\ E_i\subseteq \left\{\{j,k\}\in 2^{[n]\setminus \{i\}}\colon j\not= k\right\} \text{ for each } i\in N\right\},
\]
where each vertex $i\in N=[n]$ is associated to a set of undirected edges $E_i$ not including $i$; we say that such edges have \textit{color $i$}. For $G=(N,E)\in \calG_n$ and $i,j\in N$, we let $N_i(j,G)=\{k\in N\colon \{j,k\}\in E_i\}$ denote the neighbors of $j$ in $G$ with edges of color $i$. We obtain the following lemma.

\begin{lemma}
\label{lem:edge-coloring}
    Let $n\in \NN$ with $n\geq 5$. Then, there exists $\bfrho\in R_n$ and $G=(N,E)\in \calG_n$ such that 
    \begin{enumerate}[label=(\roman*)]
        \item for every $i\in N$, $N_i(\rho_i,G)=\{j\in N\setminus \{i\}\colon j<\rho_i\}$;\label{item:multigraph-monotonicity}
        \item for every $k\in N$ and $j,\ell \in N\setminus \{k\}$ with $j\not=\ell$, there exists $i\in N\setminus \{j,k,\ell\}$ such that $|N_i(k,G)\cap \{j,\ell\}| = 1$.\label{item:multigraph-paths}
    \end{enumerate}
\end{lemma}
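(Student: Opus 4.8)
The plan is to fix the vector $\bfrho$ once and for all as the cyclic shift $\rho_i = i +_n 1$, which lies in $R_n$, and then to produce the multigraph $G$ by a probabilistic construction combined with the symmetric Lov\'asz local lemma (LLL). Condition~\ref{item:multigraph-monotonicity} only constrains, for each color $i$, the color-$i$ edges incident to the single vertex $\rho_i$: it forces $\{\rho_i, j\}$ to be present in color $i$ exactly when $j < \rho_i$ (and $j\neq i$). I would therefore fix all of these ``star'' edges deterministically, so that Condition~\ref{item:multigraph-monotonicity} holds by construction, and declare every remaining candidate edge --- the color-$i$ edges $\{a,b\}$ with $a,b\in[n]\setminus\{i,\rho_i\}$ --- to be \emph{free}, including each one independently with probability $1/2$. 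It then remains only to arrange Condition~\ref{item:multigraph-paths}. Reading that condition through its contrapositive, for each $k\in N$ and each pair of distinct $j,\ell\in N\setminus\{k\}$ define the bad event $B_{k,j,\ell}$ that for every color $i\in N\setminus\{j,k,\ell\}$ the two color-$i$ edges $\{k,j\}$ and $\{k,\ell\}$ are either both present or both absent; Condition~\ref{item:multigraph-paths} holds for the realised graph precisely when no $B_{k,j,\ell}$ occurs.

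Next I would bound the probability of a single bad event. Since color-$i$ edges are independent across colors, the events ``color $i$ fails to separate $j$ and $\ell$ at $k$'' are mutually independent over $i$, so $\Pr[B_{k,j,\ell}]$ factorises over the valid colors $i\in N\setminus\{j,k,\ell\}$. For any such color that is not \emph{fully fixed} --- i.e.\ that does not have both $\{k,j\}$ and $\{k,\ell\}$ pinned by Condition~\ref{item:multigraph-monotonicity} --- the ``both-or-neither'' event has probability exactly $1/2$, whether the color is entirely free or has one fixed and one free edge. The key point is that a color $i$ is fully fixed for this triple only if $\rho_i\in\{k,j\}\cap\{k,\ell\}=\{k\}$, i.e.\ $\rho_i=k$, and since $\bfrho$ is a bijection there is at most one such color. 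Hence at least $n-4$ of the $n-3$ valid colors contribute an independent factor $1/2$, giving $\Pr[B_{k,j,\ell}]\leq 2^{-(n-4)}$.

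For the dependency structure I would observe that $B_{k,j,\ell}$ reads only the free edges lying in the two vertex-pairs $\{k,j\}$ and $\{k,\ell\}$. Two bad events can therefore be dependent only if their pair-sets share a common pair, and for a fixed $B_{k,j,\ell}$ each of its two pairs is shared by at most $2(n-2)$ other triples, obtained by taking one endpoint of the pair as the new centre and letting the third vertex vary. Thus the dependency degree is $d=O(n)$, and the symmetric LLL criterion $e\cdot\Pr[B_{k,j,\ell}]\cdot(d+1)\leq e\cdot 2^{-(n-4)}\cdot O(n)\leq 1$ is satisfied for all $n$ at least some absolute constant $n_0$. For such $n$ the LLL guarantees a realisation in which no bad event occurs, i.e.\ a graph $G$ satisfying both conditions. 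For the finitely many remaining values $5\leq n<n_0$ I would exhibit explicit multigraphs and verify Conditions~\ref{item:multigraph-monotonicity} and~\ref{item:multigraph-paths} directly (one such graph for $n=6$ already appears in Part~(b) of \autoref{fig:blocking-sets}); these small cases can be found and checked by a finite search. Finally, the probabilistic step can be made constructive via the deterministic local-lemma algorithm of \citet{chandrasekaran2013deterministic}, which is what enables the efficient implementation claimed later.

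I expect the main obstacle to be the careful bookkeeping of how the deterministically fixed edges of Condition~\ref{item:multigraph-monotonicity} interact with the probabilistic argument for Condition~\ref{item:multigraph-paths}: one must verify that fixing an entire star around each $\rho_i$ destroys only a controlled number of potential ``witness'' colors per triple, so that the exponential bound on $\Pr[B_{k,j,\ell}]$ survives. Choosing $\bfrho$ to be a single $n$-cycle is exactly what keeps this interference down to at most one fully-fixed color per triple; a less structured choice of $\bfrho$ could pile several forced colors onto the same triple and weaken the bound. The secondary difficulty is purely finite but still real: pinning down the threshold $n_0$ and supplying valid explicit constructions for every small $n$ in range.
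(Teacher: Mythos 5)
Your proposal follows essentially the same route as the paper's own proof: the same deterministic fixing of the stars around each $\rho_i$ plus independent probability-$1/2$ edges elsewhere, the same bad events with the same $2^{-(n-4)}$ bound (resting on the observation that only the single color $i$ with $\rho_i=k$ is fully pinned for a triple centred at $k$), the same $O(n)$ dependency count and symmetric LLL application (the paper makes the threshold explicit as $n\geq 11$, with $4n-10$ correlated triples), the same plan of explicit graphs for the finitely many small $n$, and the same derandomization via \citet{chandrasekaran2013deterministic}. The one small wrinkle is that the paper's explicit construction for $n=5$ uses $\bfrho=(3,2,3,1,1)$ rather than the cyclic shift, so your decision to fix $\bfrho$ as the cyclic shift ``once and for all'' would likely need to be relaxed when searching the smallest case.
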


\begin{proof}
    Let $n\in \NN$ with $n\geq 11$, $N=[n]$, and $\bfrho\in R_n$ defined as $\rho_i = i+_n 1$ for every $i\in [n]$. We show the existence of $G=(N,E)\in \calG_n$ satisfying Conditions~\ref{item:multigraph-monotonicity}-\ref{item:multigraph-paths} in the statement of the lemma via a probabilistic argument. More specifically, we will randomly define the additional edges of a multigraph satisfying Condition~\ref{item:multigraph-monotonicity} and show that with strictly positive probability the multigraph satisfies Condition~\ref{item:multigraph-paths} as well.

    We construct a multigraph with an algorithm that first fixes, for each $i$, the edges of color $i$ incident to the vertex $\rho_i$ as stated in Condition~\ref{item:multigraph-monotonicity}, and then draws every other edge with probability $1/2$, independently of any other event. This algorithm, which we call $\mathsf{RandomMultigraph}$ and takes $n$ and $\bfrho$ as input, is formally described as \autoref{alg:random-edges}.
    \begin{algorithm}[t]
    \SetAlgoNoLine
    \KwIn{$n\in \NN$ with $n\geq 11$, $\bfrho\in R_n$}
    \KwOut{multigraph $(N,E)\in \calG_n$}
    Let $N=[n]$\;
    \For{$i\in N$}{
        $E_i \xleftarrow{} \{\{\rho_i,j\}\colon j\in N\setminus \{i\},\ j<\rho_i\}$\;
        \For{$e \in 2^{N\setminus \{i,\rho_i\}}$ with $|e|=2$}{
            sample $\xi(i,e)\sim \text{Bernoulli}(1/2)$ independently of other samples\;
            \If{$\xi(i,e)=1$}{
                $E_i \xleftarrow{} E_i \cup \{e\}$
            }
        }
    }
    {\bf return} $(N,E)$
    \caption{$\mathsf{RandomMultigraph}(n,\bfrho)$}
    \label{alg:random-edges}
    \end{algorithm}
    We now let $G=(N,E)=\mathsf{RandomMultigraph}(n,\bfrho)$ and claim that $\bfrho$ and $G$ satisfy Conditions~\ref{item:multigraph-monotonicity}-\ref{item:multigraph-paths} in the statement of the lemma with strictly positive probability.
    
    The fact that $G\in \calG_n$ follows by construction, since for every $i\in N$ and $e\in E_i$ we have that $e\cap \{i\} =\emptyset$. Condition~\ref{item:multigraph-monotonicity} is satisfied by construction as well, since for every $i\in N$ the algorithm sets $\{(\rho_i,j)\colon j\in N\setminus \{i\},\ j<\rho_i\}$ as a subset of $E_i$ and no other edges containing $\rho_i$ are added to $E_i$.
    To see that Condition~\ref{item:multigraph-paths} is satisfied with positive probability, for each $k\in N$ and $j,\ell\in N\setminus \{k\}$ with $j<\ell$ we define the event
    \[
        A_{jk\ell} = \bigcap_{i\in N\setminus \{j,k,\ell\}} (|E_i \cap \{\{j,k\}, \{k,\ell\}\}| \not= 1),
    \]
    corresponding to the fact that Condition~\ref{item:multigraph-paths} fails for these values of $j$, $k$, and $\ell$. We observe that 
    for every $k\in N$ and every $j,\ell \in N\setminus \{k\}$ with $j< \ell$, we have that
    \begin{align}
        \PP[A_{jk\ell}] &  \leq \PP\left[ \bigcap_{i\in N\setminus \{j,k,\ell, k-_n 1\}} (|E_i \cap \{\{j,k\}, \{k,\ell\}\}| \not= 1) \right]\nonumber \\
        & = \prod_{i\in N\setminus \{j,k,\ell, k-_n 1\}} \PP\left[ |E_i \cap \{\{j,k\}, \{k,\ell\}\}| \not= 1 \right]\nonumber\\
        & = \frac{1}{2^{n-4}}.\label{eq:prob-intersection}
    \end{align}
    Indeed, the inequality holds since the event whose probability is computed on the left-hand side is a subset of that whose probability is computed on the right-hand side, and the first equality follows from the fact that $E_i$ is independently sampled for each $i\in N$. In order to show the last equality, we fix $k\in N$ and $j,\ell \in N\setminus \{k\}$ with $j< \ell$ arbitrarily, and first note that for a fixed $i\in N\setminus \{j,k,\ell, j-_n 1,k-_n 1, \ell-_n 1\}$ each of the sets $\{j,k\}$ and $\{k,\ell\}$ belong to $E_i$ if and only if the corresponding sample of $\xi$ is $1$, and $|E_i \cap \{\{j,k\}, \{k,\ell\}\}| \not= 1$ if and only if either $\{j,k\}\in E_i$ and $\{k,\ell\}\in E_i$, or $\{j,k\}\not\in E_i$ and $\{k,\ell\}\not\in E_i$. Therefore, for $i\in N\setminus \{j,k,\ell, j-_n 1,k-_n 1, \ell-_n 1\}$, the probability $\PP\left[ |E_i \cap \{\{j,k\}, \{k,\ell\}\}| \not= 1 \right]$ is simply
    \begin{align*}
        & \PP[ (\xi(i, \{j,k\}) = 1 \text{ and } \xi(i,\{k,\ell\}) = 1) \text{ or } (\xi(i,\{j,k\}) = 0 \text{ and } \xi(i, \{k,\ell\}) = 0)] \\
        &  = \PP[\xi(i, \{j,k\}) = 1] \PP[\xi(i, \{k,\ell\}) = 1] + \PP[\xi(i, \{j,k\}) = 0] \PP[ \xi(i, \{k,\ell\}) = 0] \\
        & = \frac{1}{2}\cdot \frac{1}{2} + \frac{1}{2}\cdot \frac{1}{2} = \frac{1}{2}.
    \end{align*}
    On the other hand, when $i=j-_n 1$ we have that $\{j,k\}\in E_{j-_n 1}$ if and only if $k<j$. Therefore, if $k<j$ we have that $|E_{j-_n 1} \cap \{\{j,k\}, \{k,\ell\}\}| \not= 1$ if and only if $\{k,\ell\} \in E_{j-_n 1}$ and thus
    \[
        \PP\left[ |E_{j-_n 1} \cap \{\{j,k\}, \{k,\ell\}\}| \not= 1 \right] = \PP[\xi(j-_n 1, \{k,\ell\})=1] = \frac{1}{2},
    \]
    whereas if $k>j$ we have that $|E_{j-_n 1} \cap \{\{j,k\}, \{k,\ell\}\}| \not= 1$ if and only if $\{k,\ell\}\not\in E_{j-_n 1}$, so 
    \[
        \PP\left[ |E_{j-_n 1} \cap \{\{j,k\}, \{k,\ell\}\}| \not= 1 \right] = \PP[\xi(j-_n 1, \{k,\ell\})=0] = \frac{1}{2}.
    \]
    The argument for $i=\ell-_n 1$ is completely analogous and~\eqref{eq:prob-intersection} follows.
    
    We now fix an arbitrary vertex $k\in[n]$ and vertices $j,\ell\in [n]\setminus [k]$ with $j<\ell$, and observe that the event $A_{jk\ell}$ only depends on the realization of $\xi(i,\{j,k\})$ and $\xi(i,\{k,\ell\})$ for every $i\in [n]\setminus \{j,k,\ell\}$. Therefore, for another arbitrary vertex $k'\in [n]$ and vertices $j',\ell'\in [n]\setminus \{k'\}$ with $j'<\ell'$ and such that $(j,k,\ell)\not=(j',k',\ell')$, the events $A_{jk\ell}$ and $A_{j'k'\ell'}$ are independent unless
    \[
        \{\{j,k\},\{k,\ell\}\} \cap \{\{j',k'\}, \{k',\ell'\}\} \not= \emptyset.
    \]
    A tuple $(j',k',\ell')\not= (j,k,\ell)$ with distinct elements and $j'<\ell'$ fulfills this inequality if either (a) $k'=k$, $j\in \{j',\ell'\}$, and $\ell\not\in \{j',\ell'\}$; (b) $k'=k$, $\ell \in \{j',\ell'\}$, and $j\not\in \{j',\ell'\}$; (c) $k'=j$ and $k\in \{j',\ell'\}$; or (d) $k'=\ell$ and $k\in \{j',\ell'\}$. 
    (a) and (b) are fulfilled for $n-3$ such tuples each; (c) and (d) are fulfilled for $n-2$ tuples each. We conclude that the number of tuples $(j',k',\ell')\not= (j,k,\ell)$ with distinct elements and $j'<\ell'$ such that the events $A_{j'k'\ell'}$ and $A_{jk\ell}$ are potentially correlated is $4n-10$.

    Condition~\ref{item:multigraph-paths} is satisfied with strictly positive probability if
    \[
        \PP\left[ \bigcap_{k\in [n]} \bigcap_{j,\ell \in [n]\setminus \{k\}\colon j<\ell} (\neg A_{jk\ell}) \right] > 0.
    \]
    By the Lov\'asz local lemma~\citep{erdos1975problems,spencer1977asymptotic}, along with~\eqref{eq:prob-intersection} and the aforementioned bound on the number of correlated events, this holds as long as 
    \begin{equation}
        h(n)\defas \frac{1}{2^{n-4}}e(4n-9) \leq 1.\label{eq:lll-condition}
    \end{equation}
    Observe that $h(11)=35e/128 \approx 0.743$ and that
    \[
        h'(n) = 16e\frac{4-(4n-9)\ln 2}{2^n},
    \]
    which is negative if $n\geq 1/\ln 2 + 9/4 \approx 3.69$. We conclude that~\eqref{eq:lll-condition}, and thus the 
    lemma, hold for all $n\geq 11$.

Multigraphs satisfying Conditions~\ref{item:multigraph-monotonicity}-\ref{item:multigraph-paths} for $n\in\NN$ with $5\leq n\leq 10$ are given in \autoref{fig:multigraphs-n-5-to-7} and \autoref{fig:multigraphs-n-8-to-10}, which completes the proof.
\end{proof}
\begin{figure}[tbp]
\centering
\begin{tikzpicture}
\draw[very thick,rounded corners=5pt] (-2,-2) rectangle (14.5,2) node[below left] {$n=5$};
\begin{scope}
\foreach \i/\f in {0/color1,1/color0,2/color0,3/color0,4/color0}{\draw (\i*360/5:1.2cm) node[circle,inner sep=0, minimum size=0.5cm,fill=\f](\i){\i};}
\draw[-,draw,ultra thick] (1) edge (3);
\draw[-,draw,ultra thick] (1) edge (4);
\draw[-,draw,ultra thick] (2) edge (3);
\end{scope}
\begin{scope}[xshift=3.2cm]
\foreach \i/\f in {0/color0,1/color1,2/color0,3/color0,4/color0}{\draw (\i*360/5:1.2cm) node[circle,inner sep=0, minimum size=0.5cm,fill=\f](\i){\i};}
\draw[-,draw,ultra thick] (0) edge (2);
\draw[-,draw,ultra thick] (0) edge (3);
\draw[-,draw,ultra thick] (3) edge (4);
\end{scope}
\begin{scope}[xshift=6.4cm]
\foreach \i/\f in {0/color0,1/color0,2/color1,3/color0,4/color0}{\draw (\i*360/5:1.2cm) node[circle,inner sep=0, minimum size=0.5cm,fill=\f](\i){\i};}
\draw[-,draw,ultra thick] (0) edge (3);
\draw[-,draw,ultra thick] (0) edge (4);
\draw[-,draw,ultra thick] (1) edge (3);
\end{scope}
\begin{scope}[xshift=9.6cm]
\foreach \i/\f in {0/color0,1/color0,2/color0,3/color1,4/color0}{\draw (\i*360/5:1.2cm) node[circle,inner sep=0, minimum size=0.5cm,fill=\f](\i){\i};}
\draw[-,draw,ultra thick] (0) edge (1);
\draw[-,draw,ultra thick] (0) edge (2);
\draw[-,draw,ultra thick] (2) edge (4);
\end{scope}
\begin{scope}[xshift=12.8cm]
\foreach \i/\f in {0/color0,1/color0,2/color0,3/color0,4/color1}{\draw (\i*360/5:1.2cm) node[circle,inner sep=0, minimum size=0.5cm,fill=\f](\i){\i};}
\draw[-,draw,ultra thick] (0) edge (1);
\draw[-,draw,ultra thick] (0) edge (3);
\draw[-,draw,ultra thick] (2) edge (3);
\end{scope}
\end{tikzpicture}
\begin{tikzpicture}
\draw[very thick,rounded corners=5pt] (-2,-5.5) rectangle (14.5,2) node[below left] {$n=6$};
\begin{scope}[xshift=2.5cm]
\foreach \i/\f in {0/color1,1/color0,2/color0,3/color0,4/color0,5/color0}{\draw (\i*360/6:1.2cm) node[circle,inner sep=0, minimum size=0.5cm,fill=\f](\i){\i};}
\draw[-,draw,ultra thick] (2) edge (3);
\draw[-,draw,ultra thick] (2) edge (5);
\draw[-,draw,ultra thick] (3) edge (4);
\end{scope}
\begin{scope}[xshift=6.5cm]
\foreach \i/\f in {0/color0,1/color1,2/color0,3/color0,4/color0,5/color0}{\draw (\i*360/6:1.2cm) node[circle,inner sep=0, minimum size=0.5cm,fill=\f](\i){\i};}
\draw[-,draw,ultra thick] (0) edge (2);
\draw[-,draw,ultra thick] (0) edge (4);
\draw[-,draw,ultra thick] (3) edge (5);
\end{scope}
\begin{scope}[xshift=10.5cm]
\foreach \i/\f in {0/color0,1/color0,2/color1,3/color0,4/color0,5/color0}{\draw (\i*360/6:1.2cm) node[circle,inner sep=0, minimum size=0.5cm,fill=\f](\i){\i};}
\draw[-,draw,ultra thick] (0) edge (3);
\draw[-,draw,ultra thick] (1) edge (3);
\draw[-,draw,ultra thick] (1) edge (5);
\end{scope}
\begin{scope}[xshift=2.5cm,yshift=-3.5cm]
\foreach \i/\f in {0/color0,1/color0,2/color0,3/color1,4/color0,5/color0}{\draw (\i*360/6:1.2cm) node[circle,inner sep=0, minimum size=0.5cm,fill=\f](\i){\i};}
\draw[-,draw,ultra thick] (0) edge (4);
\draw[-,draw,ultra thick] (0) edge (5);
\draw[-,draw,ultra thick] (1) edge (4);
\draw[-,draw,ultra thick] (2) edge (4);
\end{scope}
\begin{scope}[yshift=-3.5cm,xshift=6.5cm]
\foreach \i/\f in {0/color0,1/color0,2/color0,3/color0,4/color1,5/color0}{\draw (\i*360/6:1.2cm) node[circle,inner sep=0, minimum size=0.5cm,fill=\f](\i){\i};}
\draw[-,draw,ultra thick] (0) edge (1);
\draw[-,draw,ultra thick] (0) edge (3);
\draw[-,draw,ultra thick] (0) edge (5);
\draw[-,draw,ultra thick] (1) edge (5);
\draw[-,draw,ultra thick] (2) edge (5);
\draw[-,draw,ultra thick] (3) edge (5);
\end{scope}
\begin{scope}[yshift=-3.5cm,xshift=10.5cm]
\foreach \i/\f in {0/color0,1/color0,2/color0,3/color0,4/color0,5/color1}{\draw (\i*360/6:1.2cm) node[circle,inner sep=0, minimum size=0.5cm,fill=\f](\i){\i};}
\draw[-,draw,ultra thick] (1) edge (2);
\draw[-,draw,ultra thick] (1) edge (4);
\draw[-,draw,ultra thick] (2) edge (3);
\end{scope}
\end{tikzpicture}
\begin{tikzpicture}
\draw[very thick,rounded corners=5pt] (-2,-5.5) rectangle (14.5,2) node[below left] {$n=7$};
\begin{scope}[xshift=0.5cm]
\foreach \i/\f in {0/color1,1/color0,2/color0,3/color0,4/color0,5/color0,6/color0}{\draw (\i*360/7:1.2cm) node[circle,inner sep=0, minimum size=0.5cm,fill=\f](\i){\i};}
\draw[-,draw,ultra thick] (2) edge (3);
\draw[-,draw,ultra thick] (2) edge (6);
\draw[-,draw,ultra thick] (3) edge (6);
\draw[-,draw,ultra thick] (4) edge (5);
\end{scope}
\begin{scope}[xshift=4.5cm]
\foreach \i/\f in {0/color0,1/color1,2/color0,3/color0,4/color0,5/color0,6/color0}{\draw (\i*360/7:1.2cm) node[circle,inner sep=0, minimum size=0.5cm,fill=\f](\i){\i};}
\draw[-,draw,ultra thick] (0) edge (2);
\draw[-,draw,ultra thick] (0) edge (6);
\draw[-,draw,ultra thick] (3) edge (4);
\draw[-,draw,ultra thick] (3) edge (5);
\draw[-,draw,ultra thick] (3) edge (6);
\draw[-,draw,ultra thick] (4) edge (5);
\end{scope}
\begin{scope}[xshift=8.5cm]
\foreach \i/\f in {0/color0,1/color0,2/color1,3/color0,4/color0,5/color0,6/color0}{\draw (\i*360/7:1.2cm) node[circle,inner sep=0, minimum size=0.5cm,fill=\f](\i){\i};}
\draw[-,draw,ultra thick] (0) edge (3);
\draw[-,draw,ultra thick] (0) edge (5);
\draw[-,draw,ultra thick] (1) edge (3);
\draw[-,draw,ultra thick] (1) edge (4);
\draw[-,draw,ultra thick] (1) edge (5);
\draw[-,draw,ultra thick] (4) edge (6);
\end{scope}
\begin{scope}[xshift=12.5cm]
\foreach \i/\f in {0/color0,1/color0,2/color0,3/color1,4/color0,5/color0,6/color0}{\draw (\i*360/7:1.2cm) node[circle,inner sep=0, minimum size=0.5cm,fill=\f](\i){\i};}
\draw[-,draw,ultra thick] (0) edge (2);
\draw[-,draw,ultra thick] (0) edge (4);
\draw[-,draw,ultra thick] (1) edge (4);
\draw[-,draw,ultra thick] (1) edge (6);
\draw[-,draw,ultra thick] (2) edge (4);
\draw[-,draw,ultra thick] (2) edge (5);
\draw[-,draw,ultra thick] (2) edge (6);
\end{scope}
\begin{scope}[xshift=2.5cm,yshift=-3.5cm]
\foreach \i/\f in {0/color0,1/color0,2/color0,3/color0,4/color1,5/color0,6/color0}{\draw (\i*360/7:1.2cm) node[circle,inner sep=0, minimum size=0.5cm,fill=\f](\i){\i};}
\draw[-,draw,ultra thick] (0) edge (1);
\draw[-,draw,ultra thick] (0) edge (2);
\draw[-,draw,ultra thick] (0) edge (5);
\draw[-,draw,ultra thick] (1) edge (3);
\draw[-,draw,ultra thick] (1) edge (5);
\draw[-,draw,ultra thick] (1) edge (6);
\draw[-,draw,ultra thick] (2) edge (5);
\draw[-,draw,ultra thick] (3) edge (5);
\draw[-,draw,ultra thick] (3) edge (6);
\end{scope}
\begin{scope}[xshift=6.5cm,yshift=-3.5cm]
\foreach \i/\f in {0/color0,1/color0,2/color0,3/color0,4/color0,5/color1,6/color0}{\draw (\i*360/7:1.2cm) node[circle,inner sep=0, minimum size=0.5cm,fill=\f](\i){\i};}
\draw[-,draw,ultra thick] (0) edge (1);
\draw[-,draw,ultra thick] (0) edge (2);
\draw[-,draw,ultra thick] (0) edge (3);
\draw[-,draw,ultra thick] (0) edge (4);
\draw[-,draw,ultra thick] (0) edge (6);
\draw[-,draw,ultra thick] (1) edge (4);
\draw[-,draw,ultra thick] (1) edge (6);
\draw[-,draw,ultra thick] (2) edge (6);
\draw[-,draw,ultra thick] (3) edge (4);
\draw[-,draw,ultra thick] (3) edge (6);
\draw[-,draw,ultra thick] (4) edge (6);
\end{scope}
\begin{scope}[xshift=10.5cm,yshift=-3.5cm]
\foreach \i/\f in {0/color0,1/color0,2/color0,3/color0,4/color0,5/color0,6/color1}{\draw (\i*360/7:1.2cm) node[circle,inner sep=0, minimum size=0.5cm,fill=\f](\i){\i};}
\draw[-,draw,ultra thick] (1) edge (4);
\draw[-,draw,ultra thick] (1) edge (5);
\draw[-,draw,ultra thick] (2) edge (3);
\draw[-,draw,ultra thick] (2) edge (5);
\draw[-,draw,ultra thick] (3) edge (4);
\end{scope}
\end{tikzpicture}
\caption{Multigraphs satisfying the conditions stated in \autoref{lem:edge-coloring} for $n\in \NN$ with $n\in \{5,6,7\}$. For each $n$ in this range, the multigraph $G=(N,E)\in \calG_n$ is represented through $n$ graphs, each of them containing the edges $E_i$ for each $i\in[n]$, where $i$ is colored in light blue. Condition~\ref{item:multigraph-monotonicity} follows from checking that the neighbors of $\rho_i$ in the $i$th copy of each graph are exactly the vertices $j\in [n]\setminus \{i\}$ with $j<\rho_i$. Condition~\ref{item:multigraph-paths} is guaranteed from the fact that for every $n$ and every triple $(j,k,\ell)$ of distinct vertices in $[n]$, there is a copy for which exactly one edge of the path $(j,k,\ell)$ is drawn. For $n=5,~ \bfrho=(3,2,3,1,1)$; for $n\in \{6,7\},~ \rho_i=i+_n 1$ for each $i\in [n]$.}
\label{fig:multigraphs-n-5-to-7}
\end{figure}
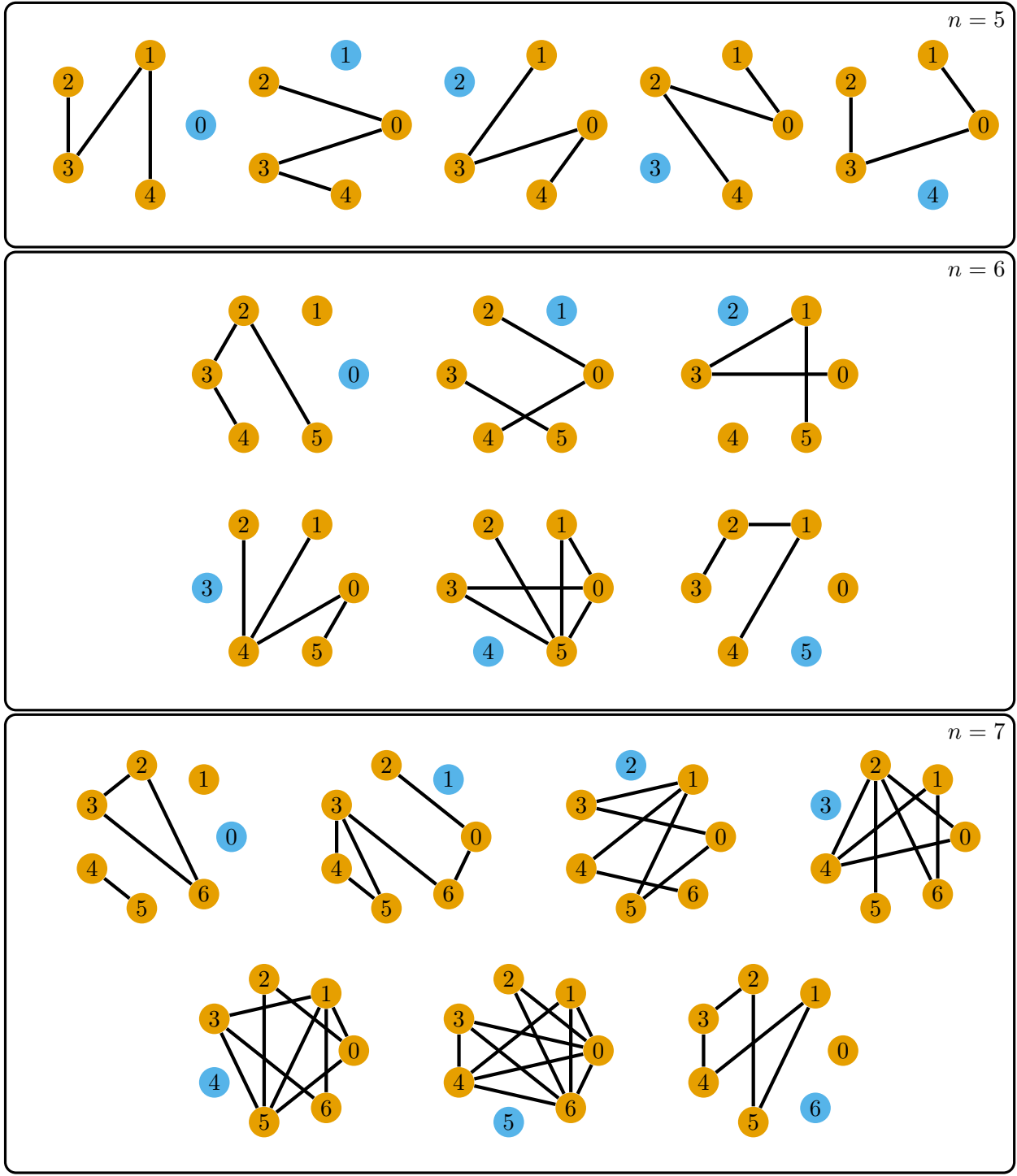
\begin{figure}[tbp]
\centering
\begin{tikzpicture}
\draw[very thick,rounded corners=5pt] (-2,-5) rectangle (14.5,2) node[below left] {$n=8$};
\begin{scope}[xshift=0.5cm]
\foreach \i/\f in {0/color1,1/color0,2/color0,3/color0,4/color0,5/color0,6/color0,7/color0}{\draw (\i*360/8:1.2cm) node[circle,inner sep=0, minimum size=0.5cm,fill=\f](\i){\i};}
\draw[-,draw,ultra thick] (2) edge (3);
\draw[-,draw,ultra thick] (2) edge (4);
\draw[-,draw,ultra thick] (3) edge (5);
\draw[-,draw,ultra thick] (3) edge (7);
\draw[-,draw,ultra thick] (4) edge (7);
\draw[-,draw,ultra thick] (5) edge (6);
\end{scope}
\begin{scope}[xshift=4.5cm]
\foreach \i/\f in {0/color0,1/color1,2/color0,3/color0,4/color0,5/color0,6/color0,7/color0}{\draw (\i*360/8:1.2cm) node[circle,inner sep=0, minimum size=0.5cm,fill=\f](\i){\i};}
\draw[-,draw,ultra thick] (0) edge (2);
\draw[-,draw,ultra thick] (0) edge (4);
\draw[-,draw,ultra thick] (3) edge (5);
\draw[-,draw,ultra thick] (3) edge (6);
\draw[-,draw,ultra thick] (3) edge (7);
\draw[-,draw,ultra thick] (4) edge (6);
\draw[-,draw,ultra thick] (5) edge (7);
\end{scope}
\begin{scope}[xshift=8.5cm]
\foreach \i/\f in {0/color0,1/color0,2/color1,3/color0,4/color0,5/color0,6/color0,7/color0}{\draw (\i*360/8:1.2cm) node[circle,inner sep=0, minimum size=0.5cm,fill=\f](\i){\i};}
\draw[-,draw,ultra thick] (0) edge (3);
\draw[-,draw,ultra thick] (0) edge (6);
\draw[-,draw,ultra thick] (1) edge (3);
\draw[-,draw,ultra thick] (1) edge (5);
\draw[-,draw,ultra thick] (1) edge (6);
\draw[-,draw,ultra thick] (4) edge (5);
\draw[-,draw,ultra thick] (4) edge (6);
\draw[-,draw,ultra thick] (6) edge (7);
\end{scope}
\begin{scope}[xshift=12.5cm]
\foreach \i/\f in {0/color0,1/color0,2/color0,3/color1,4/color0,5/color0,6/color0,7/color0}{\draw (\i*360/8:1.2cm) node[circle,inner sep=0, minimum size=0.5cm,fill=\f](\i){\i};}
\draw[-,draw,ultra thick] (0) edge (4);
\draw[-,draw,ultra thick] (1) edge (2);
\draw[-,draw,ultra thick] (1) edge (4);
\draw[-,draw,ultra thick] (1) edge (5);
\draw[-,draw,ultra thick] (1) edge (6);
\draw[-,draw,ultra thick] (2) edge (4);
\draw[-,draw,ultra thick] (2) edge (6);
\draw[-,draw,ultra thick] (2) edge (7);
\draw[-,draw,ultra thick] (5) edge (7);
\end{scope}
\begin{scope}[xshift=0.5cm,yshift=-3cm]
\foreach \i/\f in {0/color0,1/color0,2/color0,3/color0,4/color1,5/color0,6/color0,7/color0}{\draw (\i*360/8:1.2cm) node[circle,inner sep=0, minimum size=0.5cm,fill=\f](\i){\i};}
\draw[-,draw,ultra thick] (0) edge (2);
\draw[-,draw,ultra thick] (0) edge (5);
\draw[-,draw,ultra thick] (1) edge (3);
\draw[-,draw,ultra thick] (1) edge (5);
\draw[-,draw,ultra thick] (1) edge (7);
\draw[-,draw,ultra thick] (2) edge (5);
\draw[-,draw,ultra thick] (2) edge (7);
\draw[-,draw,ultra thick] (3) edge (5);
\end{scope}
\begin{scope}[xshift=4.5cm,yshift=-3cm]
\foreach \i/\f in {0/color0,1/color0,2/color0,3/color0,4/color0,5/color1,6/color0,7/color0}{\draw (\i*360/8:1.2cm) node[circle,inner sep=0, minimum size=0.5cm,fill=\f](\i){\i};}
\draw[-,draw,ultra thick] (0) edge (2);
\draw[-,draw,ultra thick] (0) edge (6);
\draw[-,draw,ultra thick] (1) edge (6);
\draw[-,draw,ultra thick] (2) edge (3);
\draw[-,draw,ultra thick] (2) edge (6);
\draw[-,draw,ultra thick] (3) edge (6);
\draw[-,draw,ultra thick] (4) edge (6);
\draw[-,draw,ultra thick] (4) edge (7);
\end{scope}
\begin{scope}[xshift=8.5cm,yshift=-3cm]
\foreach \i/\f in {0/color0,1/color0,2/color0,3/color0,4/color0,5/color0,6/color1,7/color0}{\draw (\i*360/8:1.2cm) node[circle,inner sep=0, minimum size=0.5cm,fill=\f](\i){\i};}
\draw[-,draw,ultra thick] (0) edge (2);
\draw[-,draw,ultra thick] (0) edge (4);
\draw[-,draw,ultra thick] (0) edge (7);
\draw[-,draw,ultra thick] (1) edge (4);
\draw[-,draw,ultra thick] (1) edge (5);
\draw[-,draw,ultra thick] (1) edge (7);
\draw[-,draw,ultra thick] (2) edge (3);
\draw[-,draw,ultra thick] (2) edge (5);
\draw[-,draw,ultra thick] (2) edge (7);
\draw[-,draw,ultra thick] (3) edge (7);
\draw[-,draw,ultra thick] (4) edge (7);
\draw[-,draw,ultra thick] (5) edge (7);
\end{scope}
\begin{scope}[xshift=12.5cm,yshift=-3cm]
\foreach \i/\f in {0/color0,1/color0,2/color0,3/color0,4/color0,5/color0,6/color0,7/color1}{\draw (\i*360/8:1.2cm) node[circle,inner sep=0, minimum size=0.5cm,fill=\f](\i){\i};}
\draw[-,draw,ultra thick] (1) edge (2);
\draw[-,draw,ultra thick] (1) edge (3);
\draw[-,draw,ultra thick] (1) edge (4);
\draw[-,draw,ultra thick] (1) edge (5);
\draw[-,draw,ultra thick] (1) edge (6);
\draw[-,draw,ultra thick] (2) edge (4);
\draw[-,draw,ultra thick] (3) edge (4);
\draw[-,draw,ultra thick] (3) edge (5);
\draw[-,draw,ultra thick] (4) edge (5);
\end{scope}
\end{tikzpicture}
\begin{tikzpicture}
\draw[very thick,rounded corners=5pt] (-2,-5) rectangle (14.5,2) node[below left] {$n=9$};
\begin{scope}
\foreach \i/\f in {0/color1,1/color0,2/color0,3/color0,4/color0,5/color0,6/color0,7/color0,8/color0}{\draw (\i*360/9:1.2cm) node[circle,inner sep=0, minimum size=0.5cm,fill=\f](\i){\i};}
\draw[-,draw,ultra thick] (2) edge (4);
\draw[-,draw,ultra thick] (2) edge (5);
\draw[-,draw,ultra thick] (2) edge (7);
\draw[-,draw,ultra thick] (3) edge (5);
\draw[-,draw,ultra thick] (3) edge (7);
\draw[-,draw,ultra thick] (4) edge (5);
\draw[-,draw,ultra thick] (4) edge (8);
\draw[-,draw,ultra thick] (6) edge (8);
\draw[-,draw,ultra thick] (7) edge (8);
\end{scope}
\begin{scope}[xshift=3.2cm]
\foreach \i/\f in {0/color0,1/color1,2/color0,3/color0,4/color0,5/color0,6/color0,7/color0,8/color0}{\draw (\i*360/9:1.2cm) node[circle,inner sep=0, minimum size=0.5cm,fill=\f](\i){\i};}
\draw[-,draw,ultra thick] (0) edge (2);
\draw[-,draw,ultra thick] (0) edge (3);
\draw[-,draw,ultra thick] (0) edge (6);
\draw[-,draw,ultra thick] (0) edge (8);
\draw[-,draw,ultra thick] (3) edge (4);
\draw[-,draw,ultra thick] (3) edge (6);
\draw[-,draw,ultra thick] (3) edge (8);
\end{scope}
\begin{scope}[xshift=6.4cm]
\foreach \i/\f in {0/color0,1/color0,2/color1,3/color0,4/color0,5/color0,6/color0,7/color0,8/color0}{\draw (\i*360/9:1.2cm) node[circle,inner sep=0, minimum size=0.5cm,fill=\f](\i){\i};}
\draw[-,draw,ultra thick] (0) edge (3);
\draw[-,draw,ultra thick] (0) edge (4);
\draw[-,draw,ultra thick] (1) edge (3);
\draw[-,draw,ultra thick] (1) edge (4);
\draw[-,draw,ultra thick] (4) edge (5);
\draw[-,draw,ultra thick] (4) edge (8);
\draw[-,draw,ultra thick] (5) edge (6);
\draw[-,draw,ultra thick] (5) edge (8);
\draw[-,draw,ultra thick] (6) edge (8);
\end{scope}
\begin{scope}[xshift=9.6cm]
\foreach \i/\f in {0/color0,1/color0,2/color0,3/color1,4/color0,5/color0,6/color0,7/color0,8/color0}{\draw (\i*360/9:1.2cm) node[circle,inner sep=0, minimum size=0.5cm,fill=\f](\i){\i};}
\draw[-,draw,ultra thick] (0) edge (1);
\draw[-,draw,ultra thick] (0) edge (4);
\draw[-,draw,ultra thick] (0) edge (5);
\draw[-,draw,ultra thick] (0) edge (8);
\draw[-,draw,ultra thick] (1) edge (4);
\draw[-,draw,ultra thick] (1) edge (7);
\draw[-,draw,ultra thick] (2) edge (4);
\draw[-,draw,ultra thick] (2) edge (8);
\draw[-,draw,ultra thick] (5) edge (7);
\draw[-,draw,ultra thick] (6) edge (8);
\draw[-,draw,ultra thick] (7) edge (8);
\end{scope}
\begin{scope}[xshift=12.8cm]
\foreach \i/\f in {0/color0,1/color0,2/color0,3/color0,4/color1,5/color0,6/color0,7/color0,8/color0}{\draw (\i*360/9:1.2cm) node[circle,inner sep=0, minimum size=0.5cm,fill=\f](\i){\i};}
\draw[-,draw,ultra thick] (0) edge (1);
\draw[-,draw,ultra thick] (0) edge (3);
\draw[-,draw,ultra thick] (0) edge (5);
\draw[-,draw,ultra thick] (0) edge (6);
\draw[-,draw,ultra thick] (1) edge (2);
\draw[-,draw,ultra thick] (1) edge (5);
\draw[-,draw,ultra thick] (2) edge (5);
\draw[-,draw,ultra thick] (2) edge (6);
\draw[-,draw,ultra thick] (2) edge (7);
\draw[-,draw,ultra thick] (3) edge (5);
\draw[-,draw,ultra thick] (3) edge (7);
\draw[-,draw,ultra thick] (3) edge (8);
\draw[-,draw,ultra thick] (6) edge (7);
\draw[-,draw,ultra thick] (6) edge (8);
\end{scope}
\begin{scope}[xshift=0.5cm,yshift=-3cm]
\foreach \i/\f in {0/color0,1/color0,2/color0,3/color0,4/color0,5/color1,6/color0,7/color0,8/color0}{\draw (\i*360/9:1.2cm) node[circle,inner sep=0, minimum size=0.5cm,fill=\f](\i){\i};}
\draw[-,draw,ultra thick] (0) edge (1);
\draw[-,draw,ultra thick] (0) edge (4);
\draw[-,draw,ultra thick] (0) edge (6);
\draw[-,draw,ultra thick] (1) edge (4);
\draw[-,draw,ultra thick] (1) edge (6);
\draw[-,draw,ultra thick] (1) edge (8);
\draw[-,draw,ultra thick] (2) edge (6);
\draw[-,draw,ultra thick] (2) edge (8);
\draw[-,draw,ultra thick] (3) edge (6);
\draw[-,draw,ultra thick] (3) edge (7);
\draw[-,draw,ultra thick] (4) edge (6);
\draw[-,draw,ultra thick] (4) edge (8);
\end{scope}
\begin{scope}[xshift=4.5cm,yshift=-3cm]
\foreach \i/\f in {0/color0,1/color0,2/color0,3/color0,4/color0,5/color0,6/color1,7/color0,8/color0}{\draw (\i*360/9:1.2cm) node[circle,inner sep=0, minimum size=0.5cm,fill=\f](\i){\i};}
\draw[-,draw,ultra thick] (0) edge (1);
\draw[-,draw,ultra thick] (0) edge (4);
\draw[-,draw,ultra thick] (0) edge (7);
\draw[-,draw,ultra thick] (0) edge (8);
\draw[-,draw,ultra thick] (1) edge (4);
\draw[-,draw,ultra thick] (1) edge (7);
\draw[-,draw,ultra thick] (2) edge (3);
\draw[-,draw,ultra thick] (2) edge (7);
\draw[-,draw,ultra thick] (3) edge (4);
\draw[-,draw,ultra thick] (3) edge (5);
\draw[-,draw,ultra thick] (3) edge (7);
\draw[-,draw,ultra thick] (4) edge (7);
\draw[-,draw,ultra thick] (5) edge (7);
\end{scope}
\begin{scope}[xshift=8.5cm,yshift=-3cm]
\foreach \i/\f in {0/color0,1/color0,2/color0,3/color0,4/color0,5/color0,6/color0,7/color1,8/color0}{\draw (\i*360/9:1.2cm) node[circle,inner sep=0, minimum size=0.5cm,fill=\f](\i){\i};}
\draw[-,draw,ultra thick] (0) edge (2);
\draw[-,draw,ultra thick] (0) edge (5);
\draw[-,draw,ultra thick] (0) edge (8);
\draw[-,draw,ultra thick] (1) edge (3);
\draw[-,draw,ultra thick] (1) edge (4);
\draw[-,draw,ultra thick] (1) edge (8);
\draw[-,draw,ultra thick] (2) edge (5);
\draw[-,draw,ultra thick] (2) edge (8);
\draw[-,draw,ultra thick] (3) edge (5);
\draw[-,draw,ultra thick] (3) edge (6);
\draw[-,draw,ultra thick] (3) edge (8);
\draw[-,draw,ultra thick] (4) edge (8);
\draw[-,draw,ultra thick] (5) edge (8);
\draw[-,draw,ultra thick] (6) edge (8);
\end{scope}
\begin{scope}[xshift=12.5cm,yshift=-3cm]
\foreach \i/\f in {0/color0,1/color0,2/color0,3/color0,4/color0,5/color0,6/color0,7/color0,8/color1}{\draw (\i*360/9:1.2cm) node[circle,inner sep=0, minimum size=0.5cm,fill=\f](\i){\i};}
\draw[-,draw,ultra thick] (1) edge (5);
\draw[-,draw,ultra thick] (1) edge (6);
\draw[-,draw,ultra thick] (2) edge (3);
\draw[-,draw,ultra thick] (2) edge (6);
\draw[-,draw,ultra thick] (3) edge (6);
\draw[-,draw,ultra thick] (3) edge (7);
\draw[-,draw,ultra thick] (4) edge (7);
\draw[-,draw,ultra thick] (5) edge (6);
\draw[-,draw,ultra thick] (5) edge (7);
\end{scope}
\end{tikzpicture}
\begin{tikzpicture}
\draw[very thick,rounded corners=5pt] (-2,-5) rectangle (14.5,2) node[below left] {$n=10$};
\begin{scope}
\foreach \i/\f in {0/color1,1/color0,2/color0,3/color0,4/color0,5/color0,6/color0,7/color0,8/color0,9/color0}{\draw (\i*360/10:1.2cm) node[circle,inner sep=0, minimum size=0.5cm,fill=\f](\i){\i};}
\draw[-,draw,ultra thick] (2) edge (3);
\draw[-,draw,ultra thick] (2) edge (7);
\draw[-,draw,ultra thick] (3) edge (4);
\draw[-,draw,ultra thick] (3) edge (9);
\draw[-,draw,ultra thick] (4) edge (8);
\draw[-,draw,ultra thick] (4) edge (9);
\draw[-,draw,ultra thick] (5) edge (8);
\draw[-,draw,ultra thick] (5) edge (9);
\draw[-,draw,ultra thick] (7) edge (8);
\end{scope}
\begin{scope}[xshift=3.2cm]
\foreach \i/\f in {0/color0,1/color1,2/color0,3/color0,4/color0,5/color0,6/color0,7/color0,8/color0,9/color0}{\draw (\i*360/10:1.2cm) node[circle,inner sep=0, minimum size=0.5cm,fill=\f](\i){\i};}
\draw[-,draw,ultra thick] (0) edge (2);
\draw[-,draw,ultra thick] (0) edge (4);
\draw[-,draw,ultra thick] (3) edge (9);
\draw[-,draw,ultra thick] (4) edge (7);
\draw[-,draw,ultra thick] (8) edge (9);
\end{scope}
\begin{scope}[xshift=6.4cm]
\foreach \i/\f in {0/color0,1/color0,2/color1,3/color0,4/color0,5/color0,6/color0,7/color0,8/color0,9/color0}{\draw (\i*360/10:1.2cm) node[circle,inner sep=0, minimum size=0.5cm,fill=\f](\i){\i};}
\draw[-,draw,ultra thick] (0) edge (3);
\draw[-,draw,ultra thick] (0) edge (5);
\draw[-,draw,ultra thick] (0) edge (7);
\draw[-,draw,ultra thick] (0) edge (9);
\draw[-,draw,ultra thick] (1) edge (3);
\draw[-,draw,ultra thick] (1) edge (7);
\draw[-,draw,ultra thick] (1) edge (8);
\draw[-,draw,ultra thick] (5) edge (7);
\draw[-,draw,ultra thick] (5) edge (8);
\draw[-,draw,ultra thick] (5) edge (9);
\draw[-,draw,ultra thick] (6) edge (9);
\draw[-,draw,ultra thick] (7) edge (8);
\draw[-,draw,ultra thick] (7) edge (9);
\draw[-,draw,ultra thick] (8) edge (9);
\end{scope}
\begin{scope}[xshift=9.6cm]
\foreach \i/\f in {0/color0,1/color0,2/color0,3/color1,4/color0,5/color0,6/color0,7/color0,8/color0,9/color0}{\draw (\i*360/10:1.2cm) node[circle,inner sep=0, minimum size=0.5cm,fill=\f](\i){\i};}
\draw[-,draw,ultra thick] (0) edge (2);
\draw[-,draw,ultra thick] (0) edge (4);
\draw[-,draw,ultra thick] (0) edge (7);
\draw[-,draw,ultra thick] (1) edge (4);
\draw[-,draw,ultra thick] (1) edge (6);
\draw[-,draw,ultra thick] (1) edge (9);
\draw[-,draw,ultra thick] (2) edge (4);
\draw[-,draw,ultra thick] (2) edge (5);
\draw[-,draw,ultra thick] (2) edge (6);
\draw[-,draw,ultra thick] (2) edge (8);
\draw[-,draw,ultra thick] (5) edge (7);
\draw[-,draw,ultra thick] (6) edge (8);
\draw[-,draw,ultra thick] (7) edge (8);
\draw[-,draw,ultra thick] (8) edge (9);
\end{scope}
\begin{scope}[xshift=12.8cm]
\foreach \i/\f in {0/color0,1/color0,2/color0,3/color0,4/color1,5/color0,6/color0,7/color0,8/color0,9/color0}{\draw (\i*360/10:1.2cm) node[circle,inner sep=0, minimum size=0.5cm,fill=\f](\i){\i};}
\draw[-,draw,ultra thick] (0) edge (2);
\draw[-,draw,ultra thick] (0) edge (5);
\draw[-,draw,ultra thick] (0) edge (6);
\draw[-,draw,ultra thick] (0) edge (9);
\draw[-,draw,ultra thick] (1) edge (3);
\draw[-,draw,ultra thick] (1) edge (5);
\draw[-,draw,ultra thick] (1) edge (8);
\draw[-,draw,ultra thick] (2) edge (3);
\draw[-,draw,ultra thick] (2) edge (5);
\draw[-,draw,ultra thick] (2) edge (9);
\draw[-,draw,ultra thick] (3) edge (5);
\draw[-,draw,ultra thick] (6) edge (7);
\end{scope}
\begin{scope}[yshift=-3cm]
\foreach \i/\f in {0/color0,1/color0,2/color0,3/color0,4/color0,5/color1,6/color0,7/color0,8/color0,9/color0}{\draw (\i*360/10:1.2cm) node[circle,inner sep=0, minimum size=0.5cm,fill=\f](\i){\i};}
\draw[-,draw,ultra thick] (0) edge (6);
\draw[-,draw,ultra thick] (1) edge (6);
\draw[-,draw,ultra thick] (2) edge (6);
\draw[-,draw,ultra thick] (2) edge (8);
\draw[-,draw,ultra thick] (3) edge (4);
\draw[-,draw,ultra thick] (3) edge (6);
\draw[-,draw,ultra thick] (4) edge (6);
\draw[-,draw,ultra thick] (7) edge (9);
\end{scope}
\begin{scope}[xshift=3.2cm,yshift=-3cm]
\foreach \i/\f in {0/color0,1/color0,2/color0,3/color0,4/color0,5/color0,6/color1,7/color0,8/color0,9/color0}{\draw (\i*360/10:1.2cm) node[circle,inner sep=0, minimum size=0.5cm,fill=\f](\i){\i};}
\draw[-,draw,ultra thick] (0) edge (5);
\draw[-,draw,ultra thick] (0) edge (7);
\draw[-,draw,ultra thick] (1) edge (5);
\draw[-,draw,ultra thick] (1) edge (7);
\draw[-,draw,ultra thick] (2) edge (7);
\draw[-,draw,ultra thick] (3) edge (5);
\draw[-,draw,ultra thick] (3) edge (7);
\draw[-,draw,ultra thick] (3) edge (8);
\draw[-,draw,ultra thick] (4) edge (5);
\draw[-,draw,ultra thick] (4) edge (7);
\draw[-,draw,ultra thick] (4) edge (8);
\draw[-,draw,ultra thick] (4) edge (9);
\draw[-,draw,ultra thick] (5) edge (7);
\draw[-,draw,ultra thick] (5) edge (9);
\end{scope}
\begin{scope}[xshift=6.4cm,yshift=-3cm]
\foreach \i/\f in {0/color0,1/color0,2/color0,3/color0,4/color0,5/color0,6/color0,7/color1,8/color0,9/color0}{\draw (\i*360/10:1.2cm) node[circle,inner sep=0, minimum size=0.5cm,fill=\f](\i){\i};}
\draw[-,draw,ultra thick] (0) edge (2);
\draw[-,draw,ultra thick] (0) edge (8);
\draw[-,draw,ultra thick] (0) edge (9);
\draw[-,draw,ultra thick] (1) edge (6);
\draw[-,draw,ultra thick] (1) edge (8);
\draw[-,draw,ultra thick] (2) edge (8);
\draw[-,draw,ultra thick] (3) edge (6);
\draw[-,draw,ultra thick] (3) edge (8);
\draw[-,draw,ultra thick] (4) edge (6);
\draw[-,draw,ultra thick] (4) edge (8);
\draw[-,draw,ultra thick] (5) edge (8);
\draw[-,draw,ultra thick] (6) edge (8);
\end{scope}
\begin{scope}[xshift=9.6cm,yshift=-3cm]
\foreach \i/\f in {0/color0,1/color0,2/color0,3/color0,4/color0,5/color0,6/color0,7/color0,8/color1,9/color0}{\draw (\i*360/10:1.2cm) node[circle,inner sep=0, minimum size=0.5cm,fill=\f](\i){\i};}
\draw[-,draw,ultra thick] (0) edge (1);
\draw[-,draw,ultra thick] (0) edge (9);
\draw[-,draw,ultra thick] (1) edge (3);
\draw[-,draw,ultra thick] (1) edge (5);
\draw[-,draw,ultra thick] (1) edge (9);
\draw[-,draw,ultra thick] (2) edge (4);
\draw[-,draw,ultra thick] (2) edge (6);
\draw[-,draw,ultra thick] (2) edge (9);
\draw[-,draw,ultra thick] (3) edge (4);
\draw[-,draw,ultra thick] (3) edge (6);
\draw[-,draw,ultra thick] (3) edge (9);
\draw[-,draw,ultra thick] (4) edge (5);
\draw[-,draw,ultra thick] (4) edge (9);
\draw[-,draw,ultra thick] (5) edge (9);
\draw[-,draw,ultra thick] (6) edge (9);
\draw[-,draw,ultra thick] (7) edge (9);
\end{scope}
\begin{scope}[xshift=12.8cm,yshift=-3cm]
\foreach \i/\f in {0/color0,1/color0,2/color0,3/color0,4/color0,5/color0,6/color0,7/color0,8/color0,9/color1}{\draw (\i*360/10:1.2cm) node[circle,inner sep=0, minimum size=0.5cm,fill=\f](\i){\i};}
\draw[-,draw,ultra thick] (1) edge (4);
\draw[-,draw,ultra thick] (1) edge (5);
\draw[-,draw,ultra thick] (1) edge (7);
\draw[-,draw,ultra thick] (2) edge (7);
\draw[-,draw,ultra thick] (3) edge (5);
\draw[-,draw,ultra thick] (3) edge (8);
\draw[-,draw,ultra thick] (4) edge (6);
\draw[-,draw,ultra thick] (4) edge (8);
\draw[-,draw,ultra thick] (5) edge (7);
\end{scope}
\end{tikzpicture}
\caption{Multigraphs satisfying the conditions stated in \autoref{lem:edge-coloring} for $n\in \NN$ with $n\in \{8,9,10\}$. The notation is the same as that used for \autoref{fig:multigraphs-n-5-to-7}. For each $n\in \{8,9,10\},~\bfrho\in R_n$ is given by $\rho_i=i+_n 1$ for each $i\in [n]$.}
\label{fig:multigraphs-n-8-to-10}
\end{figure}
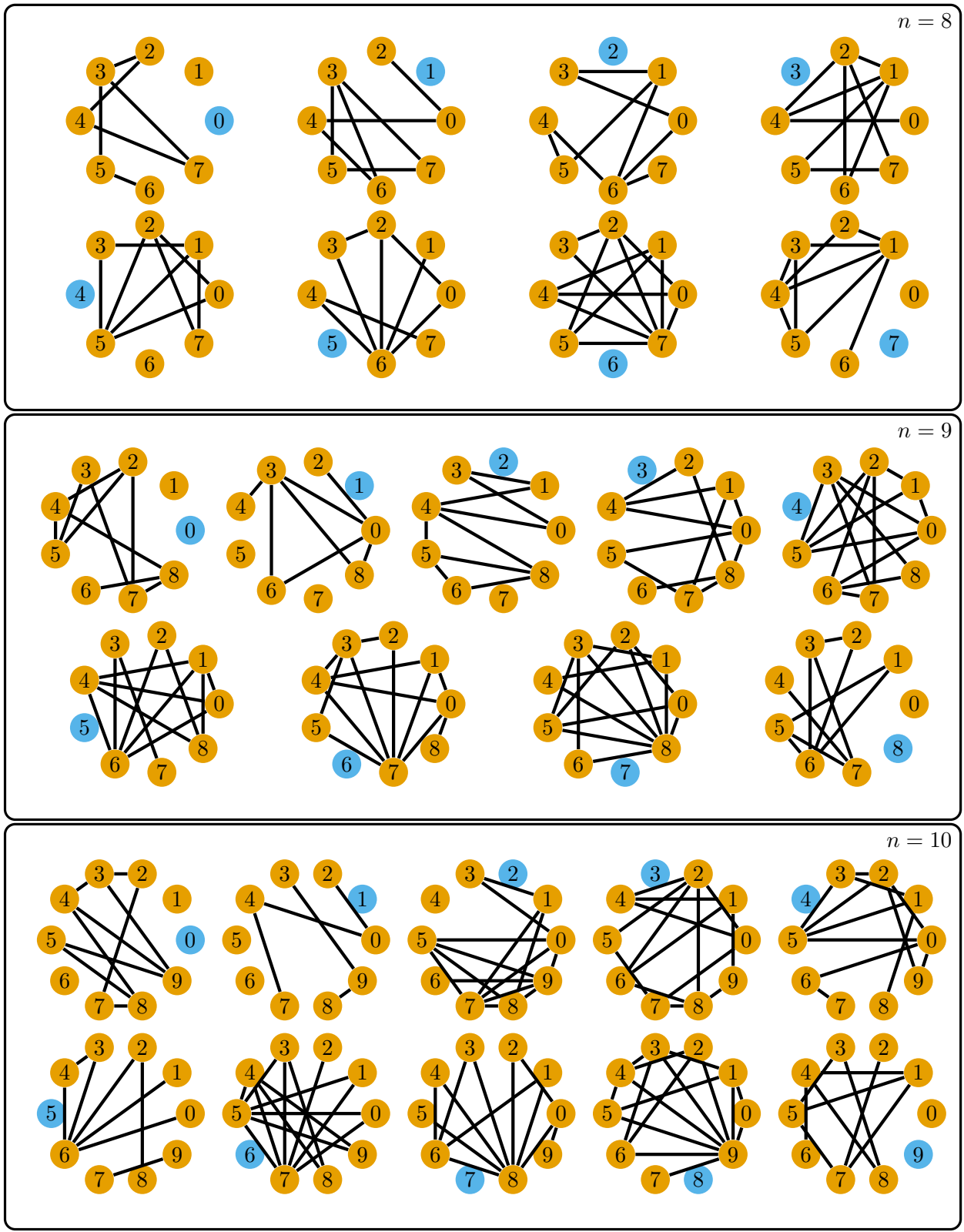

A multigraph satisfying the conditions of \autoref{lem:edge-coloring} can be found in polynomial time. This follows from the work by \citet{chandrasekaran2013deterministic} because the total number of events, the number of variables determining each event, and the size of the domain of each random variable are all polynomial, and because~\eqref{eq:lll-condition} can be strengthened to
\[
    \frac{1}{2^{n-4}}(e(4n-9))^{1+\varepsilon} \leq 1
\]
for some constant $\varepsilon>0$. Indeed, the number of events is equal to the number $n(n-1)(n-2)/2$ of triples of vertices, each event is determined by the $2(n-3)$ colored edges that may or not be drawn for a given triple, and all random variables have domain $\{0,1\}$. When $\epsilon=1/100$, the stronger inequality still holds for $n\geq 11$.

We are now ready to prove \autoref{lem:existence-blocking-sets}.
\begin{proof}[Proof of \autoref{lem:existence-blocking-sets}]

Consider $n\in \NN$ with $n\geq 5$, and let $\bfrho\in R_n$ and $G=(N,E)\in \calG(n)$ be such that Conditions~\ref{item:multigraph-monotonicity}-\ref{item:multigraph-paths} in the statement of \autoref{lem:edge-coloring} are satisfied, whose existence is guaranteed by this lemma. We claim the result for $\bfrho$ and sets $S^b_{ij}$ for each $i,j\in [n]$ with $i\not=j$ and $b\in \{0,1\}$ defined as
\begin{equation}
    S^0_{ij} = N_i(j,G) \text{ and } S^1_{ij} = ([n]\setminus \{i,j\})\setminus  N_i(j,G) \quad  \text{for every } i,j\in [n] \text{ with }i\not=j.\label{eq:def-blocking-sets}
\end{equation}
We show in the following Properties~\ref{item:ifr}-\ref{item:mon} in the statement of the lemma.

Property~\ref{item:ifr} follows directly from the definition of the blocking sets in~\eqref{eq:def-blocking-sets}, together with the fact that $N_i(j,G) \cap \{i,j\} = \emptyset$ for every $i,j\in [n]$ with $i\not=j$ from the definition of $\calG_n$.

To see Property~\ref{item:one-pos-per-agent}, we suppose for the sake of contradiction that there exists $j\in [n]$ and $\bfb\in \{0,1\}^n$ such that the set $\bigcup_{i\in [n]\setminus \{j\}} S^{b_i}_{ij}$ has size different from $n-2$ and $n-1$. Since Property~\ref{item:ifr} implies that $j$ does not belong to this set, we must have that $\big| \bigcup_{i\in [n]\setminus \{j\}} S^{b_i}_{ij} \big| \leq n-3$. Therefore, there are values $k,\ell\in [n]\setminus \{j\}$ with $k\not=\ell$ such that
\begin{equation}
    S^{b_i}_{ij} \cap \{k,\ell\} = \emptyset \text{ for every } i\in [n]\setminus \{j\}.\label{eq:implication-more-pos-per-agent}
\end{equation}
However, by Condition~\ref{item:multigraph-paths} in \autoref{lem:edge-coloring} we have that there is $i\in [n]\setminus \{j,k,\ell\}$ such that $|N_i(j,G) \cap \{k,\ell\}| = 1$. Using the definition of the blocking sets in~\eqref{eq:def-blocking-sets}, this yields $|S^0_{ij} \cap \{k,\ell\}| = |S^1_{ij} \cap \{k,\ell\}|=1$, a contradiction to~\eqref{eq:implication-more-pos-per-agent}.

To see Property \ref{item:one-agent-per-pos-def}, we first note that from the definition of the blocking sets in~\eqref{eq:def-blocking-sets} we have that
\begin{equation}
    \text{ for every } i\in [n], b\in \{0,1\}, \text{ and }j,k\in [n]\setminus \{i\} \text{ with }j\not = k\colon k\in S^b_{ij} \Longleftrightarrow j\in S^b_{ik}.\label{eq:symmetry-blocking-sets}
\end{equation}
Let $j\in [n]$ and $\bfb\in \{0,1\}^n$ be such that $\bigcup_{i\in [n]\setminus \{j\}} S^{b_i}_{ij} = [n]\setminus \{j\}$, and fix $j'\in [n]\setminus \{j\}$ arbitrarily. Since $j'\in \bigcup_{i\in [n]\setminus \{j\}} S^{b_i}_{ij}$ and $j'\not\in S^{b_{j'}}_{j'j}$ from the definition of $\calG_n$, there exists $i'\in [n]\setminus \{j,j'\}$ such that $j'\in  S^{b_{i'}}_{i'j}$. By~\eqref{eq:symmetry-blocking-sets}, this implies that $j\in  S^{b_{i'}}_{i'j'}$, thus $j\in \bigcup_{i\in [n]\setminus \{j'\}} S^{b_{i}}_{ij'}$.

To prove Property~\ref{item:one-agent-per-pos-non-def}, let $j, k\in [n]$ with $j\not= k$ and $\bfb\in \{0,1\}^n$ be such that $\bigcup_{i\in [n]\setminus \{j\}} S^{b_i}_{ij} = [n]\setminus \{j,k\}$. This implies that $k\not\in S^{b_i}_{ij}$ for every $i\in [n]\setminus \{j,k\}$. Take now $j'\in [n]\setminus \{j,k\}$ arbitrarily and suppose, for the sake of contradiction, that for every $i\in [n]\setminus \{j',k\}$ we also have that $k\not\in S^{b_i}_{ij'}$. These two properties imply that, for every $i\in [n]\setminus \{j,j',k\}$ we have that $k\not\in S^{b_i}_{ij} \cup S^{b_i}_{ij'}$. Using the definition of the blocking sets in~\eqref{eq:def-blocking-sets}, this implies that either $k\in N_i(j,G) \cap N_i(j',G)$ or that $k\not\in N_i(j,G) \cup N_i(j',G)$ for every $i\in [n]\setminus \{j,j',k\}$. This is equivalent to
\begin{equation}
    |N_i(k,G) \cap \{j,j'\}| \in \{0,2\} \text{ for every } i\in [n]\setminus \{j,j',k\},
\end{equation}
a contradiction to Condition~\ref{item:multigraph-paths} in \autoref{lem:edge-coloring}. We conclude that $k\in \bigcup_{i\in [n]\setminus \{j'\}} S^{b_{i}}_{ij'}$.

Finally, Property~\ref{item:mon} follows directly from the definition of the blocking sets in~\eqref{eq:def-blocking-sets} and Condition~\ref{item:multigraph-monotonicity} in \autoref{lem:edge-coloring}.

\end{proof}

\subsection{Proof of \autoref{thm:monotonicity}}
\label{app:pf-thm-monotonicity}

Let $n\in\NN$ with $n\geq 4$. If $n=4$ the result follows directly from \autoref{lem:suf-conds-g} and \autoref{lem:existence-g-n4}, so we assume $n\geq 5$ henceforth.
    
Consider $\bfrho\in R_n$ and sets $S^b_{ij}$, for each $i,j\in[n]$ with $i\not=j$ and $b\in \{0,1\}$, satisfying \autoref{lem:existence-blocking-sets}. For a given message profile $\bfb\in\{0,1\}^n$, let $A_j(\bfb_{-j})$ be the set of positions available for agent~$j$, in addition to its default position $j$, \ie 
\begin{equation}
    \textstyle A_j(\bfb_{-j}) = ([n]\setminus \{j\}) \setminus \bigcup_{i\in [n]\setminus \{j\}} S^{b_i}_{ij}.\label{eq:available-sets}
\end{equation}
Then, by Conditions~\ref{item:ifr} and~\ref{item:one-pos-per-agent} of \autoref{lem:existence-blocking-sets}, for every $j\in[n]$ and $\bfb\in \{0,1\}^n$, either $A_j(\bfb_{-j})=\{k\}$ for some $k\neq j$, or $A_j(\bfb_{-j})=\emptyset$. Define $g\colon\{0,1\}^n\to\calP_n$ such that for all $j\in[n]$ and $\bfb\in \{0,1\}^n$,
\begin{equation}
    (g(\bfb))^{-1}(j) = \begin{cases} 
        k & \text{ if } A_j(\bfb_{-j}) = \{k\},\\
        j & \text{ if } A_j(\bfb_{-j}) = \emptyset.
    \end{cases} \label{eq:def-g}
\end{equation}
We now claim the result for $f_{g,\bfrho}$. By \autoref{lem:suf-conds-g}, it suffices to show that~$g$ is well-defined, in the sense that $(g(\bfb))^{-1}$ is a bijection from $[n]$ to $[n]$ for every $\bfb\in \{0,1\}^n$, %
and that~$g$ and~$\bfrho$ satisfy Conditions~\ref{item:g-impartial}-\ref{item:g-monotone} of the lemma. We show each of these in turn.

We start by showing that~\eqref{eq:def-g} defines a bijection between agents and positions, \ie that $\{(g(\bfb))^{-1}(j)\colon j\in [n]\} = [n]$ for every $\bfb\in \{0,1\}^n$. 
It is clear from~\eqref{eq:def-g} that $(g(\bfb))^{-1}(j)\in[n]$ for every $j\in [n]$ and $\bfb\in \{0,1\}^n$. 
Then, it remains to show that for every $j,k\in[n]$ and $\bfb\in \{0,1\}^n$ such that $(g(\bfb))^{-1}(j)=k$, it holds that $(g(\bfb))^{-1}(j') \neq k$ for all $j'\in [n]\setminus\{j\}$. Let then $j,k\in[n]$ and $\bfb\in \{0,1\}^n$ be such that $(g(\bfb))^{-1}(j)=k$.
We first consider the case where $k=j$. Then, by~\eqref{eq:available-sets} and~\eqref{eq:def-g}, $\bigcup_{i\in [n]\setminus \{j\}} S^{b_i}_{ij} = [n]\setminus \{j\}$, and Condition~\ref{item:one-agent-per-pos-def} of \autoref{lem:existence-blocking-sets} implies that $k\in \bigcup_{i\in [n]\setminus\{j'\}} S^{b_i}_{ij'}$ for every $j'\in [n]\setminus \{j\}$. Thus, by~\eqref{eq:available-sets} and~\eqref{eq:def-g}, $(g(\bfb))^{-1}(j')\not=k$. Now consider the case where $k\neq j$. Then, by~\eqref{eq:available-sets} and~\eqref{eq:def-g}, $\bigcup_{i\in [n]\setminus \{j\}} S^{b_i}_{ij} = [n]\setminus \{j,k\}$, and Condition~\ref{item:one-agent-per-pos-non-def} of \autoref{lem:existence-blocking-sets} implies that $k\in\bigcup_{i\in [n]\setminus \{j'\}} S^{b_i}_{ij'}$ for every $j'\in [n]\setminus \{j,k\}$. Thus, by~\eqref{eq:available-sets} and~\eqref{eq:def-g}, $(g(\bfb))^{-1}(j') \not=k$.

That~$g$ satisfies Condition~\ref{item:g-impartial} of \autoref{lem:suf-conds-g} follows directly from~\eqref{eq:def-g}, since $(g(\bfb))^{-1}(j)$ does not depend on $b_j$ for any $j\in[n]$ and $\bfb\in\{0,1\}^n$.

To see that $g$ satisfies Condition~\ref{item:g-ifr} of \autoref{lem:suf-conds-g}, let $j,k\in [n]$ be arbitrary. We show that that there exists $\bfb\in\{0,1\}^n$ with $(g(\bfb))(k) = j$, distinguishing the cases where $j=k$ and $j\neq k$.

For $j=k$, by~\eqref{eq:available-sets} and~\eqref{eq:def-g}, it suffices to show the existence of $\bfb\in \{0,1\}^{n}$ such that $\bigcup_{i\in [n]\setminus \{j\}} S^{b_i}_{ij}=[n]\setminus \{j\}$. Let $[n]\setminus\{j\}=\{i_{p}\}_{p=0}^{n-2}$ such that $i_0<i_1<\dots <i_{n-2}$. Fix $b_j$ arbitrarily and, for each $i\in [n]\setminus \{j\}$, define $b_i$ inductively as follows: take $b_{i_0}\in\{0,1\}$ arbitrarily such that $|S^{b_{i_0}}_{i_0j}|\geq 2$, which exists by Condition~\ref{item:ifr} of \autoref{lem:existence-blocking-sets}; for $q\in \{1,2,\ldots, n-2\}$, take $b_{i_{q}}\in\{0,1\}$ arbitrarily such that
\begin{equation}
    \Bigg| \bigcup_{p=0}^{q} S^{b_{i_{p}}}_{i_{p}j}\Bigg| \geq \Bigg| \bigcup_{p=0}^{q-1} S^{b_{i_{p}}}_{i_{p}j}\Bigg| + 1 \label{eq:growing-blocking-sets}
\end{equation}
if such a value of $b_{i_{q}}$ exists, and $b_{i_{q}}=0$ otherwise. 
We claim that $\bigcup_{i\in [n]\setminus \{j\}} S^{b_i}_{ij} = [n]\setminus \{j\}$. Assume for contradiction that this was not the case; then, for every $q\in [n-2]$, $\bigcup_{p=0}^{q-1} S^{b_{i_{p}}}_{i_{p}j}\not=[n]\setminus \{j\}$. 
Condition~\ref{item:ifr} of \autoref{lem:existence-blocking-sets} states that $S^0_{i_{q}j} \cup S^1_{i_{q}j} = [n]\setminus \{i_{q}, j\}$ for each $q\in \{1,2,\ldots, n-2\}$, thus $b_{i_{q}}$ satisfying~\eqref{eq:growing-blocking-sets} exists as long as $[n]\setminus \{i_{q}, j\} \not\subseteq \bigcup_{p=0}^{q-1} S^{b_{i_{p}}}_{i_{p}j}$. Moreover, denoting 
\[
    \bar{L} = \left\{ q\in \{1,2,n-2\}\colon [n]\setminus \{i_{q}, j\} \subseteq \bigcup_{p=0}^{q-1} S^{b_{i_{p}}}_{i_{p}j} \right\},
\]
we have that $|\bar{L}|\leq 1$. Indeed, existence of $q,q'\in\bar{L}$ with $q<q'$ would imply that $i_{q'}\in \bigcup_{p=0}^{q-1} S^{b_{i_{p}}}_{i_{p}j}$ but $i_{q'}\not\in \bigcup_{p=0}^{q'-1} S^{b_{i_{p}}}_{i_{p}j}$, a contradiction. Thus~\eqref{eq:growing-blocking-sets} holds for every $q\in \{1,2,\ldots,n-2\} \setminus \bar{L}$, which together with the fact that $\big|S^{b_{i_0}}_{i_0j}\big|\geq 2$ implies that
\[
    \Bigg|\bigcup_{p=0}^{q} S^{b_{i_{p}}}_{i_{p}j}\Bigg| \geq 2 + \big|\{1,2\ldots,q\} \setminus \bar{L}\big| \quad \text{for every $q\in \{1,2,\ldots,n-2\}$}.
\]
Since $|\bar{L}|\leq 1$ we conclude that  $\big| \bigcup_{i\in [n]\setminus \{j\}} S^{b_i}_{ij}\big| = \big|\bigcup_{p=0}^{n-2} S^{b_{i_{p}}}_{i_{p}j}\big| \geq n-1$, a contradiction.

For $j\neq k$, by~\eqref{eq:available-sets} and~\eqref{eq:def-g}, it suffices to show the existence of $\bfb\in \{0,1\}^n$ such that $k\not\in \bigcup_{i\in [n]\setminus \{j\}} S^{b_i}_{ij}$. By Condition~\ref{item:ifr} of \autoref{lem:existence-blocking-sets} it follows that $k\not\in S^0_{kj}\cup S^1_{kj}$ and that, for every $i\in [n]\setminus \{j,k\}$, either $k\not\in S^{0}_{ij}$ or $k\not\in S^{1}_{ij}$. Taking $b_k\in \{0,1\}$ arbitrarily and, for each $i\in [n]\setminus\{j,k\}$, $b_i$ such that $k\not\in S^{b_i}_{ij}$, we have that $k\not\in \bigcup_{i\in [n]\setminus \{j\}} S^{b_i}_{ij}$.

We finally prove that $g$ and $\bfrho$ satisfy Condition~\ref{item:g-monotone} of \autoref{lem:suf-conds-g}. Let $j\in [n]$ and $\bfb\in\{0,1\}^{n}$. Denote by $\bfb_{-\{j,\rho_{j}\}}$ the profile of messages for all agents except $j$ and $\rho_{j}$, and let $\bar{A}_{\rho_{j}}(\bfb_{-\{j,\rho_{j}\}})=([n]\setminus \{\rho_{j}\}) \setminus \bigcup_{i\in [n]\setminus \{j,\rho_{j}\}} S^{b_i}_{i\rho_{j}}$ be the set of positions \emph{not} blocked for agent $\rho_{j}$ by any agent except $j$. Then, by~\eqref{eq:available-sets} and Condition~\ref{item:mon} of \autoref{lem:existence-blocking-sets},
\begin{align}
    A_{\rho_{j}}\big(0,\bfb_{-\{j,\rho_{j}\}}\big) & = \left\{ k\in \bar{A}_{\rho_{j}}\big(\bfb_{-\{j,\rho_{j}\}}\big)\colon k>\rho_{j} \text{ or } k=j\right\}, \label{eq:available-pos-0} \\
   A_{\rho_{j}}\big(1,\bfb_{-\{j,\rho_{j}\}}\big) & = \left\{ k\in \bar{A}_{\rho_{j}}\big(\bfb_{-\{j,\rho_{j}\}}\big)\colon k<\rho_{j} \text{ or } k=j\right\},\label{eq:available-pos-1}
\end{align}
where the first argument of $A_{\rho_{j}}$ is the message of agent~$j$.
We claim that
\begin{equation}
    (g(1,\bfb_{-j}))^{-1}(\rho_{j}) \leq (g(0,\bfb_{-j}))^{-1}(\rho_{j}).  \label{eq:g-blocking-sets-mon}
\end{equation}
If $(g(0,\bfb_{-j}))^{-1}(\rho_{j})=j$, then $j\in A_{\rho_{j}}\big(0,\bfb_{-\{j,\rho_{j}\}}\big)$ by~\eqref{eq:def-g}, $j\in \bar{A}_{\rho_{j}}(\bfb_{-\{j,\rho_{j}\}})$ by~\eqref{eq:available-pos-0}, $j\in A_{\rho_{j}}\big(1,\bfb_{-\{j,\rho_{j}\}}\big)$ by~\eqref{eq:available-pos-1}, and $(g(1,\bfb_{-j}))^{-1}(\rho_{j}) = j$ by~\eqref{eq:def-g}, so that~\eqref{eq:g-blocking-sets-mon} holds with equality. An analogous argument shows that the same is true if $(g(1,\bfb_{-j}))^{-1}(\rho_{j})=j$.

If $(g(0,\bfb_{-j}))^{-1}(\rho_{j}) = \rho_{j}$, then $A_{\rho_{j}}\big(0,\bfb_{-\{j,\rho_{j}\}}\big)=\emptyset$ by~\eqref{eq:def-g}, $j\not\in \bar{A}_{\rho_{j}}(\bfb_{-\{j,\rho_{j}\}})$ by~\eqref{eq:available-pos-0}, $A_{\rho_{j}}\big(1,\bfb_{-\{j,\rho_{j}\}}\big)\subseteq\left\{ k\in [n]\colon k<\rho_{j}\right\}$ by~\eqref{eq:available-pos-1}, and either $(g(1,\bfb_{-j}))^{-1}(\rho_{j}) = \rho_{j}$ or $(g(1,\bfb_{-j}))^{-1}(\rho_{j})<\rho_{j}$ by~\eqref{eq:def-g};~\eqref{eq:g-blocking-sets-mon} follows in both cases. Analogously, if $(g(1,\bfb_{-j}))^{-1}(\rho_{j}) = \rho_{j}$, then $A_{\rho_{j}}\big(1,\bfb_{-\{j,\rho_{j}\}}\big) = \emptyset$ by~\eqref{eq:def-g}, $j \not\in \bar{A}_{\rho_{j}}(\bfb_{-\{j,\rho_{j}\}})$ by~\eqref{eq:available-pos-1}, $A_{\rho_{j}}\big(0,\bfb_{-\{j,\rho_{j}\}}\big) \subseteq \left\{ k\in [n]\colon k>\rho_{j}\right\}$ by~\eqref{eq:available-pos-0}, and either $(g(0,\bfb_{-j}))^{-1}(\rho_{j}) = \rho_{j}$ or $(g(0,\bfb_{-j}))^{-1}(\rho_{j}) > \rho_{j}$ by~\eqref{eq:def-g};~\eqref{eq:g-blocking-sets-mon} again follows in both cases.
    
Finally, if $(g(0,\bfb_{-j}))^{-1}(\rho_{j}) \not\in \{j,\rho_{j}\}$ and $(g(1,\bfb_{-j}))^{-1}(\rho_{j}) \not\in \{j,\rho_{j}\}$, it follows from~\eqref{eq:def-g} that $(g(0,\bfb_{-j}))^{-1}(\rho_{j}) \in A_{\rho_{j}}(0,\bfb_{-\{j,\rho_{j}\}}) \setminus \{j\}$ and $(g(1,\bfb_{-j}))^{-1}(\rho_{j}) \in A_{\rho_{j}}(1,\bfb_{-\{j,\rho_{j}\}}) \setminus \{j\}$, and from~\eqref{eq:available-pos-0} and~\eqref{eq:available-pos-1} that
\[
    (g(1,\bfb_{-j}))^{-1}(\rho_{j}) < \rho_{j} < (g(0,\bfb_{-j}))^{-1}(\rho_{j}),
\]
and~\eqref{eq:g-blocking-sets-mon} again holds.

\section{Deferred Proofs from \autoref{sec:weak-unanimity}}

\subsection{Proof of \autoref{lem:existence-vertex-coloring}}
\label{app:pf-lem-existence-vertex-coloring}

In order to prove \autoref{lem:existence-vertex-coloring}, we first study the following related problem.
Given $n\in \NN$, we ask for the existence of values $L_i\in \NN$ for $i\in \{0,1,2\}$ and a family $\{S^i_{\ell}\}_{i\in \{0,1,2\}, \ell\in [L_i]}$, with $S^i_\ell \subseteq [n]$ for every $i\in \{0,1,2\}$ and $\ell\in [L_i]$, such that
\begin{align}
    \text{for every } i\in \{0,1,2\} \text{ and } u,v\in [n] \text{ with } u\not= v, \text{ it exists } \ell\in [L_i] \text{ s.t. } S^i_\ell \cap \{u,v\} = \{u\}, \label{eq:cutting-sets-1}\\
    \text{for every } i\in \{0,1,2\} \text{ and } \ell\in [L_i], \ell'\in [L_{i+_3 1}], \text{ it holds that } S^{i+_3 1}_{\ell'} \not\subseteq S^i_\ell. \label{eq:cutting-sets-2}
\end{align}
For a given $n\in \NN$, we denote the set of all families $\{S^i_\ell\}_{i\in \{0,1,2\}, \ell\in [L_i]}$ satisfying~\eqref{eq:cutting-sets-1}-\eqref{eq:cutting-sets-2} for some values of $L_i$ as $\calS(n)$.
We obtain the following lemma.

\begin{lemma}
\label{lem:cutting-sets}
    For every $n\in \NN$ with $n\geq 5$, it holds that $\calS(n) \not= \emptyset$.
\end{lemma}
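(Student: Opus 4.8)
The plan is to exhibit, for every $n\geq 5$, three families $\mathcal F_0,\mathcal F_1,\mathcal F_2$ of subsets of $[n]$ (setting $\mathcal F_i=\{S^i_\ell:\ell\in[L_i]\}$) that jointly satisfy \eqref{eq:cutting-sets-1} and \eqref{eq:cutting-sets-2}. First I would record two reductions. For a single family, \eqref{eq:cutting-sets-1} says exactly that the membership signatures $u\mapsto\{\ell:u\in S^i_\ell\}$ form an antichain; in particular, if every set of the family has the same size $s$, then \eqref{eq:cutting-sets-2} can fail only through an equality $S^{i+_3 1}_{\ell'}=S^i_\ell$, so \eqref{eq:cutting-sets-2} holds as soon as the three families are pairwise disjoint. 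Second, a family of $2$-subsets of $[n]$, viewed as the edge set of a graph on $[n]$, satisfies \eqref{eq:cutting-sets-1} precisely when that graph has minimum degree at least $2$: given an ordered pair $(u,v)$, a vertex $u$ of degree $\geq 2$ has a neighbour $w\neq v$, and $\{u,w\}$ contains $u$ but not $v$. These turn the lemma, for large $n$, into a statement about edge-disjoint $2$-regular graphs.

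Concretely, for $n\geq 7$ I would take the three circulant families $\mathcal F_{a-1}=\{\{j,\,j+_n a\}:j\in[n]\}$ for $a\in\{1,2,3\}$. Since $3<n/2$, each step $a$ satisfies $0<2a<n$, so $\mathcal F_{a-1}$ is $2$-regular (every vertex $j$ lies on the two distinct edges $\{j-_n a,j\}$ and $\{j,\,j+_n a\}$) and hence separating by the degree criterion. Distinct steps $a\in\{1,2,3\}$ produce edges of distinct circular distance, so the three families are pairwise disjoint; as all sets have size $2$, \eqref{eq:cutting-sets-2} follows from the uniform-size reduction. This settles all $n\geq 7$ with one elementary construction, and needs no appeal to Hamiltonian decompositions.

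It remains to treat $n\in\{5,6\}$, which is where the genuine obstruction lives: the cyclic form of \eqref{eq:cutting-sets-2} cannot be met by set sizes alone (the three demands $|\text{sets of }\mathcal F_{i+_3 1}|>|\text{sets of }\mathcal F_i|$ are cyclically inconsistent), while a size-$2$ solution is impossible here because $K_5$ and $K_6$ have too few edges to contain three edge-disjoint $2$-regular spanning subgraphs. For $n=6$ I would use three pairwise-disjoint families of $3$-subsets: start from $\mathcal F_A=\{\{0,1,2\},\{0,3,4\},\{1,3,5\},\{2,4,5\}\}$, the vertex-neighbourhood family obtained by identifying $[6]$ with the edge set of $K_4$ (its six signatures are the six $2$-subsets of a $4$-set, hence separating); take its elementwise complement $\mathcal F_A^c$, which is again separating (complementation preserves antichains) and, as one checks, disjoint from $\mathcal F_A$; and let $\mathcal F_C$ be a separating subfamily of the remaining $3$-subsets. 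All three consist of $3$-sets and are pairwise disjoint, so \eqref{eq:cutting-sets-2} holds automatically, and \eqref{eq:cutting-sets-1} is verified directly. For $n=5$ I would use the three families displayed in \autoref{fig:matrix-coloring-n5} (two $3$-sets and two $2$-sets each), checking \eqref{eq:cutting-sets-1} from the fact that within each family every element lies in exactly two sets with distinct index pairs, and \eqref{eq:cutting-sets-2} by inspection.

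I expect the main obstacle to be precisely the small cases $n=5,6$: these are exactly where the clean uniform-size-$2$ argument breaks down, so they require hand-built families of mixed or larger size together with an explicit simultaneous check of separation and cyclic non-containment. The large-$n$ construction, once the two reductions are in place, should be routine.
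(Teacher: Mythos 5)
Your proof is correct, but it takes a genuinely different route from the paper's for every $n\geq 6$. The paper handles all $n\geq 6$ uniformly with a single circulant construction of $3$-sets, $S^0_\ell=\{\ell,\ell+_n 1,\ell+_n 2\}$, $S^1_\ell=\{\ell,\ell+_n 2,\ell+_n 3\}$, $S^2_\ell=\{\ell,\ell+_n 3,\ell+_n 4\}$; there \eqref{eq:cutting-sets-2} follows, exactly as in your uniform-size reduction, from equal cardinalities plus pairwise distinctness of sets across families (which holds already at $n=6$ because $\ell+_n 5\neq \ell$), and \eqref{eq:cutting-sets-1} is verified by a short case analysis on the position of $v$ relative to $u$. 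Your two preliminary reductions --- condition \eqref{eq:cutting-sets-1} as the statement that membership signatures form an antichain, and the minimum-degree-$2$ criterion for families of $2$-sets --- are both sound, and they make the case $n\geq 7$ essentially immediate via the three edge-disjoint circulants of steps $1,2,3$ (pairwise disjoint because edges of distinct circular distance are distinct, and $2$-regular because $2a\not\equiv 0 \pmod n$ for $a\leq 3$, $n\geq 7$). This is arguably cleaner than the paper's case analysis, at the price of pushing $n=6$ into the hand-built regime, since your step-$3$ circulant degenerates to a perfect matching when $n=6$. For $n=5$ you and the paper use the same explicit mixed-size families, checked by inspection in both cases.

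The one place where your write-up is not self-contained is $n=6$: you assert, without exhibiting it, that a separating subfamily $\mathcal{F}_C$ of the remaining $3$-subsets exists. The assertion is true and easily filled. After removing your $\mathcal{F}_A=\{\{0,1,2\},\{0,3,4\},\{1,3,5\},\{2,4,5\}\}$ and its elementwise complement $\{\{3,4,5\},\{1,2,5\},\{0,2,4\},\{0,1,3\}\}$, twelve $3$-subsets of $[6]$ remain, each element of $[6]$ lies in exactly six of them, and a direct check of all ordered pairs $(u,v)$ shows that some remaining $3$-subset contains $u$ but not $v$; so one may take $\mathcal{F}_C$ to be all twelve remaining sets, which are of size $3$ and disjoint from the other two families by construction, so \eqref{eq:cutting-sets-2} again follows from your uniform-size reduction. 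With that verification inserted, your argument is complete.
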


\begin{proof}   
    In order to show the lemma, we let $n\in \NN$ with $n\geq 5$ and distinguish the cases $n=5$ and $n\geq 6$.

    If $n=5$, we define $L_0=L_1=L_2=4$ and
    \begin{align*}
        S^0_0=\{0,1,2\},\ S^0_1=\{0,3,4\},\ S^0_2=\{1,3\},\ S^0_3=\{2,4\},\\
        S^1_0=\{0,1,3\},\ S^1_1=\{0,2,4\},\ S^1_2=\{1,4\},\ S^1_3=\{2,3\},\\
        S^2_0=\{0,1,4\},\ S^2_1=\{0,2,3\},\ S^2_2=\{1,2\},\ S^2_3=\{3,4\}.
    \end{align*}
    To see that~\eqref{eq:cutting-sets-1} holds for these sets, observe that for every $i\in \{0,1,2\}$ and $u\in [n]$, there are indices $k(i,u),~k'(i,u)\in [L_i]$ such that $S^i_{k(i,u)}\cap S^i_{k'(i,u)}=\{u\}$, as shown in \autoref{tab:indices-cutting-sets-n5}.
    Therefore, for each $i,u,v$ as in~\eqref{eq:cutting-sets-1}, it holds either $S^i_{k(i,u)} \cap \{u,v\} = \{u\}$ or $S^i_{k'(i,u)} \cap \{u,v\} = \{u\}$.
    \begin{table}[]
    \centering
    \begin{tabular}{ccccccccccccccccccc}
    \toprule
    $i$        & \quad & \multicolumn{5}{c}{$0$}     & \quad & \multicolumn{5}{c}{$1$}     & \multicolumn{5}{c}{$2$}     \\
    $u$        & \quad & $0$ & $1$ & $2$ & $3$ & $4$ & \quad & $0$ & $1$ & $2$ & $3$ & $4$ & \quad & $0$ & $1$ & $2$ & $3$ & $4$ \\ 
    \midrule
    $k(i,u)$ & \quad & $0$ & $0$ & $0$ & $1$ & $1$ & \quad & $0$ & $0$ & $1$ & $0$ & $1$ & \quad & $0$ & $0$ & $1$ & $1$ & $0$ \\
    $k'(i,u)$ & \quad & $1$ & $2$ & $3$ & $2$ & $3$ & \quad & $1$ & $2$ & $3$ & $3$ & $2$ & \quad & $1$ & $2$ & $2$ & $3$ & $3$ \\ \bottomrule
    \end{tabular}
    \caption{Values of $k(i,u)$ and $k'(i,u)$ as described in the proof of \autoref{lem:cutting-sets} for the case $n=5$. The column with $i=0$ and $u=3$, for example, states that $S^0_1 \cap S^0_2 = \{3\}$, thus for every $v\in [5]$ with $v\not=3$ we have either that $S^0_1 \cap \{3,v\}=3$ or that $S^0_2\cap \{3,v\}=3$.}
    \label{tab:indices-cutting-sets-n5}
    \end{table}
    It is easy to check that~\eqref{eq:cutting-sets-2} holds as well.
    For example, when $i=0$, we have that $\{0,1,3\}\not\subseteq S^0_\ell$ for every $\ell \in [4], ~\{0,2,4\}\not\subseteq S^0_\ell$ for every $\ell\in [4], ~\{1,4\}\not\subseteq S^0_\ell$ for every $\ell\in [4]$, and $\{2,3\}\not\subseteq S^0_\ell$ for every $\ell\in [4]$. 
    For $i\in \{1,2\}$ this condition follows analogously. 

    If $n\geq 6$, we define $L_1=L_2=L_3=n$ and
    \begin{align*}
        S^0_\ell=\{\ell,\ell+_n 1,\ell+_n 2\} \text{ for every } \ell\in [n],\\
        S^1_\ell=\{\ell,\ell+_n 2,\ell+_n 3\} \text{ for every } \ell\in [n],\\
        S^2_\ell=\{\ell,\ell+_n 3,\ell+_n 4\} \text{ for every } \ell\in [n].
    \end{align*}
    In this case,~\eqref{eq:cutting-sets-2} follows from the fact that all sets have the same size and $S^i_\ell\not= S^j_{\ell'}$ for every $i,j\in \{0,1,2\}$ with $i\not=j$ and every $\ell,\ell'\in [n]$, because for every $n\geq 6$ and every $\ell\in [n]$ we have that $\ell+_n 5 \not = \ell$.
    
    To see that~\eqref{eq:cutting-sets-1} holds as well, we let $i\in \{0,1,2\}$ and $u,v\in [n]$ with $u\not= v$ be arbitrary values. We prove that there exists $\ell\in [n]$ such that $S^i_\ell \cap \{u,v\}=\{u\}$ distinguishing on the value of $i$.
    
    If $i=0$ and $v\not\in \{u+_n 1, u+_n 2\}$, then $v\not\in S^0_u=\{u,u+_n 1,u+_n 2\}$ and thus $S^0_u \cap \{u,v\} = \{u\}$.
    If $i=0$ and $v\in \{u+_n 1, u+_n 2\}$, then since $n\geq 6$ we have that $v\not\in S^0_{u-_n 2}=\{u-_n 2, u-_n 1, u\}$ and thus $S^0_{u-_n 2} \cap \{u,v\} = \{u\}$.

    If $i=1$ and $v\not\in \{u+_n 2, u+_n 3\}$, then $v\not\in S^1_u=\{u,u+_n 2,u+_n 3\}$ and thus $S^1_u \cap \{u,v\} = \{u\}$.
    If $i=1$ and $v\in \{u+_n 2, u+_n 3\}$, then since $n\geq 6$ we have $v\not\in S^1_{u-_n 2}=\{u-_n 2, u, u+_n 1\}$ and thus $S^2_{u-_n 2} \cap \{u,v\} = \{u\}$.

    We finally consider the case $i=2$. If $v\not\in \{u+_n 3, u+_n 4\}$, then $v\not\in S^2_u=\{u,u+_n 3,u+_n 4\}$ and thus $S^2_u \cap \{u,v\} = \{u\}$.
    If $v = u+_n 3$ and $n\not=6$, then $v\not\in S^2_{u-_n 3}=\{u-_n 3, u, u+_n 1\}$ and thus $S^2_{u-_n 3} \cap \{u,v\} = \{u\}$.
    If $v = u+_n 3$ and $n=6$, then $v\not\in S^2_{u-_n 4}=\{u-_n 4, u-_n 1, u\}$ and thus $S^2_{u-_n 4} \cap \{u,v\} = \{u\}$.
    If $v = u+_n 4$ and $n\not=7$, then $v\not\in S^2_{u-_n 3}=\{u-_n 3, u, u+_n 1\}$ and thus $S^2_{u-_n 3} \cap \{u,v\} = \{u\}$.
    Finally, if $v = u+_n 4$ and $n=7$, then $v\not\in S^2_{u-_n 4}=\{u-_n 4, u-_n 1, u\}$ and thus $S^2_{u-_n 4} \cap \{u,v\} = \{u\}$.
\end{proof}

\autoref{fig:matrix-coloring-n6} illustrates the solutions constructed in the proof of \autoref{lem:cutting-sets} for $n=6$, complementing that for $n=5$ shown in Part (b) of \autoref{fig:matrix-coloring-n5}.
\begin{figure}[t]
    \centering
    \begin{tikzpicture}[scale=0.8]
    \useasboundingbox (-2.7,-2.7) rectangle (2.7,2.7);

\tikzset{
    contour/.style={
    black, 
      double=lightgray,
      opacity=0.4,
      double distance=#1,
      cap=round,
      rounded corners=0.5mm,
    }
  }
\pgfdeclarelayer{bg}
\pgfsetlayers{bg,main}
    \foreach \i/\c in {0/color0,1/color1,2/color2,3/color3,4/color4,5/color5}{\draw (\i*360/6:2cm) node[circle,fill=\c](\i){\i};}
    \begin{pgfonlayer}{bg}
    \draw[component,draw=Brown,line width=1.5cm](0.center) to[bend left=15] (1.center) node[above right=-3mm] {\textcolor{Brown}{$S^0_0$}} to[bend left=15]  (2.center) to[bend left=15] cycle;
    \draw[component,draw=Violet,line width=1.4cm](1.center) to[bend left=15] (2.center) node[above left=-3mm] {\textcolor{Violet}{$S^0_1$}} to[bend left=15] (3.center) to[bend left=15] cycle;
    \draw[component,draw=MidnightBlue,line width=1.4cm] (2.center) to[bend left=15] (3.center) node[left] {\textcolor{MidnightBlue}{$S^0_2$}} to[bend left=15](4.center) to[bend left=15] cycle;
    \draw[component,draw=OliveGreen,line width=1.2cm](3.center) to[bend left=15] (4.center) node[below left=-2mm] {\textcolor{OliveGreen}{$S^0_3$}} to[bend left=15] (5.center) to[bend left=15] cycle ;
    \draw[component,draw=Thistle,line width=1.1cm] (4.center) to[bend left=15] (5.center) node[below right=-2mm] {\textcolor{Thistle}{$S_4^0$}} to[bend left=15] (0.center) to[bend left=15] cycle ;
    \draw[component,draw=Emerald,line width=1.0cm] (5.center) to[bend left=15] (0.center) node[right=3mm] {\textcolor{Emerald}{$S_5^0$}} to[bend left=15] (1.center)  to[bend left=15] cycle ;
    \end{pgfonlayer}

\end{tikzpicture}
\hfill
\begin{tikzpicture}[scale=0.8]
\useasboundingbox (-2.7,-2.7) rectangle (2.7,2.7);

\tikzset{
    contour/.style={
      black, 
      double=lightgray,
      opacity=0.4,
      double distance=#1,
      cap=round,
      rounded corners=0.5mm,
    }
  }
\pgfdeclarelayer{bg}
\pgfsetlayers{bg,main}

    \foreach \i/\c in {0/color0,1/color1,2/color2,3/color3,4/color4,5/color5}{\draw (\i*360/6:2cm) node[circle,fill=\c](\i){\i};}
    \begin{pgfonlayer}{bg}
    \draw[component,draw=Brown,line width=1.5cm] (0.center) to[bend left=15] (2.center) to[bend left=20] node[above left] {\textcolor{Brown}{$S^1_0$}} (3.center)  to[bend left=15] cycle;
    \draw[component,draw=Violet,line width=1.4cm](1.center) to[bend left=15] (3.center)  to[bend left=15] node[below left=-2mm] {\textcolor{Violet}{$S^1_1$}} (4.center) to[bend left=15] cycle;
    \draw[component,draw=MidnightBlue,line width=1.3cm](2.center) to[bend left=15] (4.center) to[bend left=15] node[below] {\textcolor{MidnightBlue}{$S^1_2$}} (5.center)  to[bend left=15] cycle;
    \draw[component,draw=OliveGreen,line width=1.2cm](3.center) to[bend left=15] (5.center) to[bend left=15] node[below right=-2mm] {\textcolor{OliveGreen}{$S^1_3$}} (0.center) to[bend left=15] cycle;
    \draw[component,draw=Thistle,line width=1.1cm](4.center) to[bend left=15] (0.center) to[bend left=15] node[above right] {\textcolor{Thistle}{$S^1_4$}} (1.center) to[bend left=15] cycle;
    \draw[component,draw=Emerald,line width=1.0cm](5.center) to[bend left=15] (1.center) to[bend left=15] node[above] {\textcolor{Emerald}{$S^1_5$}} (2.center) to[bend left=15] cycle ;
    \end{pgfonlayer}

\end{tikzpicture}
\hfill
\begin{tikzpicture}[scale=0.8]
\useasboundingbox (-2.7,-2.7) rectangle (2.7,2.7);
\tikzset{
    contour/.style={
      black, 
      double=lightgray,
      opacity=0.4,
      double distance=#1,
      cap=round,
      rounded corners=0.5mm,
    }
  }
\pgfdeclarelayer{bg}
\pgfsetlayers{bg,main}
   \foreach \i/\c in {0/color0,1/color1,2/color2,3/color3,4/color4,5/color5}{\draw (\i*360/6:2cm) node[circle,fill=\c](\i){\i};}
    \begin{pgfonlayer}{bg}
    \draw[component,draw=Brown,line width=1.5cm](0.center) to[bend right=15] (4.center) to[bend right=15] node[below left] {\textcolor{Brown}{$S^2_0$}} (3.center) to[bend right=15] cycle;
    \draw[component,draw=Violet,line width=1.4cm](1.center) to[bend right=15] (5.center) to[bend right=15] node[below] {\textcolor{Violet}{$S^2_1$}} (4.center) to[bend right=15] cycle;
    \draw[component,draw=MidnightBlue,line width=1.3cm](2.center) to[bend right=15] (0.center) to[bend right=15] node[below right=-2mm] {\textcolor{MidnightBlue}{$S^2_2$}} (5.center) to[bend right=15] cycle;
    \draw[component,draw=OliveGreen,line width=1.2cm](3.center) to[bend right=15] (1.center) to[bend right=15] node[above right] {\textcolor{OliveGreen}{$S^2_3$}} (0.center) to[bend right=15] cycle;
    \draw[component,draw=Thistle,line width=1.1cm](4.center) to[bend right=15] (2.center) to[bend right=15] node[above] {\textcolor{Thistle}{$S^2_4$}} (1.center)  to[bend right=15] cycle;
    \draw[component,draw=Emerald,line width=1.0cm](5.center)to[bend right=15] (3.center) to[bend right=15] node[above left] {\textcolor{Emerald}{$S_5^2$}} (2.center) to[bend right=15] cycle;
    \end{pgfonlayer}
\end{tikzpicture}

\vspace{.5cm}

        \caption{Illustration of the family $\{S^i_\ell\}_{i\in \{0,1,2\}, \ell\in [6]}$ in $\calS(6)$ constructed in the proof of \autoref{lem:cutting-sets}. As in Part (b) of \autoref{fig:matrix-coloring-n5}, sets $\{S^i_\ell\}_{\ell\in [6]}$ for each $i\in \{0,1,2\}$ are represented as hyperedges of the corresponding hypergraph $G_i$.}
        \label{fig:matrix-coloring-n6}
    \end{figure}
We are now ready to prove \autoref{lem:existence-vertex-coloring}.

\begin{proof}[Proof of \autoref{lem:existence-vertex-coloring}]

    Let $n,m,\bfd^1,\bfd^2,\bfd^3$ be as in the statement of the lemma. We consider values $L_0,~L_1,~L_2$ and the corresponding family $\{S^i_\ell\}_{i\in \{0,1,2\}, \ell\in [L_i]} \in \calS(n)$, whose existence is guaranteed due to \autoref{lem:cutting-sets}.
    We define matrices $\bfA^0,\bfA^1,\bfA^2\in [n]^{m\times m}$ as follows:
    \begin{enumerate}[label=(\arabic*)]
        \item For every $i\in \{0,1,2\}$ and every $p\in [m]$, fix $A^i_{pp} = d^i_p$.\label{item:def-matrices-1}
        \item For every $i\in \{0,1,2\}$ and every $p\in [m]$, let $\ell \in [L_i]$ be such that $S^i_{\ell} \cap \left\{d^i_p, d^{i+_3 1}_p\right\} = d^i_p$ and define $\ell(i,p)=\ell$.\label{item:def-matrices-2}
        \item For every $i\in \{0,1,2\}$ and $p,q\in [m]$ with $p\not=q$, fix $A^{i}_{pq} \in S^{i}_{\ell(i,q)} \cap (S^{i-_3 1}_{\ell(i-_3 1,p)})^C$ arbitrarily.\label{item:def-matrices-3}
    \end{enumerate}

    Intuitively, we start fixing the diagonals according to the vectors $\bfd^i$ for each $i\in \{0,1,2\}$, and for the remaining entries, we do the following. We let $S^{i}_{\ell(i,q)}$ be the feasible set of values for column $q$ of matrix $\bfA^i$, where $\ell(i,q)$ is taken such that $d^i_q$ belongs to this set but $d^{i+_3 1}_q$ does not. We take its complement as the feasible set of values for row $q$ of matrix $\bfA^{i+_3 1}$. We finally pick, for each non-diagonal entry, any value that is feasible both for the corresponding row and the corresponding column.
    
    We first show the correctness of this procedure, in the sense that $\ell\in [L_i]$ as described in Step~\ref{item:def-matrices-2} always exists for each $i\in \{0,1,2\}$, and that the intersection computed in Step~\ref{item:def-matrices-3} is non-empty.
    To see the former, note that from the statement of the lemma, we know that for every $i\in \{0,1,2\}$ and $p\in [m]$ it holds that $d^i_p \not= d^{i+_3 1}_p$. Since the family $\{S^i_\ell\}_{i\in \{0,1,2\}, \ell\in [L_i]}$ satisfies~\eqref{eq:cutting-sets-1}, we obtain that for each $i\in \{0,1,2\}$, there is $\ell\in [L_i]$ such that $S^i_{\ell} \cap \left\{d^i_p, d^{i+_3 1}_p\right\} = d^i_p$.
    To see that the intersection computed in Step~\ref{item:def-matrices-3} is non-empty, let $i\in \{0,1,2\}$ and $p,q\in [m]$ with $p\not= q$ be arbitrary and observe that, from~\eqref{eq:cutting-sets-2}, $S^{i+_3 1}_{\ell(i+_3 1,q)} \not\subseteq S^i_{\ell(i,p)}$, thus $S^{i+_3 1}_{\ell(i+_3 1,q)} \cap (S^i_{\ell(i,p)})^C \not= \emptyset$.
    We conclude that the matrices $\bfA^1,\bfA^2,\bfA^3\in [n]^{m\times m}$ constructed via Steps~\ref{item:def-matrices-1}-\ref{item:def-matrices-3} are well-defined.  

    We now prove that $(\bfA^0,\bfA^1,\bfA^2)\in \calA(n,m,\bfd)$.
    Indeed, Condition~\eqref{eq:vertex-coloring-1} in the definition of this set follows directly from Step~\ref{item:def-matrices-1}.
    To see Condition~\eqref{eq:vertex-coloring-2}, we need to check that $A^i_{pq} \not= A^{i+_3 1}_{qr}$ for every $i\in \{0,1,2\}$ and $p,q,r\in [m]$.
    Let $i\in \{0,1,2\}$ and $p,q,r\in [m]$ be arbitrary.
    If $p=q=r$, the fact that $A^i_{pq}\not= A^{i+_3 1}_{qr}$ follows directly from the condition $d^i_q\not= d^{i+_3 1}_q$ in the statement of the lemma, together with Step~\ref{item:def-matrices-1}.
    If $p=q\not= r$,~\eqref{eq:vertex-coloring-2} is equivalent to $A^{i+_3 1}_{qr} \not= d^i_q$ due to Step~\ref{item:def-matrices-1}.
    Step~\ref{item:def-matrices-3} implies $A^{i+_3 1}_{qr} \not\in S^i_{\ell(i,q)}$ and Step~\ref{item:def-matrices-2} implies $d^i_q \in S^i_{\ell(i,q)}$, thus the condition follows.
    Similarly, if $p\not=q= r$,~\eqref{eq:vertex-coloring-2} is equivalent to $A^i_{pq} \not= d^{i+_3 1}_q$ due to Step~\ref{item:def-matrices-1}.
    Step~\ref{item:def-matrices-3} implies $A^i_{pq} \in S^i_{\ell(i,q)}$ and Step~\ref{item:def-matrices-2} implies $d^{i+_3 1}_q \not\in S^i_{\ell(i,q)}$, thus the condition follows.
    Finally, if $p\not= q \not= r$, Step~\ref{item:def-matrices-3} implies both $A^i_{pq} \in S^i_{\ell(i,q)}$ and $A^{i+_3 1}_{qr} \not\in S^i_{\ell(i,q)}$, thus the condition follows once again.
    This concludes the proof of the lemma.
\end{proof}

\subsection{Proof of \autoref{thm:weak-unanimity}}
\label{app:pf-thm-weak-unanimity}

    We start by introducing some notation that allows us to map a permutation in $\calP_n$ to an index in $[n!]$ and vice versa. For $n\in \NN$, we define $\kappa_n\colon \calP_n\to [n!]$ such that $\kappa_n(\pi)$ is the lexicographical index of the permutation $\pi\in \calP_n$, \ie the position of $\pi$ in the list of permutations of $n$ elements arranged in lexicographical order. Note that, for each $n\in \NN$, this function is a bijection and, for a value $p\in [n!]$, $\kappa^{-1}_n(p)$ corresponds to the permutation of $n$ elements in the $p$th position of this list.
    
    Let $n\geq 5$ be arbitrary. We omit the subindex $n$ in the function $\kappa_n$ in the remainder of the proof for ease of notation. For $i\in \{0,1,2\}$, we let $\bfd^i\in [n]^{n!}$ be the vector defined as
    \begin{equation}
        d^i_p = (\kappa^{-1}(p))^{-1}(i) \quad \text{for every } p\in [n!],
        \label{eq:definition-d}
    \end{equation}
    \ie $\bfd^i$ is the vector whose $p$th entry is the position of agent $i$ in the ranking $\kappa^{-1}(p)$. Since this definition implies that $d^i_p\not=d^j_p$ for every $i,j\in \{0,1,2\}$ with $i\not=j$ and every $p\in [n!]$, we have by \autoref{lem:existence-vertex-coloring} that $\calA(n,n!,\bfd)\not=\emptyset$.
    Let then $(\bfA^0,\bfA^1,\bfA^2) \in \calA(n,n!,\bfd)$ be arbitrary and let, for $i\in [n]\setminus \{0,1,2\}$, $g_i\colon[n!]^3 \to [n]$ be defined as follows:
    \begin{equation}
        g_i(p,q,r) = \begin{cases}
            (\kappa^{-1}(p))^{-1}(i) & \text{ if } p=q=r,\\[.3cm]
            ([n]\setminus \{A^0_{qr},A^1_{rp},A^2_{pq}\})_{i-3} & \text{ otherwise,}
        \end{cases} \quad \text{ for every } (p,q,r)\in [n!]^3,\label{eq:definition-g}
    \end{equation}
    In a slight abuse of notation, we let here $([n]\setminus \{A^0_{qr},A^1_{rp},A^2_{pq}\})_{i-3}$ denote the $(i-3)$th element of the tuple obtained by considering the elements of $[n]\setminus \{A^0_{qr},A^1_{rp},A^2_{pq}\}$ in increasing order.
    
    We can now properly define the mechanism. Let $f\colon \calP_n^n\to \calP_n$ be such that, for every $\bfpi\in \calP^n_n$ and $i\in [n]$,
    \begin{equation}
        (f(\bfpi))^{-1}(i) = \begin{cases}
            A^{i}_{\kappa(\pi_{i+_3 1}) \kappa(\pi_{i+_3 2})} & \text{ if } i\in \{0,1,2\},\\[.3cm]
            g_i(\kappa(\pi_0),\kappa(\pi_1),\kappa(\pi_2)) & \text{ otherwise.}
        \end{cases} \label{eq:definition-f}
    \end{equation}
    We first prove that $f$ is well-defined, \ie that $(f(\bfpi))^{-1}$ as defined in~\eqref{eq:definition-f} is a bijection from~$[n]$ to~$[n]$ for every $\bfpi \in \calP^n_n$.
    
    \begin{claim}
    \label{claim:f-well-defined}
        For every $\bfpi\in \calP^n_n$, it holds that $\{(f(\bfpi))^{-1}(i)\colon i\in [n]\} = [n]$.
    \end{claim}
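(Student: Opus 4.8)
The plan is to use that the domain and codomain of $(f(\bfpi))^{-1}$ are both $[n]$ and have equal cardinality, so it suffices to prove that $(f(\bfpi))^{-1}$ is \emph{injective}, \ie that the $n$ agents are assigned $n$ pairwise distinct positions. Fix $\bfpi\in\calP_n^n$ and abbreviate $p=\kappa(\pi_0)$, $q=\kappa(\pi_1)$, $r=\kappa(\pi_2)$. Then by~\eqref{eq:definition-f} the decisive agents $0$, $1$, $2$ occupy positions $A^0_{qr}$, $A^1_{rp}$, $A^2_{pq}$, respectively (unwinding the cyclic shifts $i+_3 1$ and $i+_3 2$), while each non-decisive agent $i$ occupies position $g_i(p,q,r)$. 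I would split into the case where the three decisive rankings coincide ($p=q=r$) and the case where they do not.

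In the first case, writing $\hat{\pi}=\kappa^{-1}(p)$, all three decisive positions lie on the diagonals and equal $A^i_{pp}=d^i_p=\hat{\pi}^{-1}(i)$ by~\eqref{eq:vertex-coloring-1} and~\eqref{eq:definition-d}, while each non-decisive agent $i$ gets $g_i(p,p,p)=\hat{\pi}^{-1}(i)$ by the first branch of~\eqref{eq:definition-g}. Hence $(f(\bfpi))^{-1}=\hat{\pi}^{-1}$, which is a bijection since $\hat{\pi}$ is a permutation, and this case is immediate.

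For the second case, the first step is to show that the three decisive positions $A^0_{qr}$, $A^1_{rp}$, $A^2_{pq}$ are pairwise distinct. This is exactly where the non-intersection constraint~\eqref{eq:vertex-coloring-2} enters: applying it with $i=0$ and the index substitution $(p,q,r)\mapsto(q,r,p)$ yields $A^0_{qr}\neq A^1_{rp}$; with $i=1$ and $(p,q,r)\mapsto(r,p,q)$ it yields $A^1_{rp}\neq A^2_{pq}$; and with $i=2$ (so $i+_3 1=0$) and the identity substitution it yields $A^2_{pq}\neq A^0_{qr}$. Consequently the set $T=[n]\setminus\{A^0_{qr},A^1_{rp},A^2_{pq}\}$ of positions left unused by the decisive agents has exactly $n-3$ elements.

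It then remains to check that the non-decisive agents occupy precisely the positions in $T$, each exactly once. By the second branch of~\eqref{eq:definition-g}, agent $i\in\{3,\dots,n-1\}$ is placed at the $(i-3)$th element of $T$ in increasing order; as $i$ ranges over $\{3,\dots,n-1\}$, the index $i-3$ ranges over $\{0,\dots,n-4\}$, \ie over all $n-3$ entries of $T$, giving a bijection between the non-decisive agents and $T$. Combined with the three distinct decisive positions, this shows that $(f(\bfpi))^{-1}$ maps $[n]$ bijectively onto $\{A^0_{qr},A^1_{rp},A^2_{pq}\}\cup T=[n]$. The only delicate point, and the step most prone to error, is the cyclic bookkeeping in the three applications of~\eqref{eq:vertex-coloring-2}: one must align the row/column indices of $A^i$ and $A^{i+_3 1}$ with the pattern $A^i_{ab}\neq A^{i+_3 1}_{bc}$ and verify that the shared middle index (namely $r$, $p$, and $q$ in the three cases) is what makes each inequality applicable. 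Everything else reduces to routine counting.
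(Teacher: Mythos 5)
Your proposal is correct and takes essentially the same approach as the paper's proof: the same three applications of constraint~\eqref{eq:vertex-coloring-2} (with the same index bookkeeping $A^0_{qr}\neq A^1_{rp}\neq A^2_{pq}\neq A^0_{qr}$) separate the decisive agents, the diagonal condition~\eqref{eq:vertex-coloring-1} together with the first branch of~\eqref{eq:definition-g} handles the case $\pi_0=\pi_1=\pi_2$, and the second branch of~\eqref{eq:definition-g} places the non-decisive agents at distinct elements of the complement set. The only difference is organizational — you lift the agreement case to the top level and finish with a counting argument, whereas the paper verifies distinctness pairwise over all pairs of agents — but the mathematical content is identical.
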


    \begin{proof}

    To verify the claim, we first note that, for each $\bfpi\in \calP_n^n$ and $i\in [n]$, it holds that $(f(\bfpi))^{-1}(i) \in [n]$. If $i\in \{0,1,2\}$ this follows from the fact that $\bfA^{i}$ has entries in $[n]$ due to the definition of the set $\calA(n,n!,\bfd)$. For $i\in [n]\setminus \{0,1,2\}$, it follows since the image of $g_i$ is a subset of $[n]$ from the definition of this function in~\eqref{eq:definition-g}.
    
    We prove in the following that for every $\bfpi\in \calP_n^n$ and $i,j\in [n]$ with $i\not= j$, we have that $(f(\bfpi))^{-1}(i) \not= (f(\bfpi))^{-1}(j)$. Together with the previous observation, this allows us to conclude the claim. 
    
    Let $\bfpi\in \calP_n^n$ and $i,j\in [n]$ with $i\not= j$ be arbitrary.
    If $i$ and $j$ are both in $\{0,1,2\}$, we have either $j=i+_3 1$ or $i=j+_3 1$; we assume the former w.l.o.g. By Condition~\eqref{eq:vertex-coloring-2} with $p=\kappa(\pi_{i+_3 1}),~ q=\kappa(\pi_{i+_3 2})$, and $r=\kappa(\pi_{i})$, we have that 
    \[
        A^i_{\kappa(\pi_{i+_3 1})\kappa(\pi_{i+_3 2})} \not= A^{i+_3 1}_{\kappa(\pi_{i+_3 2})\kappa(\pi_{i})} \Longleftrightarrow A^{i}_{\kappa(\pi_{i+_3 1}) \kappa(\pi_{i+_3 2})} \not= A^{j}_{\kappa(\pi_{j+_3 1}) \kappa(\pi_{j+_3 2})}.
    \]
    This condition, together with the definition of~$f$ in~\eqref{eq:definition-f}, yields $(f(\bfpi))^{-1}(i) \not= (f(\bfpi))^{-1}(j)$.
    
    If $|\{i,j\} \cap \{0,1,2\}|=1$, say w.l.o.g.\ $i\in \{0,1,2\}$ and $j\in [n]\setminus \{0,1,2\}$, we distinguish two cases.
    
    If $\pi_1=\pi_2=\pi_3$, then denoting $\hat{\pi}=\pi_1$ we have that 
        \[
            (f(\bfpi))^{-1}(i) = A^{i}_{\kappa(\hat{\pi})\kappa(\hat{\pi})} = d^i_{\kappa(\hat{\pi})} = (\kappa^{-1}(\kappa(\hat{\pi})))^{-1}(i) = \hat{\pi}^{-1}(i),
        \]
        where the first equality follows from the definition of $f$ in~\eqref{eq:definition-f}, the second one from the fact that $(\bfA^0,\bfA^1,\bfA^2)\in \calA(n,n!,\bfd)$ and Condition~\eqref{eq:vertex-coloring-1} in the definition of this set, and the third one from the definition of $\bfd^{i}$ in~\eqref{eq:definition-d}.
        On the other hand, 
        \[
            (f(\bfpi))^{-1}(j) = g_j(\kappa(\hat{\pi}),\kappa(\hat{\pi}),\kappa(\hat{\pi}))=(\kappa^{-1}(\kappa(\hat{\pi})))^{-1}(j) = \hat{\pi}^{-1}(j),
        \]
        where the first equality follows from the definition of $f$ in~\eqref{eq:definition-f} and the second one from the definition of $g_j$ in~\eqref{eq:definition-g}. Since $\hat{\pi}^{-1}(i) \not= \hat{\pi}^{-1}(j)$, we obtain that $(f(\bfpi))^{-1}(i) \not= (f(\bfpi))^{-1}(j)$, as claimed.
        
    Otherwise, if we do not have $\pi_1=\pi_2=\pi_3$, then $(f(\bfpi))^{-1}(i)=A^{i}_{\kappa(\pi_{i+_3 1}) \kappa(\pi_{i+_3 2})}$ and, from the definitions of $f$ and $g_j$ in~\eqref{eq:definition-f} and~\eqref{eq:definition-g}, respectively, we know that
        \[
            (f(\bfpi))^{-1}(j) = g_j(\kappa(\pi_0),\kappa(\pi_1),\kappa(\pi_2)) \not= A^{i}_{\kappa(\pi_{i+_3 1}) \kappa(\pi_{i+_3 2})}.
        \]
        The inequality $(f(\bfpi))^{-1}(i) \not= (f(\bfpi))^{-1}(j)$ follows as well.
    
    Finally, if $i,~ j\in [n]\setminus \{0,1,2\}$, we again distinguish two cases.
    
    If $\pi_1=\pi_2=\pi_3$, then denoting $\hat{\pi}=\pi_1$ we have that, for $\ell\in \{i,j\}$,
        \[
            (f(\bfpi))^{-1}(\ell) = g_\ell(\kappa(\hat{\pi}),\kappa(\hat{\pi}),\kappa(\hat{\pi}))=(\kappa^{-1}(\kappa(\hat{\pi})))^{-1}(\ell) = \hat{\pi}^{-1}(\ell),
        \]
        where the first equality follows from the definition of $f$ in~\eqref{eq:definition-f} and the second one from the definition of $g_j$ in~\eqref{eq:definition-g}. Since $\hat{\pi}^{-1}(i) \not= \hat{\pi}^{-1}(j)$, we obtain that $(f(\bfpi))^{-1}(i) \not= (f(\bfpi))^{-1}(j)$.
        
        Otherwise, if we do not have $\pi_1=\pi_2=\pi_3$,
        we have from \autoref{eq:definition-g} and~\eqref{eq:definition-f} that, for $\ell\in \{i,j\}$,
        \[
            (f(\bfpi))^{-1}(\ell) = \left([n] \setminus \big\{A^{0}_{\kappa(\pi_{1}) \kappa(\pi_{2})}, A^{1}_{\kappa(\pi_{2}) \kappa(\pi_{0})}, A^{2}_{\kappa(\pi_{0}) \kappa(\pi_{1})} \big\}\right)_{\ell-3}.
        \]
        Since the set on the right-hand side contains each value in $[n]$ at most once and $i\not= j$, we conclude that $(f(\bfpi))^{-1}(i) \not= (f(\bfpi))^{-1}(j)$.

    \end{proof}

    The fact that $f$ is impartial follows easily from its definition; we check it by fixing $\bfpi\in \calP^n_n$ and $i\in [n]$ arbitrarily. 
    If $i\in \{0,1,2\}$, we have from the definition of $f$ in~\eqref{eq:definition-f} that $(f(\bfpi))^{-1}(i)$ only depends on $\pi_{i+_3 1}$ and $\pi_{i+_3 2}$. Since $i\not \in \{i+_3 1, i+_3 2\}$, we conclude that $(f(\bfpi))^{-1}(i)=(f(\tilde{\pi},\bfpi_{-i}))^{-1}(i)$ for every $\tilde{\pi}\in \calP_n$.
    If $i\in [n]\setminus \{0,1,2\}$, we have from the definition of $f$ in~\eqref{eq:definition-f} that $(f(\bfpi))^{-1}(i)$ only depends on $\pi_0$, $\pi_1$, and $\pi_2$, so the fact that $(f(\bfpi))^{-1}(i) = (f(\tilde{\pi},\bfpi_{-i}))^{-1}(i)$ for every $\tilde{\pi}\in \calP_n$ directly follows.

    We finally show that $f$ satisfies weak unanimity. Let $\hat{\pi}\in \calP_n$ be such that $\pi_i=\hat{\pi}$ for every $i\in [n]$.
    For each $i\in \{0,1,2\}$, we have that
    \begin{equation}
        (f(\bfpi))^{-1}(i) = A^{i}_{\kappa(\hat{\pi})\kappa(\hat{\pi})} = d^{i}_{\kappa(\hat{\pi})} = (\kappa^{-1}(\kappa(\hat{\pi})))^{-1}(i) = \hat{\pi}^{-1}(i).\label{eq:weak-unanimity-1}
    \end{equation}
    Indeed, the first equality follows from the definition of $f$ in~\eqref{eq:definition-f}, the second one from the fact that $(\bfA^0,\bfA^1,\bfA^2)\in \calA(n,n!,\bfd)$ and Condition~\eqref{eq:vertex-coloring-1} in the definition of this set, and the third one from the definition of $\bfd^{i}$ in~\eqref{eq:definition-d}.
    For each $i\in [n]\setminus \{0,1,2\}$, we have that
    \begin{equation}
        (f(\bfpi))^{-1}(i) = g_i(\kappa(\hat{\pi}),\kappa(\hat{\pi}),\kappa(\hat{\pi})) = (\kappa^{-1}(\kappa(\hat{\pi})))^{-1}(i) = \hat{\pi}^{-1}(i),\label{eq:weak-unanimity-2}
    \end{equation}
    where the first equality follows from the definition of $f$ in~\eqref{eq:definition-f} and the second one from the definition of $g_i$ in~\eqref{eq:definition-g}.
    Equations~\eqref{eq:weak-unanimity-1} and~\eqref{eq:weak-unanimity-2} yield $f(\bfpi)=\hat{\pi}$, which shows weak unanimity and completes the proof of the theorem.

\section{Deferred Proofs from \autoref{sec:impossibilities}}

\subsection{Proof of \autoref{thm:imp-ifr-n23}}
\label{app:pf-thm-imp-ifr-n23}

    We first consider $n=2$. Suppose that $f$ is an impartial $2$-ranking mechanism, and note that $\calP_2 = \{(0\ \ 1), (1\ \ 0)\}$. Let $\hat{\pi} \in \calP_2$ be such that $f((0\ \ 1), (0\ \ 1)) = \hat{\pi}$. Then, due to impartiality we have
    \begin{align*}
        (f((1\ \ 0), (0\ \ 1)))^{-1}(0) & = (f((0\ \ 1), (0\ \ 1)))^{-1}(0) = \hat{\pi}^{-1}(0),\\
        (f((0\ \ 1), (1\ \ 0)))^{-1}(1) & = (f((0\ \ 1), (0\ \ 1)))^{-1}(1) = \hat{\pi}^{-1}(1),\\
        (f((1\ \ 0), (1\ \ 0)))^{-1}(0) & = (f((0\ \ 1), (1\ \ 0)))^{-1}(0) = \hat{\pi}^{-1}(0),\\
    \end{align*}
    and therefore
    \[
        f((1\ \ 0), (0\ \ 1)) = f((0\ \ 1), (1\ \ 0)) = f((1\ \ 0), (1\ \ 0)) =\hat{\pi}.
    \]
    We conclude that $f$ has a single outcome $\hat{\pi}$, so it does not satisfy individual full rank.
    
    For the case $n=3$, we let $f$ be an impartial $3$-ranking mechanism and make use of the following claim, stating that $f$ cannot output two rankings that are cyclic shifts of each other.

    \begin{claim}
    \label{claim:rotations-n3}
        Let $\bfpi\in \calP_3^3$ be such that $f(\bfpi) = (i_0\ \ i_1\ \ i_2)$. Then, for every $\tilde{\bfpi}\in \calP_3^3$ we have that $f(\tilde{\bfpi}) \not= (i_2 \ \ i_0\ \ i_1)$.
    \end{claim}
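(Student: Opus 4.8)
The plan is to recast impartiality as a collection of \emph{position functions} and then to play the profile $\bfpi$ witnessing $f(\bfpi)=(i_0\ i_1\ i_2)$ against a hypothetical profile $\tilde{\bfpi}$ with $f(\tilde{\bfpi})=(i_2\ i_0\ i_1)$, by forming all profiles that mix their components. Since impartiality is invariant under renaming the agents, I would first assume without loss of generality that $(i_0,i_1,i_2)=(0,1,2)$, so that $f(\bfpi)=(0\ 1\ 2)$ and the forbidden output is $(2\ 0\ 1)$. Impartiality means that for each $i$ the value $(f(\bfpi))^{-1}(i)$ depends only on $\bfpi_{-i}$; I write it as $\pos_i(\bfpi_{-i})$, a function of the rankings of the two agents other than $i$. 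The two profiles then pin six of these values: from $\bfpi$ we get $\pos_0(\pi_1,\pi_2)=0$, $\pos_1(\pi_0,\pi_2)=1$, $\pos_2(\pi_0,\pi_1)=2$, and from $\tilde{\bfpi}$ we get $\pos_0(\tilde{\pi}_1,\tilde{\pi}_2)=1$, $\pos_1(\tilde{\pi}_0,\tilde{\pi}_2)=2$, $\pos_2(\tilde{\pi}_0,\tilde{\pi}_1)=0$.

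Next I would introduce the $2^3$ \emph{hybrid profiles} $\bfpi^{\bfb}$ for $\bfb\in\{0,1\}^3$, whose $i$th component is $\pi_i$ if $b_i=0$ and $\tilde{\pi}_i$ if $b_i=1$. The key observation is that the position of agent $i$ in $\bfpi^{\bfb}$ is $\pos_i$ evaluated at the two components indexed by $\{0,1,2\}\setminus\{i\}$, hence depends only on the two bits $b_j$, $j\neq i$; in particular two hybrids that agree on those bits assign agent $i$ the same position. The pinned values thus fix agent $i$'s position whenever the other two bits are $(0,0)$ or $(1,1)$, while the mixed patterns $(0,1)$ and $(1,0)$ introduce a handful of unknowns \emph{shared} across hybrids. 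For each hybrid I would then impose that $f$ is well defined, i.e.\ that the three assigned positions form a permutation of $\{0,1,2\}$.

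The contradiction is then extracted by propagating these distinctness constraints through a short chain. In $(\pi_0,\pi_1,\tilde{\pi}_2)$ the pinned value $\pos_2(\pi_0,\pi_1)=2$ forces $\{\pos_0(\pi_1,\tilde{\pi}_2),\pos_1(\pi_0,\tilde{\pi}_2)\}=\{0,1\}$, while in $(\pi_0,\tilde{\pi}_1,\tilde{\pi}_2)$ the pinned value $\pos_0(\tilde{\pi}_1,\tilde{\pi}_2)=1$ forces $\{\pos_1(\pi_0,\tilde{\pi}_2),\pos_2(\pi_0,\tilde{\pi}_1)\}=\{0,2\}$. Comparing the two gives $\pos_1(\pi_0,\tilde{\pi}_2)=0$ and hence $\pos_0(\pi_1,\tilde{\pi}_2)=1$. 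Feeding this into $(\tilde{\pi}_0,\pi_1,\tilde{\pi}_2)$, where $\pos_1(\tilde{\pi}_0,\tilde{\pi}_2)=2$ is pinned, forces $\pos_2(\tilde{\pi}_0,\pi_1)=0$. But $\pos_2(\tilde{\pi}_0,\pi_1)$ depends only on the first two components, so in $(\tilde{\pi}_0,\pi_1,\pi_2)$ agent $2$ also sits at position $0$, whereas the pinned value $\pos_0(\pi_1,\pi_2)=0$ already places agent $0$ there. Two agents in one position contradicts that $f$ outputs a genuine ranking.

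I expect the main obstacle to be conceptual rather than computational. The most natural attempts fail to produce a conflict: changing $\bfpi$ into $\tilde{\bfpi}$ one agent at a time, or inspecting only the entries that $\bfpi$ and $\tilde{\bfpi}$ pin \emph{directly}, never forces two agents into the same position, because the cyclic shift is locally consistent with impartiality and each single pinned coordinate admits a valid completion. The contradiction surfaces only once the shared unknowns of several hybrids are tied together by the permutation constraint, and this is exactly what the chain above exploits. Pinning down the right short chain, and relying on $n=3$ so that there are only three positions and the distinctness requirement is this rigid, is the crux of the argument.
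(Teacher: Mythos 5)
Your proof is correct, and it rests on the same basic mechanism as the paper's: both arguments evaluate $f$ on profiles that mix components of $\bfpi$ and $\tilde{\bfpi}$, use impartiality to transfer a known position across two profiles differing in a single agent's ranking, and terminate when two agents are forced into one position. The execution, however, is genuinely different. The paper starts from the hybrid $(\tilde{\pi}_{i_0},\pi_{i_1},\pi_{i_2})$, where impartiality pins $i_0$ at position $0$, and then splits into two cases according to which of the two remaining outputs occurs, closing each case with one or two further impartiality steps (this is what its \autoref{fig:imp-ifr-n3} depicts). You avoid the case analysis entirely: by intersecting the permutation constraints of the two hybrids $(\pi_0,\pi_1,\tilde{\pi}_2)$ and $(\pi_0,\tilde{\pi}_1,\tilde{\pi}_2)$, which share the unknown $\pos_1(\pi_0,\tilde{\pi}_2)$, you pin that value to $0$ uniquely ($\{0,1\}\cap\{0,2\}=\{0\}$), and two further propagation steps yield the clash --- notably in the very profile $(\tilde{\pi}_0,\pi_1,\pi_2)$ that the paper uses as its branching point, where your derived $\pos_2(\tilde{\pi}_0,\pi_1)=0$ collides with the pinned $\pos_0(\pi_1,\pi_2)=0$. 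I verified each pinned value, each permutation constraint, and each propagation step; all are sound, and the initial relabeling to $(i_0,i_1,i_2)=(0,1,2)$ is harmless since the argument never uses the numerical identity of agents or positions. What your version buys is a linear, branch-free deduction in which partial information (single positions rather than whole output rankings) suffices at every step; what the paper's version buys is that every profile it touches has a fully determined output, which makes the two short conflicting chains easy to draw and check by hand. Both are equally elementary and equally rigorous.
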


    \begin{proof}
        Suppose for the sake of contradiction that $\bfpi, \tilde{\bfpi} \in \calP^3_3$ are such that $f(\bfpi) = (i_0\ \ i_1\ \ i_2)$ and $f(\tilde{\bfpi}) = (i_2 \ \ i_0\ \ i_1)$. Due to impartiality, we know that $(f(\tilde{\pi}_{i_0},\bfpi_{-i_0}))^{-1}(i_0) = (f(\bfpi))^{-1}(i_0) = 0$, thus we have either $f(\tilde{\pi}_{i_0},\bfpi_{-i_0}) = (i_0\ \ i_1\ \ i_2)$ or $f(\tilde{\pi}_{i_0},\bfpi_{-i_0}) = (i_0\ \ i_2\ \ i_1)$. We will show that a contradiction is reached in either case.

        If $f(\tilde{\pi}_{i_0},\bfpi_{-i_0}) = (i_0\ \ i_1\ \ i_2)$, we have by impartiality that 
        \[
            (f(\pi_{i_1},\tilde{\bfpi}_{-i_1}))^{-1}(i_2) = (f(\tilde{\pi}_{i_0},\bfpi_{-i_0}))^{-1}(i_2) = 2.
        \]
        However, impartiality also implies $(f(\pi_{i_1},\tilde{\bfpi}_{-i_1}))^{-1}(i_1) = (f(\tilde{\bfpi}))^{-1}(i_1) = 2$, a contradiction. This is illustrated in Part (a) of \autoref{fig:imp-ifr-n3}.
        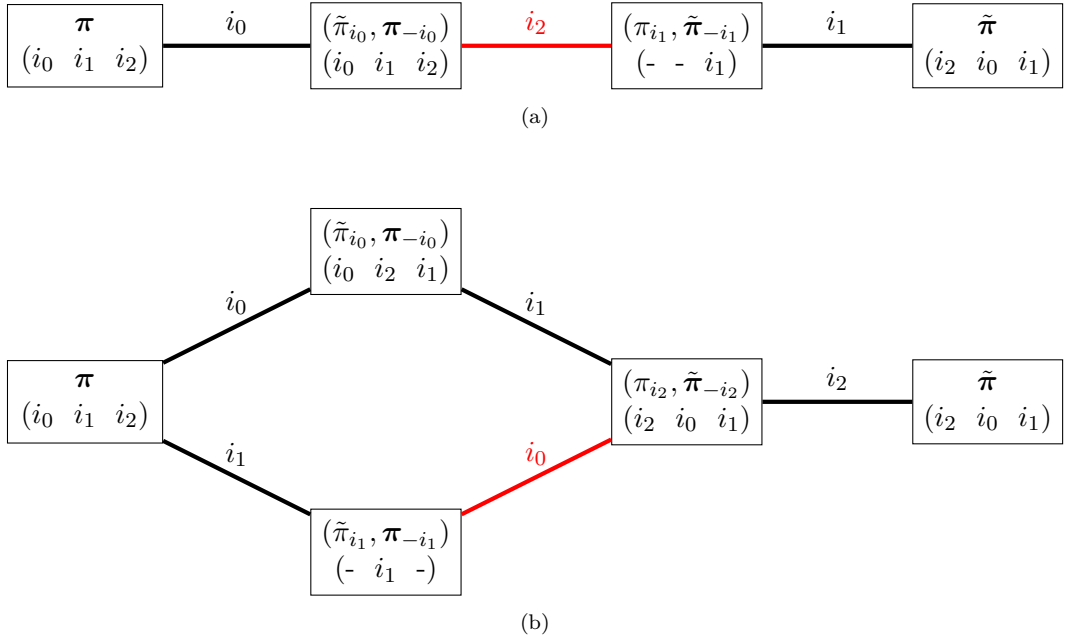
\begin{figure}[t]
        \centering
        \begin{subfigure}[c]{\textwidth}
        \centering
        \begin{tikzpicture}
        \draw (0,0) node[x=-2, rectangle,draw=black, align=center, inner sep=5pt](1){$\bfpi$\\ $(i_0\ \ i_1\ \ i_2)$};
        \draw (4,0) node[x=0, rectangle,draw=black, align=center, inner sep=4pt](2){$(\tilde{\pi}_{i_0},\bfpi_{-i_0})$\\ $(i_0\ \ i_1\ \ i_2)$};
        \draw (8,0) node[x=0, rectangle,draw=black, align=center, inner sep=4pt](3){$(\pi_{i_1},\tilde{\bfpi}_{-i_1})$\\ $(\text{-}\ \ \text{-}\ \ i_1)$};
        \draw (12,0) node[x=0, rectangle,draw=black, align=center, inner sep=4pt](4){$\tilde{\bfpi}$\\ $(i_2\ \ i_0\ \ i_1)$};
        \draw[-,draw,ultra thick] (1) edge node [midway,above] {$i_0$} (2);
        \draw[-,red,ultra thick] (2) edge node [midway,above] {$i_2$} (3);
        \draw[-,draw,ultra thick] (3) edge node [midway,above] {$i_1$} (4);
        \end{tikzpicture}
        \caption{}
        \end{subfigure}

        \vspace{1cm}
        \begin{subfigure}[c]{\textwidth}
        \centering
        \begin{tikzpicture}
        \draw (0,0) node[x=-2, rectangle,draw=black, align=center, inner sep=5pt](1){$\bfpi$\\ $(i_0\ \ i_1\ \ i_2)$};
        \draw (4,2) node[x=0, rectangle,draw=black, align=center, inner sep=4pt](2){$(\tilde{\pi}_{i_0},\bfpi_{-i_0})$\\ $(i_0\ \ i_2\ \ i_1)$};
        \draw (4,-2) node[x=0, rectangle,draw=black, align=center, inner sep=4pt](2a){$(\tilde{\pi}_{i_1},\bfpi_{-i_1})$\\ $(\text{-}\ \ i_1\ \ \text{-})$};
        \draw (8,0) node[x=0, rectangle,draw=black, align=center, inner sep=4pt](3){$(\pi_{i_2},\tilde{\bfpi}_{-i_2})$\\ $(i_2\ \ i_0\ \ i_1)$};
        \draw (12,0) node[x=0, rectangle,draw=black, align=center, inner sep=4pt](4){$\tilde{\bfpi}$\\ $(i_2\ \ i_0\ \ i_1)$};
        \draw[-,draw,ultra thick] (1) edge node [midway,above] {$i_0$} (2);
        \draw[-,draw,ultra thick] (1) edge node [midway,above] {$i_1$} (2a);
        \draw[-,draw,ultra thick] (2) edge node [midway,above] {$i_1$} (3);
        \draw[-,red,ultra thick] (2a) edge node [midway,above] {$i_0$} (3);
        \draw[-,draw,ultra thick] (3) edge node [midway,above] {$i_2$} (4);
        \end{tikzpicture}
        \caption{}
        \end{subfigure}
        \caption{Illustration of the the proof of \autoref{claim:rotations-n3}. Each graph contains ranking profiles with the corresponding output of the mechanism $f$ as vertices and impartiality relations as edges, so that profiles $\bfpi^1$ and $\bfpi^2$ are connected by an edge labeled with $i\in [3]$ if one profile can be obtained from the other through a change on the ranking cast by agent $i$. Outputs follow either from the assumptions in the proof or from impartiality; dashes represent that any unassigned agent may be assigned to the corresponding position. In both cases we suppose that $f$ is a $3$-ranking mechanism satisfying impartiality and individual full rank and that $\bfpi,~ \tilde{\bfpi}\in \calP^3_3$ are profiles such that $f(\bfpi) = (i_0\ \ i_1\ \ i_2)$ and $f(\tilde{\bfpi}) = (i_2 \ \ i_0\ \ i_1)$. Red edges show conflicting profiles, in the sense that the agent able to switch from one profile to another through this edge cannot be in the same position in both profiles. (a) shows the contradiction reached when we further assume $f(\tilde{\pi}_{i_0},\bfpi_{-i_0}) = (i_0\ \ i_1\ \ i_2)$; (b) shows the contradiction reached when we further assume $f(\tilde{\pi}_{i_0},\bfpi_{-i_0}) = (i_0\ \ i_2\ \ i_1)$.} 
        \label{fig:imp-ifr-n3}
        \end{figure}
        
        On the other hand, if $f(\tilde{\pi}_{i_0},\bfpi_{-i_0}) = (i_0\ \ i_2\ \ i_1)$, impartiality implies both
        \[
            (f(\pi_{i_2},\tilde{\bfpi}_{-i_2}))^{-1}(i_1) = (f(\tilde{\pi}_{i_0},\bfpi_{-i_0}))^{-1}(i_1) = 2 \quad \text{and} \quad (f(\pi_{i_2},\tilde{\bfpi}_{-i_2}))^{-1}(i_2) = (f(\tilde{\bfpi}))^{-1}(i_2) = 0.
        \]
        These two equalities yield $f(\pi_{i_2},\tilde{\bfpi}_{-i_2}) = (i_2\ \ i_0\ \ i_1)$. Using impartiality once again, we obtain that $(f(\tilde{\pi}_{i_1},\bfpi_{-i_1}))^{-1}(i_0) = (f(\pi_{i_2},\tilde{\bfpi}_{-i_2}))^{-1}(i_0) = 1$. However, impartiality also implies $(f(\tilde{\pi}_{i_1},\bfpi_{-i_1}))^{-1}(i_1) = (f(\bfpi))^{-1}(i_1) = 1$, a contradiction. This is illustrated in Part (b) of \autoref{fig:imp-ifr-n3}.
    \end{proof}

    \autoref{claim:rotations-n3} implies that $f$ has at most two different outcomes: one among $\{(0\ \ 1\ \ 2),\ (2\ \ 0\ \ 1),\ (1\ \ 2\ \ 0)\}\}$ and one among $\{(0\ \ 2\ \ 1),\ (1\ \ 0\ \ 2),\ (2\ \ 1\ \ 0)\}\}$. Then, for every $i\in \{0,1,2\}$ it holds
    \[
        \big|\left\{(f(\bfpi))^{-1}(i)\colon \bfpi \in \calP^3_3\right\}\big| \leq 2,
    \]
    contradicting the fact that for every $k\in \{0,1,2\}$ there exists $\bfpi\in \calP^3_3$ such that $(f(\bfpi))(k) = i$.

\subsection{Proof of \autoref{thm:imp-unanimity}}
\label{app:pf-thm-imp-unanimity}

    For $n\in \{2,3\}$, the theorem follows directly from \autoref{lem:relation-axioms} and \autoref{thm:imp-ifr-n23}. Let now $n\in \NN$ with $n\geq 4$ and suppose that $f$ is an impartial $n$-ranking mechanism satisfying unanimity. We define profiles $\bfpi^0,\ldots, \bfpi^{n-1}\in \calP_n^n$ as follows:
    \begin{equation}
        \pi^{\ell}_i = \begin{cases}
            (1\ \ 2\ \ \cdots\ \  n-1\ \ 0) & \text{ if } i\in [n-\ell],\\
            (0\ \ 1\ \ \cdots\ \ n-2\ \ n-1) & \text{ if } i\in \{n-\ell,\ldots,n-1\},
        \end{cases} \quad \text{ for every } i, \ell \in [n].\label{eq:def-profiles-R-ell}
    \end{equation}
    Note that, in particular, $\pi^0_i = (1\ \ 2\ \ \cdots\ \  n-1\ \ 0)$ for every $i\in [n]$ and $\pi^{n-1}_i = (0\ \ 1\ \ \cdots\ \ n-2\ \ n-1)$ for every $i\in \{1,2,\ldots,n-1\}$. The following claim, which we prove by induction over $\ell\in \{1,2,\ldots,n-1\}$, allows us to reach a contradiction.
    
    \begin{claim}
    \label{claim:un}
        For every $\ell\in \{1,2,\ldots,n-1\}$ and every $i\in \{1,\ldots,n-\ell\}$, it holds that $(f(\bfpi^{\ell}))^{-1}(i) = i-1$.
    \end{claim}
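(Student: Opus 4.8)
The plan is to establish Claim~\ref{claim:un} by induction on~$\ell$, with unanimity applied to~$\bfpi^0$ serving as the base. Concretely, since $\pi^0_i=(1\ \ 2\ \ \cdots\ \ n-1\ \ 0)$ for every $i\in[n]$, unanimity forces $f(\bfpi^0)=(1\ \ 2\ \ \cdots\ \ n-1\ \ 0)$ and hence $(f(\bfpi^0))^{-1}(i)=i-1$ for every $i\in\{1,\ldots,n-1\}$; this is the conclusion of the claim ``at $\ell=0$'' and supplies the hypothesis needed for the step to $\ell=1$. The structural fact driving the induction is that, by~\eqref{eq:def-profiles-R-ell}, the profiles $\bfpi^{\ell-1}$ and $\bfpi^{\ell}$ coincide on every agent except agent~$n-\ell$, who alone switches from the shift ranking $(1\ \ 2\ \ \cdots\ \ n-1\ \ 0)$ to the identity ranking $(0\ \ 1\ \ \cdots\ \ n-1)$. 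Since both of these rankings order the agents $1,2,\ldots,n-1$ in exactly this increasing way and differ only in the placement of agent~$0$, impartiality and unanimity together will suffice to locate agents $1,\ldots,n-\ell$.

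For the inductive step I assume the claim at~$\ell-1$. First I would pin down the deviating agent: because $\bfpi^{\ell}$ and $\bfpi^{\ell-1}$ differ only in the ranking of agent~$n-\ell$, impartiality gives $(f(\bfpi^{\ell}))^{-1}(n-\ell)=(f(\bfpi^{\ell-1}))^{-1}(n-\ell)$, and the induction hypothesis (or, at $\ell=1$, the base case), applied to $i=n-\ell$, evaluates the right-hand side to $n-\ell-1$. Thus agent~$n-\ell$ occupies position~$n-\ell-1$ in $f(\bfpi^{\ell})$. Next I would invoke unanimity within $\bfpi^{\ell}$ itself: every agent casts either the shift or the identity ranking, and both rank $j_0$ ahead of $j_1$ whenever $1\le j_0<j_1\le n-1$, so unanimity yields $(f(\bfpi^{\ell}))^{-1}(1)<(f(\bfpi^{\ell}))^{-1}(2)<\cdots<(f(\bfpi^{\ell}))^{-1}(n-1)$.

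Combining the two facts finishes the step. Each agent $i\in\{1,\ldots,n-\ell-1\}$ satisfies $i<n-\ell$, so by the strict ordering its position is strictly smaller than that of agent~$n-\ell$, namely strictly smaller than $n-\ell-1$; hence all $n-\ell-1$ of these agents lie in the position set $\{0,1,\ldots,n-\ell-2\}$, which has exactly $n-\ell-1$ elements. Since positions in a permutation are distinct, these agents fill that set bijectively, and being in increasing order they satisfy $(f(\bfpi^{\ell}))^{-1}(i)=i-1$; together with the value already established for $i=n-\ell$ this gives the claim for all $i\in\{1,\ldots,n-\ell\}$. I expect the main subtlety to be precisely this counting argument, namely ensuring that agent~$0$ (the only agent whose relative position is not constrained by unanimity) cannot occupy one of the low positions $0,\ldots,n-\ell-2$; the pigeonhole count above handles this automatically, since placing agent~$0$ there would force $n-\ell$ agents into $n-\ell-1$ positions. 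The remaining work is bookkeeping: verifying $n-\ell\ge 1$ throughout (so that the deviating agent and all index ranges are well defined, using $\ell\le n-1$) and that the base case correctly feeds the step at $\ell=1$.
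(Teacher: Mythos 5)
Your proposal is correct and follows essentially the same route as the paper's proof: induction on $\ell$, using impartiality to transfer the position of the single deviating agent $n-\ell$ from the previous profile, unanimity to order agents $1,\ldots,n-1$ within $\bfpi^{\ell}$ (the paper's equation~\eqref{eq:unanimity-ell}), and a pigeonhole count to pin down all positions. The only differences are cosmetic: you anchor the induction at $\ell=0$ via weak unanimity where the paper folds that into its base case $\ell=1$, and you spell out the counting step that the paper leaves implicit.
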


    \begin{proof}
    We start by observing that, for every $\ell\in [n]$ and $i\in [n]$, we have that
    \[
        (\pi^\ell_i)^{-1}(1) < (\pi^\ell_i)^{-1}(2) < \cdots < (\pi^{\ell}_i)^{-1}(n-1),
    \]
    so unanimity of $f$ implies 
    \begin{equation}
        (f(\bfpi^{\ell}))^{-1}(1) < (f(\bfpi^{\ell}))^{-1}(2) < \cdots < (f(\bfpi^{\ell}))^{-1}(n-1) \quad \text{for every } \ell\in [n].\label{eq:unanimity-ell}
    \end{equation}
    
    We prove the claim by induction over $\ell$. 
    For the base case $\ell=1$, we observe that
    \[
        (f(\bfpi^1))^{-1}(n-1) = (f(\pi^1_{n-1},\bfpi^0_{-(n-1)}))^{-1}(n-1) = (f(\bfpi^0))^{-1}(n-1) = n-2,
    \]
    where the first equality follows from the definition of the profiles in~\eqref{eq:def-profiles-R-ell}, the second one from impartiality, and the last one from the fact that $f(\bfpi^0)=(1\ \ 2\ \ \cdots\ \  n-1\ \ 0)$ by unanimity (recall that unanimity implies weak unanimity by \autoref{lem:relation-axioms}).
    Together with \eqref{eq:unanimity-ell}, this implies that $(f(\bfpi^1))^{-1}(i) = i-1$ for every $i\in \{1,2,\ldots,n-1\}$.

    We now fix $\ell \in \{1,2,\ldots,n-2\}$ and assume that for every $i\in \{1,\ldots,n-\ell\}$ it holds that $(f(\bfpi^{\ell}))^{-1}(i) = i-1$. Then, we obtain
    \begin{align}
        (f(\bfpi^{\ell+1}))^{-1}(n-(\ell+1)) & = \big(f\big(\pi^{\ell+1}_{n-(\ell+1)}, \bfpi^{\ell}_{-(n-(\ell+1))}\big)\big)^{-1}(n-(\ell+1)) \nonumber \\
        & = (f(\bfpi^{\ell}))^{-1}(n-(\ell+1)) \nonumber \\
        & = n-(\ell+1)-1.\label{eq:pos-imp-agent-un}
    \end{align}
    Indeed, the first equality follows from the definition of the profiles in~\eqref{eq:def-profiles-R-ell}, the second one from impartiality, and the last one from the inductive hypothesis. %
    Together with \eqref{eq:unanimity-ell}, this implies that $(f(\bfpi^{\ell+1}))^{-1}(i) = i-1$ for every $i\in \{1,2,\ldots,n-(\ell+1)\}$, concluding the inductive step and thus the proof of the claim.
    \end{proof}

    \autoref{claim:un} implies, in particular, that $(f(\bfpi^{n-1}))^{-1}(1) = 0$, and thus $(f(\bfpi^{n-1}))^{-1}(0) \not= 0$. Defining $\tilde{\pi}=(0\ \ 1\ \ \cdots\ \ n-2\ \ n-1)$, we have by impartiality that
    \[
        (f(\tilde{\pi}, \bfpi^{n-1}_{-0}))^{-1}(0) = (f(\bfpi^{n-1}))^{-1}(0) \not = 0.
    \]
    However, unanimity implies that $f(\tilde{\pi}, \bfpi^{n-1}_{-0}) = (0\ \ 1\ \ \cdots\ \ n-2\ \ n-1)$, so in particular $(f(\tilde{\pi}, \bfpi^{n-1}_{-0}))^{-1}(0)=0$, a contradiction. We conclude that impartiality and unanimity are not compatible. This proof is illustrated in \autoref{fig:imp-unanimity} for the case of $n=4$.

\begin{figure}[t]
  \centering
  \begin{tikzpicture}
        \Text[x=-7,y=3.4]{Agent}
        \Text[x=-7,y=2.8]{$0$}
        \Text[x=-7,y=2.2]{$1$}
        \Text[x=-7,y=1.6]{$2$}
        \Text[x=-7,y=1]{$3$}
        
        \Text[x=-5,y=2.8]{$(1\ \ 2\ \ 3\ \ 0)$}
        \Text[x=-5,y=2.2]{$(1\ \ 2\ \ 3\ \ 0)$}
        \Text[x=-5,y=1.6]{$(1\ \ 2\ \ 3\ \ 0)$}
        \Text[x=-5,y=1]{$(1\ \ 2\ \ 3\ \ 0)$}
        \Text[x=-5]{$(1\ \ 2\ \ 3\ \ 0)$}

        \Text[x=-2,y=2.8]{$(1\ \ 2\ \ 3\ \ 0)$}
        \Text[x=-2,y=2.2]{$(1\ \ 2\ \ 3\ \ 0)$}
        \Text[x=-2,y=1.6]{$(1\ \ 2\ \ 3\ \ 0)$}
        \Text[x=-2,y=1]{$(0\ \ 1\ \ 2\ \ 3)$}
        \Text[x=-2]{$(1\ \ 2\ \ 3\ \ \text{-})$}

        \Text[x=1,y=2.8]{$(1\ \ 2\ \ 3\ \ 0)$}
        \Text[x=1,y=2.2]{$(1\ \ 2\ \ 3\ \ 0)$}
        \Text[x=1,y=1.6]{$(0\ \ 1\ \ 2\ \ 3)$}
        \Text[x=1,y=1]{$(0\ \ 1\ \ 2\ \ 3)$}
        \Text[x=1]{$(1\ \ 2\ \ \text{-}\ \ \text{-})$}

        \Text[x=4,y=2.8]{$(1\ \ 2\ \ 3\ \ 0)$}
        \Text[x=4,y=2.2]{$(0\ \ 1\ \ 2\ \ 3)$}
        \Text[x=4,y=1.6]{$(0\ \ 1\ \ 2\ \ 3)$}
        \Text[x=4,y=1]{$(0\ \ 1\ \ 2\ \ 3)$}
        \Text[x=4]{$(1\ \ \text{-}\ \ \text{-}\ \ \text{-})$}

        \Text[x=7,y=2.8]{$(0\ \ 1\ \ 2\ \ 3)$}
        \Text[x=7,y=2.2]{$(0\ \ 1\ \ 2\ \ 3)$}
        \Text[x=7,y=1.6]{$(0\ \ 1\ \ 2\ \ 3)$}
        \Text[x=7,y=1]{$(0\ \ 1\ \ 2\ \ 3)$}
  \end{tikzpicture}
  \caption{Illustration of the proof of \autoref{thm:imp-unanimity} for $n=4$. For each ranking profile, represented by enumerating the individual rankings cast by each agent, the constraints on the output ranking that impartiality (with respect to the previous profile) and unanimity impose are stated below the profile; dashes represent that any unassigned agent may be assigned to the corresponding position. For the right-most profile, impartiality implies that agent $0$ cannot be in position $0$, while unanimity implies that it must.} 
  \label{fig:imp-unanimity}
\end{figure}

\bibliographystyle{abbrvnat}
\bibliography{ranking-CFK-2023}
	
\end{document}